\newif\ifFullVersion
\newtheorem{definition}{Definition}
\newtheorem{theorem}{Theorem}
\newtheorem{corollary}{Corollary}
\newtheorem{proposition}{Proposition}
\newtheorem{lemma}{Lemma}
\newtheorem{remark}{Remark}
\newtheorem{assumption}{Assumption}
\newcommand{\E}{\mathds{E}}
\newcommand{\myVec}[1]{{\boldsymbol{#1}}}
\newcommand{\mySet}[1]{\mathcal{#1}}
\newcommand{\Cmat}{\mathsf{C}}
\newcommand{\Chmat}{\mathsf{\hat{C}}}
\newcommand{\Cjmat}{\tilde{\Cmat}}
\newcommand{\myFtn}[1]{
	\ifFullVersion
	\footnote{#1}
	\fi	
}
\newcommand{\Xin}{{\myVec{X}}}
\newcommand{\Wi}{{\myVec{W}}}
\newcommand{\opt}{^{\rm opt}}
\newcommand{\RnD}{R_n(D)}
\newcommand{\zkn}[2]{\tilde{Z}_{#1,#2}}
\newcommand{\zkeps}[1]{Z_{#1}}
\newcommand{\zk}[1]{Z_{#1}}
\newcommand{\zetan}[1]{\zeta_{#1}}
\newcommand{\Fkeps}[1]{F_{#1}}
\newcommand{\Fkn}[2]{\tilde{F}_{#1,#2}}
\newcommand{\mN}{\mathcal{N}}
\newcommand{\mZ}{\mathcal{Z}}
\newcommand{\mK}{\mathcal{K}}
\newcommand{\mR}{\mathcal{R}}
\newcommand{\mS}{\mathcal{S}}
\newcommand{\mT}{\mathcal{T}}
\newcommand{\talpha}{\tilde{\alpha}}
\newcommand{\tbeta}{\tilde{\beta}}
\newcommand{\liminfk}{\mathop{\lim\inf}\limits_{k\rightarrow\infty}}
\newcommand{\Det}{{ \rm Det}}
\newcommand{\myEps}{\epsilon}
\newcommand{\eps}{\epsilon}
\newcommand{\myEpsn}{\epsilon_n}
\newcommand{\Wc}{W_{\rm c}}
\newcommand{\Weps}{W_{\myEps}}
\newcommand{\Seps}{S_{\myEps}}
\newcommand{\Sn}{S_{n}}
\newcommand{\VecSn}{\myVec{S}_{n}}
\newcommand{\VecSeps}{\myVec{S}_{\myEps}}
\newcommand{\Sheps}{\hat{S}_{\epsilon}}
\newcommand{\Shepsk}{\hat{\myVec{S}}_{\epsilon}^{(k)}}
\newcommand{\Shn}{\hat{S}_{n}}
\newcommand{\Shvec}{\hat{\myVec{S}}}
\newcommand{\shvec}{\hat{s}}
\newcommand{\Shnl}{\hat{S}_{n_l}}
\newcommand{\Snl}{S_{n_l}}
\newcommand{\Sc}{S_{\rm c}}
\newcommand{\Snk}{\VecSn^{(k)}}
\newcommand{\Wnk}{\myVec{W}_n^{(k)}}
\newcommand{\Shnk}{\hat{\myVec{S}}_{n}^{(k)}}
\newcommand{\Sepsk}{\VecSeps^{(k)}}
\newcommand{\Wepsk}{\myVec{W}_{\epsilon}^{(k)}}
\newcommand{\Charac}{\Phi}
\newcommand{\Wn}{W_{n}}
\newcommand{\Cwc}{\sigma^2_{\Wc}}
\newcommand{\rseps}{r_{\Seps}}
\newcommand{\rsn}{r_{\Sn}}
\newcommand{\Csc}{\sigma^2_{\Sc}}
\newcommand{\Sigseps}{\sigma^2_{\Seps}}
\newcommand{\Sigsn}{\sigma^2_{\Sn}}
\newcommand{\Sigshn}{\sigma^2_{\Shn}}
\newcommand{\Sigsheps}{\sigma^2_{\Sheps}}
\newcommand{\Tc}{T_{\rm ps}}
\newcommand{\Tsamp}{T_{\rm s}}
\newcommand{\Tsym}{T_{\rm sym}}
\newcommand{\Td}{p}
\newcommand{\cdf}[1]{F_{#1}}
\newcommand{\pdf}[1]{p_{#1}}
\newcommand{\fkn}{f_{k,n}}
\newcommand{\fkeps}{f_{k,\myEps}}
\newcommand{\Xnvec}[1][n]{{\myVec{X}}_{#1}}
\newcommand{\DC}{t_{\rm dc}}
\newcommand{\Trise}{t_{\rm rf}}
\newcommand{\ConvDist}[1]{\mathop{\longrightarrow}\limits^{(dist.)}_{#1}}
\newcommand{\Conv}[1]{\mathop{\longrightarrow}\limits_{#1}}
\newcommand{\plimsup}{{\rm p-}\mathop{\lim \sup}\limits_{k \rightarrow \infty}}
\newcommand*\mathinhead[2]{\texorpdfstring{${#1}$}{#2}}
\long\def\symbolfootnote[#1]#2{\begingroup\def\thefootnote{\fnsymbol{footnote}}\footnote[#1]{#2}\endgroup}
\acrodef{cdf}[CDF]{cumulative distribution function}
\acrodef{pdf}[PDF]{probability density function}
\acrodef{rv}[RV]{random variable}
\acrodef{dt}[DT]{discrete-time} 
\acrodef{ct}[CT]{continuous-time} 
\acrodef{sss}[SSS]{strict-sense stationary}
\acrodef{sscs}[SSCS]{strict-sense cyclostationary}
\acrodef{wss}[WSS]{wide-sense stationary}
\acrodef{wscs}[WSCS]{wide-sense cyclostationary}
\acrodef{wsacs}[WSACS]{wide-sense almost cyclostationary}
\acrodef{ofdm}[OFDM]{orthogonal frequency division multiplexing}
\acrodef{tdma}[TDMA]{time division multiple access}
\acrodef{cdma}[CDMA]{code division multiple access}
\acrodef{noma}[NOMA]{non-orthogonal multiple access}
\acrodef{asmcgn}[AS-MCGNC]{asynchronously-sampled memoryless cyclostationary Gaussian noise channel}
\acrodef{cmt}[CMT]{continuous mapping theorem}
\acrodef{cs}[CS]{cyclostationary}
\acrodef{dcd}[DCD]{decimated component decomposition}
\acrodef{iid}[IID]{independent and identically distributed}
\acrodef{rdf}[RDF]{rate-distortion function}
\acrodef{mimo}[MIMO]{multiple input multiple output}
\acrodef{siso}[SISO]{single-input single-output}
\acrodef{awgn}[AWGN]{additive white Gaussian noise}
\acrodef{csi}[CSI]{channel state information}
\acrodef{snr}[SNR]{signal to noise power ratio}
\acrodef{mac}[MAC]{multiple access channel}
\acrodef{bc}[BC]{broadcast channel}
\acrodef{su}[SU]{single-user}
\acrodef{mu}[MU]{multi-user}
\acrodef{plc}[PLC]{power line communications}
\acrodef{jscc}[JSCC]{joint source-channel coding}
\acrodef{acs}[ACS]{almost cyclostationary}
\acrodef{cf}[CF]{compress-and-forward}
\acrodef{psd}[PSD]{power spectral density}
\acrodef{cpsd}[CPSD]{cyclic power spectral density}
\acrodef{tpsd}[TPSD]{time-varying power spectral density}
\acrodef{ptp}[PTP]{point to point}
\acrodef{caf}[CAF]{cyclic autocorrelation function}
\acrodef{osi}[OSI]{open systems interconnection}
\acrodef{lptv}[LPTV]{linear periodically time-varying}
\acrodef{dtgc}[DTGC]{discrete time Gaussian channel}
\acrodef{isi}[ISI]{inter-symbol interference}
\acrodef{ncgc}[NCGC]{N-circular Gaussian channel}
\acrodef{ofdma}[OFDMA]{orthorgonal frequency division multiple access}
\acrodef{dsl}[DSL]{digital subscriber line}
\acrodef{dms}[DMS]{discrete memoryless source}
\acrodef{dmc}[DMC]{discrete memoryless channel}
\acrodef{aep}[AEP]{asymptotic equipartition property}
\acrodef{caf}[CAF]{cyclic autocorrelation function}
\acrodef{ic}[IC]{interference cancellation}
\acrodef{sic}[SIC]{successive interference cancellation}
\acrodef{ia}[IA]{interference alignment}
\acrodef{marc}[MARC]{multiple-access relay channels}
\acrodef{mabrc}[MABRC]{multiple-access broadcast relay channels}
\acrodef{pam}[PAM]{pulse amplitude modulation}
\acrodef{rhs}[RHS]{right hand side}
\acrodef{mse}[MSE]{mean squared error}
\newif\ifcomments
\definecolor{CmtColor}{rgb}{0,0.6,1}
\acrodef{cdf}[CDF]{cumulative distribution function}
\acrodef{pdf}[PDF]{probability density function}
\acrodef{rv}[RV]{random variable}
\acrodef{dt}[DT]{discrete-time} 
\acrodef{ct}[CT]{continuous-time} 
\acrodef{wscs}[WSCS]{wide-sense cyclostationary}
\acrodef{wsacs}[WSACS]{wide-sense almost cyclostationary}
\acrodef{ofdm}[OFDM]{orthogonal frequency division multiplexing}
\acrodef{tdma}[TDMA]{time division multiple access}
\acrodef{cdma}[CDMA]{code division multiple access}
\acrodef{noma}[NOMA]{non-orthogonal multiple access}
\acrodef{asmcgn}[AS-MCGNC]{asynchronously-sampled memoryless cyclostationary Gaussian noise channel}
\acrodef{cmt}[CMT]{continuous mapping theorem}
\title{The Rate Distortion Function of Asynchronously Sampled Memoryless Cyclostationary Gaussian Processes
}
\author{
	\IEEEauthorblockN{\vspace{-0.2cm} Emeka Abakasanga, Nir Shlezinger, Ron Dabora\\
	}

	\thanks{E. Abakasanga and R. Dabora  are with the department of ECE, Ben-Gurion University,  Israel (e-mail:  abakasan@post.bgu.ac.il; ron@ee.bgu.ac.il)
	N. Shlezinger is with the faculty of Math and CS, Weizmann Institute of Science,  Israel (e-mail: nirshlezinger1@gmail.com).  
		This work was supported by the Israel Science Foundation under Grants 1685/16 and 0100101, and by the Israeli Ministry of Economy through the HERON 5G consortium.}
	
	\vspace{-1.0cm}
	
}
\begin{document}
	\maketitle
	\pagestyle{plain}
	\thispagestyle{plain}
	\vspace{-0.2cm}
\begin{abstract}
Man-made communications signals are typically modelled as continuous-time (CT) wide-sense cyclostationary (WSCS)  processes. As modern processing is digital, it operates on sampled versions of the CT  signals. When sampling is applied to a CT WSCS process, the statistics of the resulting discrete-time (DT)  process depends on the relationship between the sampling interval and the period of the statistics of the CT process: When these two parameters have a common integer factor, then the DT process is WSCS. This situation is referred to as synchronous sampling.  When this is not the case, which is referred to as asynchronous sampling, the resulting DT process is wide-sense almost cyclostationary (WSACS). Such acquired CT processes are commonly encoded using a source code to facilitate storage or transmission over multi-hop networks using compress-and-forward relaying.
In this work, we study the fundamental tradeoff of sources codes applied to sampled CT WSCS processes, namely, their rate-distortion function (RDF).  We note that while RDF characterization for the case of synchronous sampling directly follows from classic information-theoretic tools utilizing ergodicity and the law of large numbers, when sampling is asynchronous, the resulting process is not information stable. In such cases, commonly used information-theoretic tools are inapplicable to RDF analysis, which poses a major challenge. Using the information spectrum framework, we show that the RDF for asynchronous sampling in the low distortion regime can be expressed  as the limit superior of a sequence of RDFs in which each element corresponds to the RDF of a synchronously sampled WSCS process (but their limit is not guaranteed to exist). The resulting characterization allows us to introduce novel insights on the relationship between sampling synchronization and RDF. For example, we demonstrate that, differently from stationary processes,  small differences in the sampling rate and the sampling time offset can notably affect the RDF of sampled CT WSCS processes.
\end{abstract}


\section{Introduction}
	\label{sec:Intro}
Man-made signals are typically generated using a repetitive procedure, which takes place at fixed intervals. The resulting signals are thus commonly modeled as \ac{ct} random processes exhibiting periodic statistical properties \cite{gardner1987spectral,giannakis1998cyclostationary,gardner2006cyclostationarity}, which are referred to as {\em \ac{wscs}} processes.  In digital communications, where the transmitted waveforms commonly obey the \ac{wscs} model \cite{gardner2006cyclostationarity}, the received \ac{ct} signal is first sampled to obtain a \ac{dt} received signal. In the event that the sampling interval is commensurate with the period of the statistics of the \ac{ct} \ac{wscs} signal, cyclostationarity is preserved in \ac{dt} \cite[Sec. 3.9]{gardner2006cyclostationarity}. In this work, we refer to this situation as {\em synchronous sampling}. However, it is practically common to encounter scenarios in which the sampling rate at the receiver and symbol rate of the received \ac{ct} \ac{wscs} process are incommensurate, which is referred to as {\em asynchronous sampling}. The resulting sampled process in such cases is a \ac{dt} {\em \ac{wsacs}} stochastic process \cite[Sec. 3.9]{gardner2006cyclostationarity}. 

This research aims at investigating lossy source coding for asynchronously sampled \ac{ct} \ac{wscs} processes. 
In the source coding problem, every sequence of information symbols from the source is mapped into a sequence of code symbols, referred to as codewords, taken from a predefined codebook. 
In {\em lossy} source coding, the source sequence is recovered up to a predefined distortion constraint, within an arbitrary small tolerance of error. The figure-of-merit for lossy source coding is the \ac{rdf} which characterizes the minimum number of bits per symbol required to compress the source sequence such that it can be reconstructed at the  decoder within the specified maximal distortion \cite{berger1998lossy}. For an \ac{iid} random source process, the \ac{rdf} can be expressed as the minimum mutual information between the source variable and the reconstruction variable, such that for the corresponding conditional distribution of the reconstruction symbol given the source symbol, the distortion constraint is satisfied \cite[Ch. 10]{cover2006elements}. The source coding problem has been further studied in multiple different scenarios, including the reconstruction of a single source at multiple destinations \cite{wolf1980source} and the reconstruction of  multiple correlated stationary Gaussian sources at a single destination~\cite{wyner1976rate,oohama1997gaussian,pandya2004lossy}.

For {\em stationary} source processes, ergodicity theory and the \ac{aep} \cite[Ch. 3]{cover2006elements} were applied for characterizing the \ac{rdf} for different scenarios \cite[Ch. 9]{gallager1968information}, \cite[Sec. I]{berger1998lossy}, \cite{harrison2008generalized}. However, as in a broad range of applications, including  digital communication networks, most \ac{ct} signals  are \ac{wscs}, the sampling operation results in a \ac{dt} source signal whose statistics depends on the relationship between the sampling rate and the period of the statistics of the source signal.
When sampling is synchronous, the resulting \ac{dt} source signal is \ac{wscs} \cite[Sec. 3.9]{gardner2006cyclostationarity}. The \ac{rdf} for lossy compression of \ac{dt} \ac{wscs} Gaussian sources with memory was studied in \cite{kipnis2018distortion}. This used the fact that any \ac{wscs} signal can be transformed into a set of stationary subprocess \cite{giannakis1998cyclostationary}; thereby facilitating the application of information-theoretic results obtained for multivariate stationary sources to the derivation of the \ac{rdf}; Nonetheless, in many digital communications scenarios, the sampling rate and the symbol rate of the \ac{ct} \ac{wscs} process are not related in any way, and are possibly incommensurate, resulting in a sampled process which is a \ac{dt} \ac{wsacs} stochastic process \cite[Sec. 3.9]{gardner2006cyclostationarity}. Such situations can occur as a result of the a-priori determined values of the sampling interval and the symbol duration of the \ac{wscs} source signal, as well as due to sampling clock jitters resulting from hardware impairments. A comprehensive review of trends and applications for almost cyclostationary signals can be found in \cite{napolitano2016cyclostationarity}. Despite their apparent frequent occurrences, the \ac{rdf} for lossy compression of \ac{wsacs} sources was not characterized, which is the motivation for the current research. A major challenge associated with characterizing fundamental limits for asynchronously sampled \ac{wscs} processes stems from the fact that the resulting processes are {\em not information stable}, in the sense that its conditional distribution is not ergodic \cite[Page X]{han2003information}, \cite{verdu1994general}, \cite{zeng2006information}. As a result, the standard information-theoretic tools cannot be employed, making the characterization of the \ac{rdf} a very challenging problem. 

Our recent study in \cite{shlezinger2019capacity} on channel coding reveals that for the case of additive \ac{ct} \ac{wscs} Gaussian noise, capacity varies significantly with sampling rates, whether the Nyquist criterion is satisfied or not. In particular, it was observed that the capacity can change dramatically with minor variations in the sampling rate, causing it to switch from synchronous sampling to asynchronous sampling. This is in direct contrast to the results obtained for wide-sense stationary noise for which the capacity remains unchanged for any sampling rate above the Nyquist rate \cite{shannon1998communication}. A natural fundamental question that arises from this result is how the \ac{rdf} of a sampled Gaussian source process varies with the sampling rate. As a motivating example, one may consider \ac{cf} relaying, where the relay samples at a rate which can be incommensurate with the symbol rate of the incoming communications signal. 

In this work, we employ the information spectrum framework \cite{han2003information} in characterizing the \ac{rdf} of asynchronously sampled memoryless Gaussian \ac{wscs} processes, as this framework is applicable to the information-theoretic analysis of  non information-stable processes \cite[Page VII]{han2003information}. We further note that while rate characterizations obtained using information spectrum tools and its associated quantities may be difficult to evaluate \cite[Remark 1.7.3]{han2003information}, here we obtain a numerically computable characterization of the \ac{rdf}. 
In particular, we focus on the \ac{mse} distortion measure in the low distortion regime, namely, source codes for which the average \ac{mse} of the difference between the source and the reproduction process is not larger than the minimal source variance. 
The results of this research lead to accurate modelling of signal compression in current and future digital communications systems. Furthermore, we utilize our characterization of the \ac{rdf}  \ac{rdf} for a sampled \ac{ct} \ac{wscs} Gaussian source with different sampling rates and sampling time offsets. We demonstrate that, differently from stationary signals, when applying a lossy source code a sampled \ac{wscs} process, the achievable rate-distortion tradeoff can be significantly affected by minor variations in the sampling time offset and the sampling rate. Our results thus allow identifying the sampling rate and sampling time offsets which minimize the \ac{rdf} in systems involving asynchronously sampled \ac{wscs} processes. 


The rest of this work is organised as follows: Section \ref{sec:backgrnd_literature} provides a scientific background on cyclostationary processes, and on rate-distortion analysis of \ac{dt} \ac{wscs} Gaussian sources. Section \ref{sec:rsrch_obj} presents the problem formulation and auxiliary results, and Section \ref{subsec:rdf_Async_samp_wscs} details the main result of \ac{rdf} characterization for sampled \ac{wscs} Gaussian process. Numerical examples and discussions are addressed in Section \ref{sec:Simulations}, and Section \ref{conclusion} concludes the paper.

\section{Preliminaries and Background}
\label{sec:backgrnd_literature}
In the following we review the main tools and framework used in this work: In Subsection \ref{notations}, we detail the notations. In Subsection \ref{subsec:cyclosta_prelim} we review the basics of cyclostationary processes and the statistical properties of a \ac{dt} process resulting from sampling a \ac{ct} \ac{wscs} process. In Subsection \ref{subsec:RDF_stat}, we recall some preliminaries in rate-distortion theory, and present the \ac{rdf} for a \ac{dt} \ac{wscs} Gaussian source process. This background creates a premise for the statement of the main result provided in Section \ref{subsec:rdf_Async_samp_wscs} of this paper.

\subsection{Notations}
\label{notations}
 In this paper, random vectors are denoted by boldface uppercase letters, e.g., $\myVec{X}$; boldface lowercase letters denote deterministic column vectors, e.g., $\myVec{x}$. Scalar RVs and deterministic values are denoted via standard uppercase and lowercase fonts respectively, e.g., $X$ and $x$. Scalar random processes are denoted with $X(t), t\in \mathcal{R}$ for \ac{ct} and with $X[n], n\in \mathcal{Z}$ for \ac{dt}. Uppercase Sans-Serif fonts represent matrices, e.g., $\mathsf{A}$, and the element at the $i^{th}$ row and the $l^{th}$ column of $\mathsf{A}$ is denoted with $(\mathsf{A})_{i,l}$. We use $|\cdot|$ to denote the absolute value, $\lfloor d \rfloor, d\in \mR$, to denote the floor function and $d^+, d \in \mR$, to denote the $\mathop{\max}\{0,d\}$. $\delta [\cdot]$ denotes the Kronecker delta function: $\delta [n]=1$  for $n=0$ and  $\delta [n]=0$ otherwise, and $\mathbb{E}\{\cdot\}$ denotes the stochastic expectation. The sets of positive integers, integers, rational numbers, real numbers, positive numbers, and complex numbers are denoted by $\mathcal{N},\mathcal{Z}$, $\mySet{Q}$, $\mathcal{R}$, $\mR^{++}$, and $\mySet{C}$, respectively. The \ac{cdf} is denoted by $F_X(x)\triangleq\Pr{(X\leq x)}$ and the \ac{pdf} of a \ac{ct} \ac{rv} is denoted by $p_X (x)$. We represent a real Gaussian distribution with mean $\mu$ and variance $\sigma^2$ by the notation $\mN(\mu, \sigma^2)$. All logarithms are taken to base-2, and $j=\sqrt{-1}$. Lastly, for any sequence $y[i]$, $i \in \mySet{N}$, and positive integer $k \in \mathcal{N}$,  ${\myVec{y}}^{(k)}$ denotes the column vector $\big( { y}[1],\ldots, { y}[k]\big)^T$. 
 
\subsection{Wide-Sense Cyclostationary Random Processes}
\label{subsec:cyclosta_prelim}
Here, we review some preliminaries in the theory of cyclostationarity. 
We begin by recalling the definition of wide-sense cyclostationary processes:

\begin{definition}[Wide-sense cyclostationary processes {\cite[Sec. 17.2]{giannakis1998cyclostationary}}]
\label{Def:wscs}
A scalar stochastic process $\{S(t)\}_{t\in \mathcal{T}}$, where $\mathcal{T}$ is either discrete ($\mathcal{T} = \mZ$)  or continuous ($\mathcal{T} = \mR$)  is called \ac{wscs} if both its first-order and its second-order moments are periodic with respect to $t \in \mathcal{T}$ with some period $N_p\in \mT$.
\end{definition}

\ac{wscs} signal are thus random processes whose first and second-order moments are periodic functions. To define \ac{wsacs} signals, we first recall the definition of almost-periodic functions:


\begin{definition}[Almost-periodic-function \cite{GUAN20131165}] 
A function $x(t), t\in \mT$ where $\mT$ is either discrete ($\mT =\mZ$) or continuous ($\mT =\mR$), is called an almost-periodic function if for every $\epsilon > 0$ there exists a number $l(\epsilon) > 0$ with the property that any interval in $\mT$ of length $l(\epsilon)$ contains a $\tau$, 
such that

\begin{equation*}
|x(t+\tau)-x(t)|<\epsilon, \quad \forall t\in \mT.
\end{equation*}

\end{definition}

\begin{definition}[Wide-sense almost-cyclostationary processes {\cite[Def. 17.2]{giannakis1998cyclostationary}}] 
 A scalar stochastic process $\left\{S(t)\right\}_{t\in \mathcal{T}}$ where $\mathcal{T}$ is either discrete ($\mT =\mZ$) or continuous ($\mT =\mR$), is called \ac{wsacs} if its first and its second order moments are almost-periodic functions with respect to $t\in \mathcal{T}$. 
\end{definition}

The \ac{dt} \ac{wscs} model is commonly used in the communications literature, as it facilitates the the analysis of many problems of interest, such as fundamental rate limits analysis \cite{shlezinger2015capacity,shlezinger2016capacity,shlezinger2017secrecy}, channel identification \cite{heath1999exploiting}, synchronization \cite{shaked2017joint}, and noise mitigation \cite{shlezinger2014frequency}. However, in many scenarios, the considered signals are \ac{wsacs} rather than \ac{wscs}.
To see how the \ac{wsacs} model is obtained in the context of sampled signals, we briefly recall the discussion in \cite{shlezinger2019capacity} on sampled \ac{wscs}
 processes (please refer to \cite[Sec. II.B]{shlezinger2019capacity} for more details):
  Consider a \ac{ct} \ac{wscs} random process $S(t)$, which is sampled uniformly with a sampling interval of $\Tsamp$ and sampling time offset $\phi$, resulting in a \ac{dt} random process $S[i]=S(i\cdot \Tsamp+\phi )$. It is well known that contrary to stationary processes, which have a time-invariant statistical characteristics, the values of $\Tsamp$ and $\phi$ have a significant effect on the statistics of sampled \ac{wscs} processes \cite[Sec. II.B]{shlezinger2019capacity}. To demonstrate this point, consider a \ac{ct} \ac{wscs} process with variance $\sigma_s^2(t)=\frac{1}{2}\cdot \sin\left(2\pi t/\Tsym\right)+2$ for some $\Tsym > 0$.
      The sampled process for $\phi=0$ (no symbol time offset) and $\Tsamp=\frac{\Tsym}{3}$ has a  variance function whose period is $N_p=3$: $\sigma_s^2(i\Tsamp)=\{2,2.433,1.567,2,2.433,1.567,\ldots\}$, for $i=0,1,2,3,4,5,\ldots$; while the DT process obtained with the same sampling interval and the  sampling time offset of $\phi= \frac{\Tsamp}{2\pi}$ has a periodic variance with $N_p=3$ with values $\sigma_s^2(i\Tsamp +\phi) =\{2.155,2.335,1.510,2.155,2.335,1.510,\ldots\}$, for $i=0,1,2,3,4,5,\ldots$, which are different from the values of the \ac{dt} variance for $\phi=0$. It follows that both variances are periodic in discrete-time with the same period $N_p=3$, although with different values within the period, which is a result of the sampling time offset, yet, both \ac{dt} processes correspond to two instances of {\em synchronous} sampling. Lastly, consider the sampled variance obtained by sampling without a time offset (i.e., $\phi=0$) at a sampling interval of $\Tsamp=(1+\frac{1}{2\pi})\frac{\Tsym}{3}$. For this case, $\Tsamp$ is not an integer divisor of $\Tsym$ or of any of its integer multiples (i.e.,  $\frac{\Tsym}{\Tsamp}=2+\frac{2\pi-2}{2\pi+1}\equiv 2+\eps$; where $\eps \not \in \mySet{Q}$ and $\eps \in [0,1)$ ) resulting in the variance values $\sigma_s^2(i\Tsamp)=\{2, 2.335, 1.5027, 2.405,1.896,1.75,\ldots\}$, for $i=0,1,2,3,4,5\ldots$. For this scenario, the \ac{dt} variance is not periodic but is almost-periodic, corresponding to {\em asynchronous} sampling and the resulting \ac{dt} process is not \ac{wscs} but \ac{wsacs} \cite[Sec. 3.2]{gardner2006cyclostationarity}. The example above demonstrates that the statistical properties of sampled \ac{wscs} signals depend on the sampling rate and the sampling time offset, implying that the \ac{rdf} of such processes should also depend on these quantities, as we demonstrate in the sequel.

\subsection{The Rate-Distortion Function for DT WSCS Processes}
\label{subsec:RDF_stat}
\tikzstyle{codec}=[draw=black, rectangle, fill=white, minimum width=3cm, minimum height=1cm, text centered]

\tikzstyle{arrow} = [thick,->,>=stealth]

\begin{figure}
\centering
\begin{tikzpicture}[node distance=4.5cm]
\node (encoder)[codec]{Encoder $f_S$};
\node (decoder)[codec, right of=encoder, xshift=2cm]{Decoder $g_S$};

\draw [arrow] (encoder) -- node[anchor=south] {$\{1,2,\ldots,2^{lR}\}$} (decoder);
\draw [arrow] (-3.7cm,0) -- node[anchor=south] {$\{S[i]\}_{i=1}^l$} (encoder);
\draw [arrow] (decoder) -- node[anchor=south] {$\{\hat{S}[i]\}_{i=1}^l$} (10cm,0);
\end{tikzpicture}
    	\caption{Source coding block diagram}
		
		\label{fig:source_code_dgrm}
\end{figure}
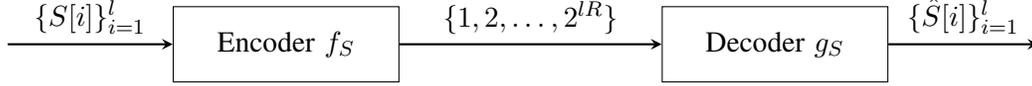

In this subsection we review the source coding problem and the existing results on the \ac{rdf} of \ac{wscs} processes.
We begin by recalling the definition of a source coding scheme, see, e.g., \cite[ch. 3]{el2011network}, \cite[Ch.10]{cover2006elements}: 
\begin{definition}[Source coding scheme]
\label{Def:source_coding_scheme}
A source coding scheme with  blocklength $l$ consists of:
\begin{enumerate}
    \item An encoder $f_S$ which maps a block of $l$ source samples $\{S[i]\}^{l}_{i=1}$ into an index from a set of $M = 2^{lR}$ indexes,
$f_S:\{S[i]\}_{i=1}^l \mapsto \{1,2,\ldots, M\}$.
\item A decoder $g_S$ which maps the received index into a reconstructed sequence of length $l$, $\left\{\hat S [i]\right\}_{i=1}^l$, $g_S:\{1,2,\ldots, M\}\mapsto \left\{\hat S [i]\right\}_{i=1}^l$
\end{enumerate}
The encoder-decoder pair is referred to as an $(R,l)$ source code, where $R$ is the rate of the code in bits per source sample, defined  as:
\begin{equation}
\label{eqn:Generic_rate_process}
    R=\frac{1}{l} \log_2 M  
\end{equation}
\end{definition} 
 The \ac{rdf} characterizes the minimal average number of bits per source sample, denoted $R(D)$, that can be used to encode a source process such that it can be reconstructed from its encoded representation with a recovery distortion not larger than $D >0$ \cite[Sec. 10.2]{cover2006elements}. In the current work, we use the \ac{mse} distortion measure, which measures the cost of decoding a source symbol $S$ into $\hat{S}$ via $d(S,\hat{S})=\left\|S-\hat{S}\right\|^2$. The distortion for a sequence of source samples $\myVec{S}^{(l)}$ decoded into a reproduction sequence $\hat{\myVec{S}}^{(l)}$ is given by $d\left(\myVec{S}^{(l)},\hat{\myVec{S}}^{(l)}\right)= \frac{1}{l} \sum\limits_{i=1}^{l} \left(S[i]-\hat S[i]\right)^2$ and the average distortion in decoding a random source sequence $\myVec{S}^{(l)}$ into a random reproduction sequence $\hat{\myVec{S}}^{(l)}$ is defined as:
\begin{equation}
\label{eqn:distortionn}
	  \bar{d}\left(\myVec{S}^{(l)},\hat{\myVec{S}}^{(l)}\right)\triangleq\E\left\{d\left(\myVec{S}^{(l)},\hat{\myVec{S}}^{(l)}\right)\right\}
	  =\frac{1}{l} \sum\limits_{i=1}^{l} \E\left\{\left(S[i]-\hat S[i]\right)^2 \right\},
\end{equation}
where the expectation in \eqref{eqn:distortionn} is taken with respect to the joint probability distributions on the source $S[i]$ and its reproduction $\hat{S}[i]$. 
Using Def. \ref{Def:source_coding_scheme} we can now formulate the achievable rate-distortion pair for a source $S[i]$, as stated in the following definition  \cite[Pg. 471]{gallager1968information}:

\begin{definition}[Achievable rate-distortion pair] 
\label{def:rate_dist_pair}
A rate-distortion pair $(R, D)$ is achievable for a process $\{S[i]\}_{i\in\mN}$ if for any $\eta>0$ and for all sufficiently large $l$ one can construct an  $\left(R_s, l\right)$ source code such that
\textcolor{black}{
\begin{equation}
\label{eqn:rate_bound}
 R_s\leq R+\eta.
\end{equation}}
and
\begin{equation}
\label{eqn:dist_bound}
    \bar{d}\left(\myVec{S}^{(l)},\hat{\myVec{S}}^{(l)}\right)\leq D+\eta.
\end{equation}
\end{definition}
\begin{definition}
The rate-distortion function $R(D)$ is defined as the infimum of all achievable rates $R$ for a given maximum allowed distortion $D$. 
\end{definition}

Def. \ref{def:rate_dist_pair} defines a rate-distortion pair to as that achievable using source codes with any sufficiently large blocklength. In the following lemma, which is required to characterize the \ac{rdf} of \ac{dt} \ac{wscs} signals, we state that it is sufficient to consider only source codes whose blocklength is an integer multiple of some fixed integer:
\begin{lemma}
\label{lemma:arbit_blocklength}
Consider the process$\{S[i]\}_{i\in\mN}$ with a finite and bounded variance. For a given maximum allowed distortion $D$, 
the optimal reproduction process $\{\hat{S}[i]\}_{i\in\mN}$ is also the optimal reproduction process when restricted to using source codes whose blocklengths are  integer multiples of some fixed positive integer $r$.
\end{lemma}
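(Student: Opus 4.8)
The plan is to show that the rate--distortion function is unaffected by restricting the blocklength to integer multiples of $r$; equivalently, writing $R(D)$ for the infimum of achievable rates (Def.~\ref{def:rate_dist_pair}) and $R_r(D)$ for the infimum of those rates $R$ for which, for every $\eta>0$ and all sufficiently large $l$ \emph{that are integer multiples of} $r$, an $(R_s,l)$ source code exists with $R_s\le R+\eta$ and $\bar{d}\big(\myVec{S}^{(l)},\hat{\myVec{S}}^{(l)}\big)\le D+\eta$, the goal is to prove $R_r(D)=R(D)$. One inequality is immediate: if $(R,D)$ is achievable per Def.~\ref{def:rate_dist_pair} then the required code exists for \emph{all} sufficiently large $l$, hence in particular for all sufficiently large multiples of $r$, so every rate achievable without the restriction is also achievable with it, giving $R_r(D)\le R(D)$.

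The substantive direction, $R(D)\le R_r(D)$, calls for extending a family of codes defined only on multiples of $r$ to one defined on every blocklength at a vanishing cost. Assume $(R,D)$ is achievable by multiple-of-$r$ codes and fix $\eta'>0$. For an arbitrary $l\in\mN$ write $l=mr+s$ with $0\le s<r$; for $l$ large, $mr=l-s$ is also large, so the hypothesis (applied with $\eta=\eta'/2$) furnishes an $(R_s,mr)$ source code with $R_s\le R+\eta'/2$ and average distortion at most $D+\eta'/2$ over its $mr$ coordinates. I would then define an $(R_s',l)$ source code that applies this code to $\{S[i]\}_{i=1}^{mr}$ and sets $\hat S[i]=\E\{S[i]\}$ for $i=mr+1,\dots,l$; the latter requires no description bits, so the index set is unchanged and $R_s'=\tfrac{mr}{l}R_s\le R_s\le R+\eta'$. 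Using that the variance is uniformly bounded, say $\Var(S[i])\le\sigma^2_{\rm max}$ for all $i$, the average distortion of the padded code is at most $\tfrac{mr}{l}\big(D+\eta'/2\big)+\tfrac{s}{l}\sigma^2_{\rm max}\le D+\eta'/2+\tfrac{(r-1)\sigma^2_{\rm max}}{l}$, which is at most $D+\eta'$ once $l>2(r-1)\sigma^2_{\rm max}/\eta'$. Since $\eta'>0$ was arbitrary, $(R,D)$ is achievable per Def.~\ref{def:rate_dist_pair}, i.e. $R(D)\le R_r(D)$.

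Combining the two inequalities gives $R_r(D)=R(D)$. Because the multiple-of-$r$ codes form a subset of all source codes, the reproduction process attaining $R(D)$ cannot be outperformed within the restricted class, while the padding construction shows that the restriction costs nothing asymptotically; hence this reproduction process is optimal for the restricted problem as well, which is the assertion of the lemma. I expect the only point requiring care to be the distortion bookkeeping in the padding step --- concretely, invoking the finite, uniformly bounded variance to control the contribution of the at most $r-1$ tail samples that the multiple-of-$r$ code does not cover --- since the rate accounting and the reverse inequality are essentially definitional.
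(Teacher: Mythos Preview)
Your proof is correct and follows essentially the same padding strategy as the paper's proof in Appendix~\ref{app:arb_blocklength}: extend a code on one blocklength to a nearby blocklength by appending constant reconstructions for the extra coordinates, then use the uniformly bounded variance to absorb the $O(1/l)$ distortion overhead. One minor difference worth noting: for the direction $R_r(D)\le R(D)$ you correctly observe that it is immediate from Def.~\ref{def:rate_dist_pair}, since achievability requires good codes at \emph{every} sufficiently large $l$ and hence in particular at multiples of $r$; the paper instead proves this direction by a second padding construction (from $b\cdot r+j$ up to $(b+1)\cdot r$), which is unnecessary given how achievability is defined. Your choice to pad with $\E\{S[i]\}$ rather than $0$ is also slightly cleaner for a source not assumed to be zero-mean, though in the paper's application the two coincide.
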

\begin{proof}
The proof of the lemma is detailed in Appendix \ref{app:arb_blocklength}.
\end{proof}
This lemma facilitates switching between multivariate and scalar representations of the source and the reproduction processes.

The \ac{rdf} obviously depends on the distribution of the source $\{S[i]\}_{i\in\mN}$. Thus, modifying the source yields a different \ac{rdf}. However, when a source is scaled by some positive constant, the \ac{rdf} of the scaled process with the \ac{mse} criterion can be inferred from that of the original process, as stated in the following theorem: 
\begin{theorem}
\label{thm:ThmScale1}
 \textcolor{black}{
Let $\{S[i]\}_{i\in\mN}$ be a source process for which the rate-distortion pair $(R,D)$ is achievable under the \ac{mse} distortion. 
%
Then, for every $\alpha \in \mR^{++}$, it holds that  the rate-distortion pair $(R, \alpha^2\cdot D)$ is achievable for the source $\{\alpha \cdot S[i]\}_{i\in\mN}$. }
\end{theorem}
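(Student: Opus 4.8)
The plan is to prove the statement by an explicit code construction: starting from any source code that achieves $(R,D)$ for $\{S[i]\}_{i\in\mN}$ under the \ac{mse} distortion, I would build a code for the scaled source $\{\alpha\cdot S[i]\}_{i\in\mN}$ of \emph{identical} blocklength and \emph{identical} codebook size by pre-scaling the input block by $\alpha^{-1}$ before feeding it to the original encoder, and post-scaling the decoder output by $\alpha$. Because the index set is untouched, the rate is preserved verbatim; because every squared error term is multiplied by exactly $\alpha^2$, the average distortion in \eqref{eqn:distortionn} is multiplied by $\alpha^2$. The only genuine content is then to track the $\eta$-tolerances of Def.~\ref{def:rate_dist_pair} through this scaling.

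Concretely, I would proceed as follows. Fix an arbitrary target tolerance $\eta'>0$ for the scaled problem and set $\eta \triangleq \eta'\cdot\min\{1,\alpha^{-2}\}>0$. Since $(R,D)$ is achievable for $\{S[i]\}_{i\in\mN}$, Def.~\ref{def:rate_dist_pair} supplies an $l_0$ such that for every $l\ge l_0$ there is an $(R_s,l)$ source code $(f_S,g_S)$ with $R_s\le R+\eta$ and $\bar d\big(\myVec{S}^{(l)},\hat{\myVec{S}}^{(l)}\big)\le D+\eta$, where $\hat{\myVec{S}}^{(l)}=g_S\big(f_S(\myVec{S}^{(l)})\big)$. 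For each such $l$, define a code $(f_T,g_T)$ for $T[i]\triangleq\alpha\cdot S[i]$ over the same index set by $f_T(\myVec{t}^{(l)})\triangleq f_S(\alpha^{-1}\myVec{t}^{(l)})$ and $g_T(m)\triangleq\alpha\cdot g_S(m)$. The codebook size, hence the rate, is unchanged, so $R_s\le R+\eta\le R+\eta'$ and \eqref{eqn:rate_bound} holds with tolerance $\eta'$. The induced reproduction is $\hat{\myVec{T}}^{(l)}=g_T\big(f_T(\myVec{T}^{(l)})\big)=\alpha\cdot\hat{\myVec{S}}^{(l)}$ (using $\alpha^{-1}\myVec{T}^{(l)}=\myVec{S}^{(l)}$), whence by \eqref{eqn:distortionn}
\begin{align*}
\bar d\big(\myVec{T}^{(l)},\hat{\myVec{T}}^{(l)}\big) &= \frac{1}{l}\sum_{i=1}^{l}\E\big\{(\alpha S[i]-\alpha\hat S[i])^2\big\} = \alpha^2\,\bar d\big(\myVec{S}^{(l)},\hat{\myVec{S}}^{(l)}\big)\\
&\le \alpha^2(D+\eta) \le \alpha^2 D + \eta',
\end{align*}
so \eqref{eqn:dist_bound} holds for the scaled source at distortion level $\alpha^2D$ with tolerance $\eta'$, for all $l\ge l_0$. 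As $\eta'>0$ was arbitrary, $(R,\alpha^2 D)$ is achievable for $\{\alpha\cdot S[i]\}_{i\in\mN}$.

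I do not expect any real obstacle here: scaling commutes with the encoder--decoder composition at the level of the reproduction sequence and acts on the \ac{mse} purely as a multiplicative constant, so no information-spectrum machinery is invoked. The only points requiring minor care are (i) choosing $\eta$ small enough that both the rate bound and the $\alpha^2$-inflated distortion bound stay within the single target tolerance $\eta'$, and (ii) observing that the construction does not enlarge the index set, so the rate is literally unchanged rather than merely upper bounded. No hypothesis on $\{S[i]\}_{i\in\mN}$ beyond achievability of $(R,D)$ is needed, and since the reproduction alphabet is $\mR$, the post-scaled block $\alpha\cdot\hat{\myVec{S}}^{(l)}$ is again an admissible reproduction sequence.
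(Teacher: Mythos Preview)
Your proposal is correct and follows essentially the same approach as the paper: both construct the code for $\{\alpha S[i]\}$ by pre-scaling the encoder input by $\alpha^{-1}$ and post-scaling the decoder output by $\alpha$, so the rate is preserved and the \ac{mse} is multiplied by $\alpha^2$. Your handling of the $\eta$-tolerance via $\eta=\eta'\min\{1,\alpha^{-2}\}$ is in fact slightly cleaner than the paper's corresponding step.
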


\begin{proof}
The proof to the theorem is detailed in Appendix \ref{app:Proofthmm}.
\end{proof}
 
Lastly, in the proof of our main result, we make use of the \ac{rdf} for \ac{dt} \ac{wscs} sources derived in \cite[Thm. 1]{kipnis2018distortion}, repeated below for ease of reference. Prior to the statement of the theorem, we recall that for blocklenghts which are integer multiples of $N_p$,  a \ac{wscs} process $S[i]$ with period $N_p >0$ can be represented as an equivalent $N_p$-dimensional process $\myVec{S}^{(N_p)}[i]$ via the \ac{dcd} \cite[Sec. 17.2]{giannakis1998cyclostationary}. The \ac{psd} of the process $\myVec{S}^{(N_p)}$ is defined as \cite[Sec. II]{kipnis2018distortion}:
\begin{equation}
  \bigg(\myVec{\rho}_{\myVec{S}} \left(e^{j2 \pi f}\right)\bigg)_{u,v}  =\sum\limits_{\Delta \in \mathcal{Z}} \bigg(\mathsf{R}_{\myVec{S}}[\Delta]\bigg)_{u,v}e^{-j2 \pi f\Delta} \qquad -\frac{1}{2}\leq f\leq \frac{1}{2},\quad u,v \in \{1,2,\ldots N_p\}
\end{equation}
where $\mathsf{R}_{\myVec{S}}[\Delta]\triangleq \E\left\{\myVec{S}^{(N_p)}[i]\cdot\myVec{S}^{(N_p)}[i+\Delta]\right\}$ \cite[Sec. 17.2]{giannakis1998cyclostationary}.
We now proceed to the statement of \cite[Thm. 1]{kipnis2018distortion}:
\begin{theorem}
\cite[Thm. 1]{kipnis2018distortion} Consider a zero-mean real \ac{dt} \ac{wscs} Gaussian source $S[i], i\in \mN$ with memory, and let $N_p \in \mN$ denote the period of its statistics. The \ac{rdf} is expressed as: 

\begin{subequations}
\label{eqn:kipnis_RDF_WSCS}
\begin{equation}
\label{eqn:kipnis_Rate_WSCS}
    R(D)=\frac{1}{2N_p} \sum\limits_{m=1}^{N_p} \int_{f=-0.5}^{0.5} \left(\log\left(\frac{\lambda_m\left(e^{j2 \pi f}\right)}{\theta}\right)\right)^+ \mathrm{d}f,
\end{equation}
where $\lambda_m\left(e^{j2 \pi f}\right)$, $m=1,2,\ldots, N$ denote the eigenvalues of the \ac{psd} matrix of the process $\myVec{S}^{(N_p)}[i]$, which is  obtained from $S[i]$ by applying $N_p$-dimensional \ac{dcd}, and  $\theta$ is selected such that
\begin{equation}
\label{eqn:kipnis_Distortion_WSCS}
    D=\frac{1}{N_p} \sum\limits_{m=1}^{N_p} \int_{f=  -0.5}^{0.5} \min \left\{\lambda_m\left(e^{j2 \pi f}\right),\theta \right\}\mathrm{d}f.
\end{equation}
\end{subequations}

\end{theorem}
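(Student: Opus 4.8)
The plan is to reduce the statement to the known \ac{rdf} of a multivariate stationary Gaussian source. First I would invoke Lemma~\ref{lemma:arbit_blocklength} to restrict attention, without loss of optimality, to source codes whose blocklength $l$ is an integer multiple of $N_p$, writing $l = k N_p$. Applying the $N_p$-dimensional \ac{dcd} to the scalar \ac{wscs} process $S[i]$ then produces the vector-valued process $\myVec{S}^{(N_p)}[i]$, $i=1,\ldots,k$, which is jointly Gaussian and block wide-sense stationary: its correlation matrix $\mathsf{R}_{\myVec{S}}[\Delta] = \E\{\myVec{S}^{(N_p)}[i]\,\myVec{S}^{(N_p)}[i+\Delta]^T\}$ depends only on $\Delta$ precisely because $S[i]$ is \ac{wscs} with period $N_p$. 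The key observation is that the \ac{mse} distortion is preserved under this polyphase bijection, $\sum_{i=1}^{kN_p}\big(S[i]-\hat S[i]\big)^2 = \sum_{i=1}^{k}\big\|\myVec{S}^{(N_p)}[i]-\hat{\myVec{S}}^{(N_p)}[i]\big\|^2$, so an $(R,l)$ code for $S[i]$ with distortion $D$ corresponds to a code for $\myVec{S}^{(N_p)}[i]$ of rate $N_p R$ bits per vector sample with the same average distortion, and conversely.

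Second, I would apply the characterization of the \ac{rdf} for a stationary vector Gaussian source under the sum-of-components \ac{mse}, see~\cite[Ch.~9]{gallager1968information} and \cite{berger1998lossy}. This proceeds by diagonalizing the \ac{psd} matrix $\myVec{\rho}_{\myVec{S}}\big(e^{j2\pi f}\big)$ at each normalized frequency $f\in[-\tfrac12,\tfrac12]$; since $\myVec{\rho}_{\myVec{S}}$ is Hermitian positive semidefinite, this yields real nonnegative eigenvalues $\lambda_1\big(e^{j2\pi f}\big),\ldots,\lambda_{N_p}\big(e^{j2\pi f}\big)$ and decouples the source into $N_p$ independent scalar Gaussian frequency-domain subsources. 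Reverse water-filling with a common water level $\theta>0$ then assigns subsource $m$ at frequency $f$ the distortion $\min\{\lambda_m(e^{j2\pi f}),\theta\}$ and the rate $\tfrac12\big(\log(\lambda_m(e^{j2\pi f})/\theta)\big)^+$, integrated over $f$ and summed over $m$. The justification of this step rests on the matrix-valued Szeg\H{o}/block-Toeplitz asymptotics for the eigenvalues of the covariance of $\big(\myVec{S}^{(N_p)}[1],\ldots,\myVec{S}^{(N_p)}[k]\big)$ as $k\to\infty$, combined with the standard achievability (random coding) and converse arguments applied bin-by-bin to parallel memoryless Gaussian sources.

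Third, I would translate back to the scalar normalization. Because the rate in the previous step is counted per vector sample while the statement counts bits per scalar sample, and each vector sample carries $N_p$ scalar samples, the scalar rate is $R(D)=\tfrac{1}{N_p}\times(\text{vector rate})$, which produces the prefactor $\tfrac{1}{2N_p}$ in~\eqref{eqn:kipnis_Rate_WSCS}; likewise the distortion constraint~\eqref{eqn:kipnis_Distortion_WSCS} carries a $\tfrac{1}{N_p}$ factor since $\bar d$ is normalized by the scalar blocklength $kN_p$ rather than by $k$. Choosing $\theta$ so that~\eqref{eqn:kipnis_Distortion_WSCS} equals the target $D$ then completes the derivation.

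The step I expect to be the main obstacle is the second one: making the vector-stationary Gaussian \ac{rdf} rigorous for a source \emph{with memory}. This requires the asymptotic eigenvalue distribution of block-Toeplitz covariance matrices to pass from finite-blocklength mutual-information expressions to the frequency-domain integral, and some care to verify that the water-filling optimization commutes with the limit $k\to\infty$ and that the bin-wise optimal conditional laws are jointly realizable by a stationary reproduction process. The remaining ingredients — the \ac{mse}-preserving change of variables via the \ac{dcd} and the bookkeeping of the $1/N_p$ factors — are routine once Lemma~\ref{lemma:arbit_blocklength} is available.
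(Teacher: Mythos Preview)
The paper does not prove this theorem at all: it is stated as a direct citation of \cite[Thm.~1]{kipnis2018distortion} and is used only as background for Corollary~\ref{corollary:RDF_memoryless_WSCS}. Your sketch is therefore not to be compared against any proof in the paper; rather, it is a reconstruction of the argument in the cited reference. That said, your outline---reduce via the \ac{dcd} to a block-stationary $N_p$-variate Gaussian source (using Lemma~\ref{lemma:arbit_blocklength} to justify restricting to blocklengths divisible by $N_p$), invoke the multivariate stationary Gaussian \ac{rdf} with frequency-wise diagonalization and reverse water-filling, and then renormalize by $1/N_p$---is exactly the standard route and is the one the paper itself alludes to in the Introduction when it says the result of \cite{kipnis2018distortion} ``used the fact that any \ac{wscs} signal can be transformed into a set of stationary subprocesses \cite{giannakis1998cyclostationary}; thereby facilitating the application of information-theoretic results obtained for multivariate stationary sources.'' Your identification of the block-Toeplitz asymptotics as the only nontrivial ingredient is also accurate.
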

We note that $\myVec{S}^{(N_p)}[i]$ corresponds to a vector of stationary  processes whose elements are are not identically distributed; hence the variance is different for each element. 
Using \cite[Thm. 1]{kipnis2018distortion}, we can directly obtain the \ac{rdf} for the special case of a \ac{dt} memoryless \ac{wscs} Gaussian process. This is stated in the following corollary:

\begin{corollary}
\label{corollary:RDF_memoryless_WSCS}
Let $\{S[i]\}_{i \in \mN}$ be a zero-mean \ac{dt} memoryless real \ac{wscs} Gaussian source with period $N_p\in \mathcal{N}$, and set $\sigma^2_m = \E\{S^2[m]\}$ for $ m=1,2,\ldots, N_P$. 
%
\begin{subequations}
\label{eqn:RDF_WSCS}
The \ac{rdf} for compression of of $S[i]$ is stated as:
\begin{align}
\label{eqn:Rate_WSCS}
   R(D)= \begin{cases}
    \frac{1}{2N_p} \sum\limits_{m=1}^{N_p}\log \left(\frac{\sigma^2_m}{D_m}\right) & D\leq \frac{1}{N_p} \sum\limits_{m=1}^{N_p}\sigma^2_m\\
    0 & D>  \frac{1}{N_p} \sum\limits_{m=1}^{N_p}\sigma^2_m,
    \end{cases}
\end{align}
where $D_m\triangleq \min \left\{\sigma^2_m,\theta \right\}$, and $\theta$ is defined such that \begin{equation}
\label{eqn:Distortion_WSCS}
    D=\frac{1}{N_p} \sum\limits_{m=1}^{N_p} D_m.
\end{equation}
\end{subequations}
\end{corollary}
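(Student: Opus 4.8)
The plan is to derive the stated expression as a direct specialization of \cite[Thm. 1]{kipnis2018distortion}; the only substantive step is to show that, in the memoryless case, the \ac{dcd} of $S[i]$ has a particularly simple power spectral density. First I would invoke Lemma \ref{lemma:arbit_blocklength} with $r=N_p$, so that (without loss of optimality) attention may be restricted to blocklengths that are integer multiples of $N_p$; this is exactly the regime in which the $N_p$-dimensional \ac{dcd} $\myVec{S}^{(N_p)}[n]$ of $S[i]$ is defined, with $\big(\myVec{S}^{(N_p)}[n]\big)_u = S[nN_p+u]$, $u\in\{1,\dots,N_p\}$. Each ``phase'' subprocess $\{S[nN_p+u]\}_{n\in\mN}$ is stationary with variance $\sigma^2_u=\E\{S^2[u]\}$, and ``memoryless'' means precisely that the scalars $\{S[i]\}_{i\in\mN}$, and hence all $\{S[nN_p+u]\}_{n,u}$, are mutually independent (they are not identically distributed, since the variance pattern has period $N_p$).

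Next I would compute $\mathsf{R}_{\myVec{S}}[\Delta]=\E\{\myVec{S}^{(N_p)}[n]\cdot(\myVec{S}^{(N_p)}[n+\Delta])^T\}$: its $(u,v)$ entry equals $\E\{S[nN_p+u]\,S[(n+\Delta)N_p+v]\}$, which by independence and the zero-mean property vanishes unless $\Delta=0$ and $u=v$, in which case it equals $\sigma^2_u$. Hence $\mathsf{R}_{\myVec{S}}[\Delta]=\delta[\Delta]\cdot\mathrm{diag}(\sigma^2_1,\dots,\sigma^2_{N_p})$, so the \ac{psd} matrix is frequency-flat and diagonal: $\myVec{\rho}_{\myVec{S}}(e^{j2\pi f})=\mathrm{diag}(\sigma^2_1,\dots,\sigma^2_{N_p})$ for all $f$. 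Its eigenvalues are therefore $\lambda_m(e^{j2\pi f})=\sigma^2_m$, constant in $f$ (the labeling of the diagonal entries is immaterial, since only the sum over $m$ enters the formulas).

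Then I would substitute these eigenvalues into \eqref{eqn:kipnis_RDF_WSCS}. In \eqref{eqn:kipnis_Distortion_WSCS} the integrand $\min\{\lambda_m(e^{j2\pi f}),\theta\}$ no longer depends on $f$, so integrating over $[-\tfrac12,\tfrac12]$ yields a unit factor and the constraint becomes $D=\frac{1}{N_p}\sum_{m=1}^{N_p}\min\{\sigma^2_m,\theta\}=\frac{1}{N_p}\sum_{m=1}^{N_p}D_m$, i.e. \eqref{eqn:Distortion_WSCS}. Likewise \eqref{eqn:kipnis_Rate_WSCS} collapses to $R(D)=\frac{1}{2N_p}\sum_{m=1}^{N_p}\big(\log(\sigma^2_m/\theta)\big)^+$, and since $\big(\log(\sigma^2_m/\theta)\big)^+=\log\!\big(\sigma^2_m/\min\{\sigma^2_m,\theta\}\big)=\log(\sigma^2_m/D_m)$ (equal to $\log(\sigma^2_m/\theta)$ when $\theta\le\sigma^2_m$ and to $\log 1=0$ otherwise), this matches the first branch of \eqref{eqn:Rate_WSCS}. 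The second branch follows because whenever $D>\frac{1}{N_p}\sum_{m=1}^{N_p}\sigma^2_m$ the zero reconstruction $\hat S\equiv 0$ already attains distortion $\frac{1}{N_p}\sum_{m=1}^{N_p}\sigma^2_m\le D$ at rate $0$, so $R(D)=0$.

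I do not anticipate a genuine obstacle; the two points meriting care are the elementary passage from ``memoryless'' to ``independent samples'' and the observation that the frequency integrals trivialize because the \ac{psd} matrix is constant in $f$ — this flatness is exactly the structural simplification that distinguishes the memoryless case from the general source-with-memory setting of \cite[Thm. 1]{kipnis2018distortion}.
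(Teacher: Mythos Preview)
Your proposal is correct and follows essentially the same route as the paper: both specialize \cite[Thm.~1]{kipnis2018distortion} by observing that memorylessness makes the \ac{dcd} covariance $\mathsf{R}_{\myVec{S}}[\Delta]=\delta[\Delta]\cdot\mathrm{diag}(\sigma_1^2,\dots,\sigma_{N_p}^2)$, so the \ac{psd} matrix is constant and diagonal with eigenvalues $\sigma_m^2$, after which the frequency integrals collapse and the identity $(\log(\sigma_m^2/\theta))^+=\log(\sigma_m^2/D_m)$ yields \eqref{eqn:Rate_WSCS}--\eqref{eqn:Distortion_WSCS}. Your write-up is slightly more detailed than the paper's (you explicitly compute the correlation matrix, invoke Lemma~\ref{lemma:arbit_blocklength}, and separately justify the $R(D)=0$ branch via the zero reconstruction), but the argument is the same.
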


\begin{proof} 
Applying Equations \eqref{eqn:kipnis_Rate_WSCS} and \eqref{eqn:kipnis_Distortion_WSCS} to our specific case of a memoryless \ac{wscs} source, we obtain equations \eqref{eqn:Rate_WSCS} and \eqref{eqn:Distortion_WSCS} as follows: First note that the corresponding \ac{dcd} components for a zero-mean memoryless \ac{wscs} process are also zero-mean and memoryless; hence the \ac{psd} matrix for the multivariate process $\myVec{S}^{(N_p)}[i]$ is a diagonal matrix, whose eigenvalues are the constant diagonal elements such that the $m^{th}$ diagonal element is equal to the variance $\sigma_m^2$: $\lambda_m\left(e^{j2 \pi f}\right)= \sigma_m^2$. Now, writing Eqn. \eqref{eqn:kipnis_Rate_WSCS} for this case we obtain:
\begin{align}
    R(D)&=\frac{1}{2N_p} \sum\limits_{m=1}^{N_p} \int_{f=-0.5}^{0.5} \left(\log\left(\frac{\lambda_m\left(e^{j2 \pi f}\right)}{\theta}\right)\right)^+ \mathrm{d}f\nonumber\\
     &=\frac{1}{2N_p} \sum\limits_{m=1}^{N_p}\left(\log \left(\frac{\sigma^2_m}{\theta}\right)\right)^+.
     \label{eqn:Aid1}
\end{align}
Since $\left(\log \left(\frac{\sigma^2_m}{\theta}\right)\right)^+=\max\left\{0, \log \left(\frac{\sigma^2_m}{\theta}\right)\right\}\equiv\log\left( \frac{\sigma^2_m}{D_m}\right)$ it follows that \eqref{eqn:Aid1} coincides with  \eqref{eqn:Rate_WSCS}. Next, expressing Eqn. \eqref{eqn:kipnis_Distortion_WSCS} for the memoryless source process, we obtain:
\begin{equation}
    D=\frac{1}{N_p} \sum\limits_{m=1}^{N_p} \int_{f=  -0.5}^{0.5} \min \left\{\lambda_m\left(e^{j2 \pi f}\right),\theta \right\}\mathrm{d}f=\frac{1}{N_p} \sum\limits_{m=1}^{N_p} \min \left\{\sigma^2_m, \theta \right\},
\end{equation}
proving Eqn. \eqref{eqn:Distortion_WSCS}. 
\end{proof}

 Now, from Lemma \ref{lemma:arbit_blocklength}, we conclude that the \ac{rdf} for compression of source sequences whose blocklength is an integer multiple of $N_p$ is the same as the \ac{rdf} for compressing source sequences whose blocklength is arbitrary. We recall that from \cite[Ch. 10.3.3]{cover2006elements} it follows that for the zero-mean memoryless Gaussian \ac{dcd} vector source process $\myVec{S}^{(N_p)}[i]$ the optimal reproduction process which achieves the \ac{rdf} is an $N_p \times 1$ memoryless process whose covariance matrix is diagonal with non-identically distributed elements. From \cite{giannakis1998cyclostationary}, we can apply the inverse \ac{dcd} to obtain a \ac{wscs} process. Hence, from Lemma \ref{lemma:arbit_blocklength} we can conclude that the optimal reproduction process for the \ac{dt} \ac{wscs} Gaussian source is a \ac{dt} \ac{wscs} Gaussian process.
 

\section{Problem Formulation and Auxiliary Results}
\label{sec:rsrch_obj} 
   Our objective is to characterize the \ac{rdf} for compression of asynchronously sampled \ac{ct} \ac{wscs} Gaussian sources when the sampling interval is larger than the memory of the source. In particular, we focus on the minimal rate required to achieve a high fidelity reproduction, representing the \ac{rdf} curve for distortion values not larger than the variance of the source. Such  characterization of the \ac{rdf} for asynchronous sampling is essential for comprehending the relationship between the minimal required number of bits and the sampling rate at a given distortion. Our analysis constitutes an important step towards constructing joint source-channel coding schemes in scenarios in which the symbol rate of the transmitter is not necessarily synchronized with the sampling rate of the source to be transmitted. Such scenarios arise, for example, when recording a communications signal for storage or processing, or in compress-and-forward relaying \cite[Ch. 16.7]{el2011network},  \cite{wu2013optimal} in which the relay compresses the sampled received signal, which is then forwarded to the assisted receiver. As the relay operates with its own sampling clock, which need not necessarily be synchronized with the symbol rate of the assisted transmitter, sampling at the relay may result in a \ac{dt} \ac{wsacs} source signal. In the following we first characterize the sampled source model in Subsection \ref{subsec:problem_formulation}. Then, 
as a preliminary step to our characterization the \ac{rdf} for asynchronously sampled \ac{ct} \ac{wscs} Gaussian processes stated in Section \ref{subsec:rdf_Async_samp_wscs}, we  recall in Subsection \ref{subsec:Info_spec_definitions} the definitions of some information-spectrum quantities used in this study. Finally, in Subsection \ref{subsec:rate_info_spec_limit}, we recall an auxiliary result relating the information spectrum quantities of a collection of sequences of RVs to the information spectrum quantities of its limit sequence of RVs. This result will be applied in the derivation of the \ac{rdf} with asynchronous sampling. 
 
\subsection{Source Model}
\label{subsec:problem_formulation}
Consider a real \ac{ct}, zero-mean \ac{wscs} Gaussian random process $S_c (t)$ with period $\Tc$. Let the variance function of $S_c (t)$ be defined as $\Csc(t)\triangleq \E\big\{\Sc^2 (t)\big\}$, and assume it is both upper bounded and lower bounded away from zero, and that it is continuous in $t\in \mathcal{R}$. Let $\tau_m >0$ denote the maximal correlation length of $S_c(t)$, i.e., $r_{S_c}(t,\tau)\triangleq\E\big\{S_c(t)S_c(t-\tau)\big\}=0, \forall |\tau| > \tau_m$. By the cyclostationarity of $\Sc (t)$, we have that $\sigma_{S_c}^2 (t)=\sigma_{S_c}^2 (t+\Tc), \forall t\in \mR$.  
Let $S_c(t)$ be sampled uniformly with the sampling interval $\Tsamp>0$ such that $\Tc=(p+\epsilon)\cdot \Tsamp$ for $p\in \mN$ and $\epsilon \in [0,1)$ yielding $S_{\epsilon} [i]\triangleq\Sc (i\cdot \Tsamp)$, where $i \in \mZ$. The variance of $\Seps[i]$ is given by $\Sigseps[i]\triangleq \rseps [i,0]=\Csc \left(\frac{i\cdot\Tc}{p+\eps}\right)$. 

In this work, as in \cite{shlezinger2019capacity}, we assume that the duration of temporal correlation of the \ac{ct} signal is shorter than the sampling interval $\Tsamp$, namely, $\tau_m < \Tsamp$. Consequently, the \ac{dt} Gaussian process $S_{\epsilon} [i]$ is a memoryless zero-mean Gaussian process and its autocorrelation function is given by:
\begin{align}
\label{eqn:AsyncAutocorr}
    \rseps [i,\Delta]&=\E\bigg\{\Seps[i]\Seps[i+\Delta]\bigg\}\nonumber\\&= \E \left\{\Sc\left(\frac{i\cdot \Tc}{\Td+\eps}\right)\cdot \Sc\left(\frac{(i+\Delta)\cdot \Tc}{\Td+\eps}\right)\right\}=\Csc \left(\frac{i\cdot\Tc}{\Td+\eps}\right)\cdot \delta[\Delta]=\Sigseps[i]\cdot \delta[\Delta].
\end{align} 
While we do not explicitly account for sampling time offsets in our definition of the sampled process $\Seps [i]$, it can be incorporated by replacing $\Csc (t)$ with a time-shifted version, i.e., $\Csc (t-\phi)$, see also \cite[Sec. II.C]{shlezinger2019capacity}. 

It can be noted from \eqref{eqn:AsyncAutocorr} that if $\eps$ is a rational number, i.e., $\exists u,v\in \mN$, $u$ and $v$ are relatively prime,  such that $\eps=\frac{u}{v}$, then $\left\{\Seps [i]\right\}_{i\in\mZ}$ is a \ac{dt} memoryless \ac{wscs} process with the period $\Td_{u,v}=\Td\cdot v+ u \in \mN$ \cite[Sec. II.C]{shlezinger2019capacity}. For this class of processes, the \ac{rdf} can be obtained from \cite[Thm. 1]{kipnis2018distortion} as stated in Corollary  \ref{corollary:RDF_memoryless_WSCS}. On the other hand, if $\eps$ is an irrational number, then sampling becomes asynchronous and leads to a \ac{wsacs} process whose \ac{rdf} has not been characterized to date. 

\subsection{Definitions of Relevant Information Spectrum Quantities}
\label{subsec:Info_spec_definitions} 
Conventional information theoretic tools for characterizing \acp{rdf} are based on an underlying ergodicity of the source. Consequently, these techniques cannot be applied to characterize the \ac{rdf} of  of asynchronously sampled \ac{wscs} processes. To tackle this challenge, we use  information spectrum methods. 
  The information spectrum framework \cite{han2003information} can be utilized to obtain general formulas for rate limits for any arbitrary class of processes. The resulting expressions do are not restricted to specific statistical models of the considered processes, and in particular, do not require information stability or stationarity. In the following, we recall the definitions of several information-spectrum quantities used in this study, see also \cite[Def. 1.3.1-1.3.2]{han2003information}: 
\begin{definition}
		\label{def:pliminf}
		The {\em limit-inferior in probability} of a sequence of real \acp{rv} $\{\zk{k}\}_{k \in \mySet{N}}$ is defined as
	
		\begin{equation}
		\label{eqn:pliminf}
		{\rm p-}\mathop{\lim \inf}\limits_{k \rightarrow \infty} \zk{k} \triangleq \sup\left\{\alpha \in \mySet{R} \big| \mathop{\lim}\limits_{k \rightarrow \infty}\Pr \left(\zk{k} < \alpha \right) = 0   \right\}
		\triangleq \alpha_0.
		\end{equation}
\end{definition}
Hence, $\alpha_0$ is the largest real number satisfying that  $\forall \talpha < \alpha_0$ and  $\forall \mu>0$ there exists $k_0(\mu,\talpha) \in \mN$ such that $\Pr(Z_k<\talpha)<\mu$, $\forall k>k_0(\mu,\talpha)$.
	\begin{definition}
		\label{def:plimsup}
		The {\em limit-superior in probability} of a sequence of real \acp{rv} $\{\zk{k}\}_{k \in \mySet{N}}$ is defined as
		\begin{equation}
		\label{eqn:plimsup}
		{\rm p-}\mathop{\lim \sup}\limits_{k \rightarrow \infty} \zk{k} \triangleq \inf\left\{\beta  \in \mySet{R} \big| \mathop{\lim}\limits_{k \rightarrow \infty}\Pr \left(\zk{k} > \beta \right) = 0   \right\}
		\triangleq \beta_0.
		\end{equation}
	\end{definition}
Hence,  $\beta_0$ is the smallest real number satisfying that $\forall \tbeta > \beta_0$ and $\forall \mu>0$, there exists $k_0(\mu,\tbeta)\in\mN$, such that $\Pr(Z_k>\tbeta)<\mu$, $\forall k>k_0(\mu,\tbeta)$.

The notion of uniform integrability of a sequence of \acp{rv} is a basic property in probability \cite[Ch. 12]{zitkovic2015lecture}, which is not directly related to information spectrum methods. However, since it plays an important role in the information spectrum characterization of \acp{rdf}, we include its statement in the following definition:
\begin{definition}[Uniform Integrability {\cite[Def. 12.1]{zitkovic2015lecture},\cite[Eqn. (5.3.2)]{han2003information}}]
\label{def:uniform_integrability}
The sequence of real-valued random variables $\{\zk{k}\}_{k=1}^\infty$, is said to satisfy uniform integrability if
\begin{equation}
\label{Eqn:uniform_integrability}
    \mathop{\lim}\limits_{u\rightarrow \infty}\mathop{\sup}\limits_{k\geq 1}\mathop{\int}\limits_{z:|z|\geq u} \pdf{\zk{k}}\left(z\right)|z|\mathrm{d}z=0
\end{equation}
\end{definition} 

The aforementioned quantities allow characterizing the \ac{rdf} of arbitrary sources. 
Consider a general source process $\{S[i]\}_{i=1}^{\infty}$ (stationary or non-stationary) taking values from the source alphabet $S[i]\in \mS$ and a reproduction process $\{\hat{S}[i]\}_{i=1}^{\infty}$ with values from the reproduction alphabet $\hat{S}[i]\in \hat{\mS}$. It follows from \cite[Sec. 5.5]{han2003information} that for a distortion measure which satisfies the uniform
integrability criterion, i.e., that there exists a deterministic sequence  $\{r[i]\}_{i=1}^{\infty}$ such that the sequence of \acp{rv} $\{d\big(\myVec{S}^{(k)},\myVec{r}^{(k)}\big)\}_{k=1}^{\infty}$ satisfies Def. \ref{def:uniform_integrability} \cite[Pg. 336]{han2003information}, then the \ac{rdf} is expressed as \cite[Eqn. (5.4.2)]{han2003information}:
\begin{equation}
\label{eqn:general_rdf}
    R(D)=\mathop{\inf}\limits_{ \cdf{S,\hat{S}}:\bar{d}_S(\myVec{S}^{(k)},\hat{\myVec{S}}^{(k)})\leq D}\bar{I}\left(\myVec{S}^{(k)};\hat{\myVec{S}}^{(k)}\right),
\end{equation} 
where $\bar{d}_S(\myVec{S}^{(k)},\hat{\myVec{S}}^{(k)})=\mathop{\lim \sup}\limits_{k\rightarrow\infty}\frac{1}{k}\E\left\{d\left(\myVec{S}^{(k)},\hat{\myVec{S}}^{(k)}\right)\right\}$, $\cdf{S,\hat{S}}$ denotes the joint \ac{cdf} of $\{S[i]\}_{i=1}^{\infty}$ and $\{\hat{S}[i]\}_{i=1}^{\infty}$, and $\bar{I}\left(\myVec{S}^{(k)}:\hat{\myVec{S}}^{(k)}\right)$ represents the limit superior in probability of the mutual information rate of $\myVec{S}^{(k)}$ and $\Shvec^{(k)}$, given by:
\begin{equation}
\label{eqn:spec_sup_mutualinformation}
   \bar{I}\left(\myVec{S}^{(k)};\hat{\myVec{S}}^{(k)}\right)\triangleq \plimsup \frac{1}{k}\log \frac{\pdf{\myVec{S}^{(k)}|\hat{\myVec{S}}^{(k)}} \left(\myVec{S}^{(k)}|\hat{\myVec{S}}^{(k)} \right)}{\pdf{\myVec{S}^{(k)}}\left( \myVec{S}^{(k)} \right) }
\end{equation}
 
 In order to use the \ac{rdf} characterization in \eqref{eqn:general_rdf}, the distortion measure must satisfy the uniform integrability criterion. For the considered class of sources detailed in Subsection \ref{subsec:problem_formulation}, the \ac{mse} distortion satisfies this criterion, as stated in the following lemma:

\begin{lemma}
\label{lem:Uniform1}
For any real memoryless zero-mean Gaussian source $\{S[i]\}_{i=1}^{\infty}$  with bounded variance, i.e., $\exists \sigma_{\max}^2 <\infty$ such that $\E\{S^2[i]\} \leq \sigma_{\max}^2$  for all $i \in \mN$, the \ac{mse} distortion  satisfies the uniform integrability criterion.
\end{lemma}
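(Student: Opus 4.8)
The plan is to exhibit a concrete deterministic reference sequence and then control the second moment of the induced distortion random variables. Since the source is zero-mean, the natural choice is $r[i]=0$ for all $i\in\mN$, which turns the distortion random variables into $\zk{k}\triangleq d\big(\myVec{S}^{(k)},\myVec{r}^{(k)}\big)=\frac{1}{k}\sum_{i=1}^{k}S^2[i]$; these are nonnegative and, being a positive linear combination of independent squared Gaussians, admit densities $\pdf{\zk{k}}$. It then suffices to verify the condition in Def.~\ref{def:uniform_integrability} for this particular sequence $\{\zk{k}\}_{k\in\mN}$, since the uniform integrability criterion only asks for the \emph{existence} of such a reference sequence.

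First I would establish a uniform-in-$k$ bound in $L^2$ for $\zk{k}$. Writing $\E\{\zk{k}^2\}=\frac{1}{k^2}\sum_{i=1}^{k}\sum_{\ell=1}^{k}\E\{S^2[i]S^2[\ell]\}$ and using memorylessness (so that $S[i]$ and $S[\ell]$ are independent for $i\neq\ell$), the off-diagonal terms factor as $\E\{S^2[i]\}\E\{S^2[\ell]\}\le\sigma_{\max}^4$, while the $k$ diagonal terms are fourth moments of zero-mean Gaussians, $\E\{S^4[i]\}=3\big(\E\{S^2[i]\}\big)^2\le 3\sigma_{\max}^4$. Combining these, $\E\{\zk{k}^2\}\le\frac{1}{k^2}\big(3k\sigma_{\max}^4+k(k-1)\sigma_{\max}^4\big)=(1+2/k)\sigma_{\max}^4\le 3\sigma_{\max}^4$ for every $k\ge1$, hence $\sup_{k\ge1}\E\{\zk{k}^2\}\le 3\sigma_{\max}^4<\infty$.

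Then I would convert this uniform $L^2$ bound into the required uniform integrability statement. Since $\zk{k}\ge0$, the region $\{z:|z|\ge u\}$ in \eqref{Eqn:uniform_integrability} reduces to $\{z\ge u\}$, on which $|z|\le z^2/u$, so $\int_{z\ge u}\pdf{\zk{k}}(z)\,|z|\,\mathrm{d}z\le\frac{1}{u}\int_{z\ge u}\pdf{\zk{k}}(z)\,z^2\,\mathrm{d}z\le\frac{1}{u}\E\{\zk{k}^2\}\le\frac{3\sigma_{\max}^4}{u}$. Taking the supremum over $k\ge1$ and then letting $u\to\infty$ gives $\lim_{u\to\infty}\sup_{k\ge1}\int_{z:|z|\ge u}\pdf{\zk{k}}(z)\,|z|\,\mathrm{d}z=0$, which is exactly Def.~\ref{def:uniform_integrability}; thus the \ac{mse} distortion satisfies the uniform integrability criterion with the reference sequence $r[i]\equiv0$.

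The argument is essentially routine: the only point needing a little care is the bookkeeping of the cross terms, where memorylessness is invoked to split the expectation, together with the Gaussian fourth-moment identity $\E\{S^4[i]\}=3(\E\{S^2[i]\})^2$ for the diagonal terms. I do not anticipate any deeper obstacle.
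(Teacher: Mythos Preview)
Your proof is correct and follows essentially the same approach as the paper: choose the all-zero reference sequence, bound $\E\{\zk{k}^2\}$ by $3\sigma_{\max}^4$ using independence for the off-diagonal terms and the Gaussian fourth-moment identity for the diagonal ones, and conclude uniform integrability from the uniform $L^2$ bound. The only cosmetic difference is that the paper invokes a cited corollary (that $L^2$-boundedness implies uniform integrability), whereas you verify Def.~\ref{def:uniform_integrability} directly via the elementary inequality $|z|\le z^2/u$ on $\{z\ge u\}$; your version is slightly more self-contained, and your intermediate bound $(1+2/k)\sigma_{\max}^4$ is marginally sharper than the paper's blanket $3\sigma_{\max}^4$, but the substance is identical.
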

 
 \begin{proof}
 Set the deterministic sequence $\{{r}[i]\}_{i=1}^{\infty}$ to be the all-zero sequence. Under this setting and the \ac{mse} distortion, it holds that $d\big(\myVec{S}^{(k)},\myVec{r}^{(k)}\big) = \frac{1}{k}\sum_{i=1}^{k}S^2[i]$.  
 To prove the lemma, we show that the sequence of \acp{rv} $\big\{d\big(\myVec{S}^{(k)},\myVec{r}^{(k)}\big)\big\}_{k=1}^{\infty}$ has a bounded $\ell_2$ norm, which proves that it is uniformly integrable by \cite[Cor. 12.8]{zitkovic2015lecture}. The $\ell_2$ norm of $d\big(\myVec{S}^{(k)},\myVec{r}^{(k)}\big)$ satisfies 
 \begin{align}
     \E\big\{d\big(\myVec{S}^{(k)},\myVec{r}^{(k)}\big)^2\big\} &= \frac{1}{k^2}\E\left\{\sum_{i=1}^{k}S^2[i] \sum_{j=1}^{k}S^2[j]\right\} \notag \\
     &=  \frac{1}{k^2}\sum_{i=1}^{k} \sum_{j=1}^{k}\E\left\{S^2[i]S^2[j]\right\} \stackrel{(a)}{\leq} \frac{1}{k^2}\sum_{i=1}^{k} \sum_{j=1}^{k} 3 \sigma_{\max}^4 =  3\sigma_{\max}^4,
     \label{eqn:Uniform1}
 \end{align}
 where $(a)$ follows since $\E\{S^2[i]S^2[j]\} = \E\{S^2[i]\}\E\{S^2[j]\} = \sigma_{\max}^4$ for $i\neq j$ while $\E\{S^4[i]\} = 3\sigma_{\max}^4$ \cite[Ch. 5.4]{Papoulis91}. Eqn. \eqref{eqn:Uniform1} proves that $d\big(\myVec{S}^{(k)},\myVec{r}^{(k)}\big)$ is $\ell_2$-bounded by $3\sigma_{\max}^4 < \infty$ for all $k \in \mN$, which in turn implies that the \ac{mse} distortion is uniformly integrable for the source $\{S[i]\}_{i=1}^{\infty}$.
 \end{proof}

Since, as detailed in Subsection \ref{subsec:problem_formulation}, we focus in the following on memoryless zero-mean Gaussian sources, Lemma \ref{lem:Uniform1} implies that the \ac{rdf} of the source can be characterized using \eqref{eqn:general_rdf}. However, \eqref{eqn:general_rdf} is in general difficult to evaluate, and thus does not lead to a meaningful understanding of how the  \ac{rdf} of sampled \ac{wscs} sources behaves, motivating our analysis in Section \ref{subsec:rdf_Async_samp_wscs}.

	\subsection{Information Spectrum Limits}
	\label{subsec:rate_info_spec_limit} 
The following theorem originally stated in \cite[Thm. 1]{shlezinger2019capacity}, presents a fundamental result which is directly useful for the derivation of the \ac{rdf}:
\begin{theorem}
\label{thm:plim}
 \cite[Thm. 1]{shlezinger2019capacity} Let $\left\{\zkn{k}{n}\right\}_{n,k \in \mN}$ be a set of sequences of real scalar RVs satisfying two assumptions:
\begin{enumerate}[label={\em AS\arabic*}]
\item \label{itm:assm1} For every fixed $n \in \mN$, every convergent subsequence of $\left\{\zkn{k}{n}\right\}_{k \in \mN}$ converges in distribution, as $k \rightarrow \infty$, to a finite deterministic scalar. Each subsequence may converge to a different scalar.
\item \label{itm:assm2} For every fixed $k \in \mySet{N}$, the sequence $\big\{\zkn{k}{n}\big\}_{n\in \mySet{N}}$ converges uniformly in distribution, as $n \rightarrow \infty$,  to a scalar real-valued \ac{rv} $\zkeps{k}$. Specifically,  letting $\Fkn{k}{n}(\alpha)$ and $\Fkeps{k}(\alpha)$, $\alpha \in \mySet{R}$, denote the \acp{cdf} of $\zkn{k}{n}$ and of $\zkeps{k}$, respectively, then by \ref{itm:assm2} it follows that $\forall \eta>0$, there exists $n_0(\eta)$ such that for every $ n > n_0(\eta)$ 
\begin{equation*} 
			\left|\Fkn{k}{n}(\alpha) - \Fkeps{k}(\alpha)\right| < \eta, 
			\end{equation*}
for each $\alpha \in \mySet{R}$, $ k \in \mySet{N}$.
\end{enumerate}
Then, for  $\big\{\zkn{k}{n}\big\}_{n,k \in \mySet{N}}$ it holds that 
		\begin{subequations}
			\label{eqn:plim}
			\begin{eqnarray}
			\label{eqn:plima}
			{\rm p-}\mathop{\lim \inf}\limits_{k \rightarrow \infty} \zkeps{k} &=&\mathop{\lim}\limits_{n \rightarrow \infty} \Big({\rm p-}\mathop{\lim \inf}\limits_{k \rightarrow \infty} \zkn{k}{n} \Big), \\
			\label{eqn:plimb}
			{\rm p-}\mathop{\lim \sup}\limits_{k \rightarrow \infty} \zkeps{k} &=&\mathop{\lim}\limits_{n \rightarrow \infty} \Big({\rm p-}\mathop{\lim \sup}\limits_{k \rightarrow \infty} \zkn{k}{n} \Big).
			\end{eqnarray}
		\end{subequations}
\end{theorem}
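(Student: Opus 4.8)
\emph{Plan of proof.} Recall from~\ref{itm:assm2}, written in terms of upper tails, that for every $\eta>0$ there is $n_0(\eta)$ with $\big|\Pr(\zkn{k}{n}>\alpha)-\Pr(\zkeps{k}>\alpha)\big|<\eta$ for all $\alpha\in\mR$, all $k\in\mN$ and all $n>n_0(\eta)$ (this is equivalent to the stated uniform closeness of the CDFs $\Fkn{k}{n}$ and $\Fkeps{k}$). Let $\alpha_0,\beta_0$ denote the limit-inferior- and limit-superior-in-probability of $\{\zkeps{k}\}_{k\in\mN}$ from Definitions~\ref{def:pliminf} and~\ref{def:plimsup}, and $\alpha_0^{(n)},\beta_0^{(n)}$ the corresponding quantities for $\{\zkn{k}{n}\}_{k\in\mN}$ at fixed $n$. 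The plan is to prove~\eqref{eqn:plimb}, i.e.\ $\beta_0=\lim_{n\to\infty}\beta_0^{(n)}$, through the two-sided estimate $\liminf_{n\to\infty}\beta_0^{(n)}\ge\beta_0\ge\limsup_{n\to\infty}\beta_0^{(n)}$; since $\liminf\le\limsup$, this simultaneously establishes the existence of the limit over $n$ and its value, and~\eqref{eqn:plima} then follows by an entirely symmetric argument (replace $\Pr(\cdot>\beta)$ by $\Pr(\cdot<\alpha)$, reverse the order relations, and exchange the two defining characterizations). Throughout I use the characterizations recalled after Definitions~\ref{def:pliminf},~\ref{def:plimsup} --- $\Pr(\zkeps{k}>\tilde\beta)\to0$ for $\tilde\beta>\beta_0$ while $\Pr(\zkeps{k}>\tilde\beta)\not\to0$ for $\tilde\beta<\beta_0$, and likewise for each column --- together with the monotonicity of a CDF.

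\emph{Step 1 (the structural consequence of~\ref{itm:assm1}).} First I isolate the single-column fact that carries the load: \emph{for every fixed $n$ and every $b\in\mR$, if $\limsup_{k\to\infty}\Pr(\zkn{k}{n}>b)<1$ then $\beta_0^{(n)}\le b$} (dually, if $\limsup_{k\to\infty}\Pr(\zkn{k}{n}<b)<1$ then $\alpha_0^{(n)}\ge b$). To see it, suppose instead $\beta_0^{(n)}>b$ and choose $b'\in(b,\beta_0^{(n)})$; then $\Pr(\zkn{k}{n}>b')\not\to0$, so $\Pr(\zkn{k_j}{n}>b')\ge c''>0$ along some subsequence $\{k_j\}$. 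Each column is tight --- implicit in the way~\ref{itm:assm1} is used, and true in our application, where the $\zkn{k}{n}$ are bounded --- so this subsequence has a further subsequence that converges in distribution, and by~\ref{itm:assm1} its limit is a point mass at some $c_0\in\mR$; weak convergence together with $\Pr(\zkn{k_j}{n}>b')\ge c''>0$ forces $c_0\ge b'>b$, hence $\Pr(\zkn{k}{n}>b)\to1$ along that subsequence, contradicting $\limsup_{k\to\infty}\Pr(\zkn{k}{n}>b)<1$. Working with the intermediate value $b'$ keeps the argument free of any boundary ambiguity at $b$.

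\emph{Step 2 (the two-sided estimate).} For $\liminf_{n\to\infty}\beta_0^{(n)}\ge\beta_0$, fix $\beta<\beta_0$; then $\Pr(\zkeps{k}>\beta)\not\to0$, so $\Pr(\zkeps{k_j}>\beta)\ge c>0$ along a subsequence $\{k_j\}$. Apply the upper-tail form of~\ref{itm:assm2} with $\eta\triangleq c/2$: for all $n>n_0(\eta)$ and all $k$, $\Pr(\zkn{k}{n}>\beta)>\Pr(\zkeps{k}>\beta)-\eta$, hence $\Pr(\zkn{k_j}{n}>\beta)>c/2$ for all $j$, so $\limsup_{k\to\infty}\Pr(\zkn{k}{n}>\beta)\ge c/2>0$ and therefore $\beta_0^{(n)}\ge\beta$ for every $n>n_0(\eta)$; letting $\beta\uparrow\beta_0$ gives this bound --- here~\ref{itm:assm1} was not needed, only~\ref{itm:assm2} and CDF monotonicity. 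For $\limsup_{n\to\infty}\beta_0^{(n)}\le\beta_0$, fix $\beta>\beta_0$, so $\Pr(\zkeps{k}>\beta)\to0$, and fix any $\eta\in(0,1)$; by~\ref{itm:assm2}, for $n>n_0(\eta)$ and all $k$, $\Pr(\zkn{k}{n}>\beta)<\Pr(\zkeps{k}>\beta)+\eta$, whence $\limsup_{k\to\infty}\Pr(\zkn{k}{n}>\beta)\le\eta<1$, and Step~1 forces $\beta_0^{(n)}\le\beta$ for every such $n$; letting $\beta\downarrow\beta_0$ finishes the estimate. Combining the two bounds gives $\lim_{n\to\infty}\beta_0^{(n)}=\beta_0$, i.e.~\eqref{eqn:plimb}, and the mirror-image computation gives~\eqref{eqn:plima}.

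\emph{Where the difficulty lies.} The substantive step is Step~1, and this is exactly where~\ref{itm:assm1} is indispensable: without it a column $\{\zkn{k}{n}\}_k$ could, for all large $k$, keep a fixed fraction $\eta\in(0,1)$ of its probability mass arbitrarily far out on the real line, so that $\beta_0^{(n)}$ would \emph{not} be pinned down by the easily estimated quantity $\limsup_{k}\Pr(\zkn{k}{n}>\beta)$ and the upper bound in Step~2 would collapse; \ref{itm:assm1}, which forces every subsequential distributional limit of a column to be a point mass, is precisely what makes $[\alpha_0^{(n)},\beta_0^{(n)}]$ the tight ``spectral support'' of that column. The only thing used beyond the literal hypotheses is the tightness of each individual column, invoked when passing to a weakly convergent subsequence in Step~1 --- a mild property that holds in the setting where the theorem is applied.
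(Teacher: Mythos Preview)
Your proof is correct and takes a genuinely different route from the paper's. The paper rewrites ${\rm p}\text{-}\limsup_k Z_k$ as $\inf\{\beta:\liminf_k F_k(\beta)=1\}$ and then proves two lemmas: first, by the uniform convergence in \ref{itm:assm2} and an interchange-of-limits argument (Rudin, Thm.~7.11), $\liminf_k F_k(\beta)=\lim_n\liminf_k\tilde F_{k,n}(\beta)$ for every $\beta$; second, by \ref{itm:assm1}, $\liminf_k\tilde F_{k,n}(\beta)$ is a $\{0,1\}$-valued step function with jump at some $\zeta_n$, so ${\rm p}\text{-}\limsup_k\tilde Z_{k,n}=\zeta_n$ and the existence of $\lim_n\zeta_n$ drops out of the first lemma. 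You instead run a direct two-sided squeeze on $\beta_0^{(n)}$ via tail probabilities, isolating the role of \ref{itm:assm1} into your Step~1 (the implication $\limsup_k\Pr(\tilde Z_{k,n}>b)<1\Rightarrow\beta_0^{(n)}\le b$), which is precisely the operational content of the paper's step-function claim. Your decomposition makes transparent that the bound $\liminf_n\beta_0^{(n)}\ge\beta_0$ needs only \ref{itm:assm2}, while \ref{itm:assm1} is essential only for the reverse inequality; the paper's route hides this asymmetry inside the limit-exchange machinery. Your caveat about tightness is honest and applies equally to the paper: its assertion that $\liminf_k\tilde F_{k,n}(\beta)\in\{0,1\}$ likewise needs to pass to a weakly convergent subsequence of $\{\tilde Z_{k,n}\}_k$, which is exactly where tightness enters.
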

\begin{proof}
In Appendix \ref{app:Proof1} we explicitly prove Eqn. \eqref{eqn:plimb}. This complements the proof in \cite[Appendix A]{shlezinger2019capacity} which explicitly considers only \eqref{eqn:plima}.
\end{proof}

	\section{Rate-Distortion Characterization for Sampled CT WSCS Gaussian Sources}
	\label{subsec:rdf_Async_samp_wscs} 
	
\subsection{Main Result}
\label{subsec:MainResult}
Using the information spectrum based characterization of the \ac{rdf}  \eqref{eqn:general_rdf} combined with the characterization of the limit of a sequence of information spectrum quantities in Theorem \ref{thm:plim}, we now analyze the \ac{rdf} of asynchronously sampled \ac{wscs} processes. Our analysis is based on formulating a sequence of synchronously sampled \ac{wscs} processes, whose \ac{rdf} is given in Corollary \ref{corollary:RDF_memoryless_WSCS}. Then, we show that the \ac{rdf} of the  asynchronously sampled process can be obtained as the limit superior of the computable \acp{rdf} of the sequence of synchronously sampled processes.
We begin by letting $\myEps_n \triangleq \frac{\lfloor n \cdot \myEps \rfloor}{n}$ for $n\in \mN$ and defining a Gaussian source process $\Sn[i]=\Sc\left(\frac{i\cdot \Tc}{p+\myEpsn}\right)$. From the discussion in Subsection \ref{subsec:problem_formulation} (see also \cite[Sec. II.C]{shlezinger2019capacity}), it follows that since $\myEpsn$ is rational, $\Sn [i]$ is a \ac{wscs} process and its period is given by $p_n=p\cdot n +\lfloor n\cdot \myEps\rfloor$. Accordingly, the periodic correlation function of $\Sn[i]$  can be obtained similarly to \eqref{eqn:AsyncAutocorr} as: 
	\begin{equation}
	    \rsn[i,\Delta]=\E\bigg\{\Sn[i]\Sn[i+\Delta]\bigg\}=\Csc \left(\frac{i\cdot \Tc}{p+\myEpsn}\right)\cdot \delta [\Delta].
	\end{equation}
Due to cyclostationarity of $S_n[i]$, we have that $\rsn [i,\Delta]=\rsn [i+p_n,\Delta]$, $\forall i,\Delta \in \mZ$, and let $\Sigsn [i]\triangleq r_{S_n} [i,0]$ denote its periodic variance.

We next restate Corollary \ref{corollary:RDF_memoryless_WSCS} in terms of $\myEpsn$ as follows:
\begin{proposition}
\label{prop:RDF_WSCS}
Consider a \ac{dt}, memoryless, zero-mean, \ac{wscs} Gaussian random process $\Sn[i]$ with a variance $\Sigsn [i]$, obtained from $\Sc(t)$ by sampling with a sampling interval of $T_s (n)=\frac{\Tc}{\Td +\myEpsn}$. Let $\myVec{S}^{p_n}[i]$ denote the memoryless stationary multivariate random process obtained by applying the \ac{dcd} to $\Sn[i]$ and let $\Sigsn[m]$, $m=1,2, \ldots, p_n$, denote the variance of the $m^{th}$ component of $\myVec{S}^{p_n}[i]$. The rate-distortion function is given by:

\begin{subequations}
\label{rdf_n}

\begin{align}
\label{eqn:Rate_WSCS_u,v}
    R_{n}(D)= \begin{cases}
  \frac{1}{2\Td_{n}} \sum\limits_{m=1}^{\Td_{n}}\left(\log \left(\frac{\Sigsn[m]}{D_n[m]}\right)\right) & D\leq \frac{1}{\Td_{n}}\sum\limits_{m=1}^{\Td_{n}}\Sigsn[m]\\
   0 & D> \frac{1}{\Td_{n}}\sum\limits_{m=1}^{\Td_{n}}\Sigsn[m]
    \end{cases},
\end{align}
where for  $D\leq \frac{1}{\Td_{n}}\sum\limits_{m=1}^{\Td_{n}}\Sigsn[m]\;$ we let $D_n[m]\triangleq\min\big\{\Sigsn[m],\theta_n\big\}$, and $\theta_n$ is selected $s.t.$ 
\begin{equation}
\label{eqn:Distortion_WSCS_u,v}
   D=\frac{1}{\Td_{n}} \sum\limits_{m=1}^{\Td_{n}} D_n[m].
\end{equation}

\end{subequations}
\end{proposition}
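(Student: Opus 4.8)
The plan is to obtain Proposition~\ref{prop:RDF_WSCS} as a direct specialization of Corollary~\ref{corollary:RDF_memoryless_WSCS} to the process $\Sn[i]$, once the hypotheses of that corollary have been checked for $\Sn[i]$. First I would recall from Subsection~\ref{subsec:problem_formulation} (see also \cite[Sec. II.C]{shlezinger2019capacity}) that because $\myEpsn=\lfloor n\myEps\rfloor/n$ is rational, the sampled process $\Sn[i]=\Sc\!\left(\frac{i\Tc}{\Td+\myEpsn}\right)$ is a \ac{dt} \ac{wscs} Gaussian process with integer period $\Td_n=\Td\cdot n+\lfloor n\myEps\rfloor\in\mN$; it is zero-mean since $\Sc(t)$ is, and it is memoryless since the maximal correlation length satisfies $\tau_m<\Tsamp\le T_s(n)$, so that $\rsn[i,\Delta]=\Sigsn[i]\,\delta[\Delta]$. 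Finally, continuity of $\Csc(t)$ together with its being bounded makes the periodic variance sequence $\Sigsn[i]=\Csc\!\left(\frac{i\Tc}{\Td+\myEpsn}\right)$ finite, so that all quantities appearing in Corollary~\ref{corollary:RDF_memoryless_WSCS} are well-defined.

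Having verified the hypotheses, the second step is purely a substitution: Corollary~\ref{corollary:RDF_memoryless_WSCS} applies with $N_p\leftarrow\Td_n$ and $\sigma_m^2\leftarrow\Sigsn[m]$, where $\Sigsn[m]$, $m=1,\ldots,\Td_n$, is the variance of the $m$-th component of the \ac{dcd} vector process $\myVec{S}^{p_n}[i]$. Plugging these into \eqref{eqn:Rate_WSCS} and \eqref{eqn:Distortion_WSCS} yields \eqref{eqn:Rate_WSCS_u,v} and \eqref{eqn:Distortion_WSCS_u,v} verbatim, with the water-filling level $\theta_n$ and the per-component distortions $D_n[m]=\min\{\Sigsn[m],\theta_n\}$ taking the roles of $\theta$ and $D_m$. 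Since the \ac{dcd} representation requires blocklengths that are integer multiples of $\Td_n$, I would close by invoking Lemma~\ref{lemma:arbit_blocklength} to assert that this restriction is without loss of optimality, so the expression obtained is the \ac{rdf} of $\Sn[i]$ for arbitrary blocklengths.

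I do not expect a genuine obstacle, as the proposition is essentially a change of notation relative to Corollary~\ref{corollary:RDF_memoryless_WSCS}; the only point that warrants a line of justification is that the scalar memoryless property of $\Sn[i]$ transfers to the \ac{dcd} components --- i.e., that decimating a white sequence into $\Td_n$ interleaved subsequences produces jointly white components --- so that their \ac{psd} matrix is the constant diagonal matrix $\mathrm{diag}\big(\Sigsn[1],\ldots,\Sigsn[\Td_n]\big)$ and the frequency integrals in \cite[Thm. 1]{kipnis2018distortion} reduce to the finite sums appearing in \eqref{rdf_n}. This is exactly the reasoning already used in the proof of Corollary~\ref{corollary:RDF_memoryless_WSCS}, so it can simply be referenced.
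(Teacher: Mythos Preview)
Your proposal is correct and matches the paper's approach: the paper introduces Proposition~\ref{prop:RDF_WSCS} with the sentence ``We next restate Corollary~\ref{corollary:RDF_memoryless_WSCS} in terms of $\myEpsn$ as follows'' and gives no separate proof, so it is indeed just a substitution of $N_p\leftarrow p_n$, $\sigma_m^2\leftarrow\Sigsn[m]$ into Corollary~\ref{corollary:RDF_memoryless_WSCS}. Your verification that $\tau_m<\Tsamp\le T_s(n)$ (since $\myEpsn\le\myEps$) so that $\Sn[i]$ is memoryless, and your reference to Lemma~\ref{lemma:arbit_blocklength}, are more explicit than what the paper provides but entirely in the same spirit.
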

We recall that the \ac{rdf} of $S_n[i]$ is characterized in Corollary \ref{corollary:RDF_memoryless_WSCS} via the \ac{rdf} of the multivariate stationary process $\myVec{S}_{n}^{(p_n)}[i]$ obtained via a $p_n$-dimensional \ac{dcd} applied to $S_n[i]$. Next, we recall that the relationship between the source process $\myVec{S}_{n}^{(p_n)}[i]$ and the optimal reconstruction process, denoted by $\hat{\myVec{S}}_{n}^{(p_n)}[i]$, is characterized in \cite[Ch. 10.3.3]{cover2006elements} via a linear, multivariate, time-invariant backward channel with a $p_n \times 1$ additive  vector noise process $\myVec{W}_{n}^{(p_n)}[i]$, and is given by:
\begin{equation}
\label{eqn:tst_chan_DCD_vct}
    \myVec{S}_{n}^{(p_n)}[i]=\hat{\myVec{S}}_n^{(p_n)}[i] + \myVec{W}_n^{(p_n)}[i], \quad i\in \mN.
\end{equation}
It also follows from \cite[Sec. 10.3.3]{cover2006elements} that for the \ac{iid}  Gaussian multivariate process whose entries are independent and distributed via  $\big(\myVec{S}_{n}^{(p_n)}[i]\big)_m\sim \mN(0,\Sigsn[m])$, $m \in \{1,2,\ldots,p_n\}$, the optimal reconstruction vector process $\hat{\myVec{S}}_n^{(p_n)}[i]$ and the corresponding noise vector process $\myVec{W}_n^{(p_n)}[i]$ each follow a multivariate Gaussian distribution:
\begin{equation*}
    \hat{\myVec{S}}_n^{(p_n)}[i]\sim \mN\left({\bf 0},  \left[\begin{matrix}
    \Sigshn[1] & \cdots & 0\\
    \vdots & \ddots & \vdots\\
    0 & \cdots & \Sigshn[p_n]
    \end{matrix}\right]\right) \quad \mathrm{and} \quad  \myVec{W}_n^{(p_n)}[i]\sim \mN\left({\bf 0},  \left[\begin{matrix}
   D_n[1] & \cdots & 0\\
    \vdots & \ddots & \vdots\\
    0 & \cdots & D_n[p_n]
    \end{matrix}\right]\right),
\end{equation*}
where $D_n[m]\triangleq \min \left\{\Sigsn[m],\theta_n\right\}$; $\theta_n$ denotes the reverse waterfilling threshold defined in Prop. \ref{prop:RDF_WSCS} for the index $n$, and is selected such that $D=\frac{1}{p_n} \sum\limits_{m=1}^{p_n} D_n[m]$. The optimal reconstruction process, $\hat{\myVec{S}}_n^{(p_n)}[i]$ and the noise process $ \myVec{W}_n^{(p_n)}[i]$ are mutually independent, and for each  $m \in \{1,2,\ldots,p_n\}$ it holds that $\E\left\{\left(S_{n}^{(p_n)}[i]-\hat{S}_n^{(p_n)}[i]\right)_m^2\right\}=D_n[m]$, see \cite[Ch. 10.3.2-10.3.3]{cover2006elements}. The multivariate relationship between stationary processes in \eqref{eqn:tst_chan_DCD_vct} can be transformed into an equivalent linear relationship between cyclostationary Gaussian memoryless processes via the inverse \ac{dcd} transformation \cite[Sec 17.2]{giannakis1998cyclostationary} applied to each of the processes, resulting in:
\begin{equation}
\label{eqn:testchannelrdf_synch}
    \Sn[i]=\hat{S}_n[i]+W_n[i] \quad i\in\mN.
\end{equation}

We are now ready to state our main result, which is the \ac{rdf} of the asynchronously sampled \ac{dt} source $\Seps[i], \eps \not \in \mySet{Q}$, in the low \ac{mse} regime, i.e., at a given distortion $D$ which is not larger than the source variance. The \ac{rdf} is stated in the following theorem, which  applies to both  synchronous sampling as well as asynchronous sampling:
\begin{theorem}
\label{Thm:rate_WSACS}
Consider a \ac{dt} source $\{S_{\eps}[i]\}_{i=1}^{\infty}$ obtained by sampling a \ac{ct} \ac{wscs} source, whose period of statistics is $\Tc$, at intervals $\Tsamp$. Then, for any distortion constraint $D$ such that $D< \mathop{\min}\limits_{0\leq t\leq \Tc}\Csc(t)$ and any  $\eps \in [0,1)$, the \ac{rdf} $R_\eps(D)$ for compressing $\{S_{\eps}[i]\}_{i=1}^{\infty}$ can be obtained as the limit:
\begin{equation}
\label{eqn:rdf_asyncsampling}
    R_\eps (D)=\mathop{\lim \sup}\limits_{n\rightarrow \infty} R_n(D),
\end{equation}
where $R_n(D)$ is defined Prop. \ref{prop:RDF_WSCS}.
\end{theorem}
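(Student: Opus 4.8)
The strategy is to verify the two hypotheses \ref{itm:assm1} and \ref{itm:assm2} of Theorem \ref{thm:plim} for a carefully chosen family of information-spectrum random variables, and then to combine the conclusion of that theorem with the information-spectrum \ac{rdf} formula \eqref{eqn:general_rdf}. Concretely, fix the distortion level $D < \min_{0\le t\le \Tc}\Csc(t)$. For each $n$, let $\hat{S}_n[i]$ and $W_n[i]$ be the optimal reconstruction process and backward-channel noise associated with $S_n[i]$ via \eqref{eqn:testchannelrdf_synch}, as described after Prop. \ref{prop:RDF_WSCS}, and for the asynchronous case let $\hat{S}_\eps[i]$, $W_\eps[i]$ be the analogous objects for $S_\eps[i]$ obtained by the same reverse-waterfilling construction applied pointwise in time (this is well-defined because, in the low-distortion regime, $\theta \le \min_t \Csc(t)$, so $D_\eps[i] = \theta$ for every $i$ and the construction degenerates to adding independent $\mN(0,\theta)$ noise). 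Define $\zkn{k}{n} \triangleq \frac{1}{k}\log\frac{\pdf{\myVec{S}_n^{(k)}\mid\hat{\myVec{S}}_n^{(k)}}}{\pdf{\myVec{S}_n^{(k)}}}$ evaluated along the source/reconstruction pair, and $\zk{k}$ the same quantity built from $S_\eps$. Because all processes here are memoryless Gaussian with per-coordinate variances $\Sigsn[i]$ (resp. $\Csc(i\Tc/(p+\eps))$), $\zkn{k}{n}$ is an average of $k$ independent terms, each a simple function of a Gaussian, so its mean converges and its variance vanishes as $k\to\infty$; this is exactly what delivers \ref{itm:assm1} — every convergent subsequence converges in probability to the deterministic number $R_n(D)$, by the weak law of large numbers for independent-but-not-identically-distributed summands together with the periodicity of the variance sequence. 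Symmetrically $\plimsup_{k\to\infty}\zk{k}$ equals the information-spectrum mutual-information rate appearing in \eqref{eqn:general_rdf}, which is $R_\eps(D)$.

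The heart of the argument is verifying \ref{itm:assm2}: for each fixed blocklength $k$, the \ac{cdf} of $\zkn{k}{n}$ converges to that of $\zk{k}$ uniformly in $\alpha$ and (crucially) uniformly in $k$, as $n\to\infty$. The mechanism is that $\myEpsn = \lfloor n\eps\rfloor/n \to \eps$, and by the assumed continuity of $\Csc(\cdot)$ on the compact interval $[0,\Tc]$, the sampled variance sequences converge: $\Sigsn[i] = \Csc\big(i\Tc/(p+\myEpsn)\big) \to \Csc\big(i\Tc/(p+\eps)\big) = \Sigseps[i]$, and this convergence is uniform over $i$ because $\Csc$ is uniformly continuous and the arguments $i\Tc/(p+\cdot)$ move by a controlled amount. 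Since $\zkn{k}{n}$ is, coordinatewise, an explicit smooth function of the ratio $S^2[i]/\Sigsn[i]$ (with $S[i]$ itself having variance $\Sigsn[i]$), one can couple the two processes on a common probability space — e.g., write $S_n[i] = \sqrt{\Sigsn[i]}\,G[i]$ and $S_\eps[i] = \sqrt{\Sigseps[i]}\,G[i]$ for a fixed i.i.d.\ standard-Gaussian sequence $G[i]$ — so that $|\zkn{k}{n} - \zk{k}|$ is bounded, uniformly in $k$, by $\sup_i |h(\Sigsn[i]) - h(\Sigseps[i])|$ for a fixed locally-Lipschitz function $h$, plus an i.i.d.-averaged term that concentrates. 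Feeding $\sup_i|\Sigsn[i]-\Sigseps[i]|\to 0$ through this bound gives that $\zkn{k}{n}\to\zk{k}$ in probability uniformly in $k$, which upgrades to the uniform-\ac{cdf} statement required by \ref{itm:assm2}. I would also invoke Lemma \ref{lem:Uniform1} to confirm that the \ac{mse} distortion is uniformly integrable for all these bounded-variance Gaussian sources, so that \eqref{eqn:general_rdf} legitimately applies to both $S_\eps$ and each $S_n$, and Theorem \ref{thm:ThmScale1} / Lemma \ref{lemma:arbit_blocklength} to move freely between the scalar \ac{wscs} description and the $p_n$-dimensional \ac{dcd} description underlying Prop. \ref{prop:RDF_WSCS}.

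With both hypotheses in hand, Theorem \ref{thm:plim}\eqref{eqn:plimb} yields $\plimsup_{k\to\infty}\zk{k} = \lim_{n\to\infty}\plimsup_{k\to\infty}\zkn{k}{n}$; the left side is $R_\eps(D)$ and each inner term on the right is $R_n(D)$, so $R_\eps(D) = \lim_{n\to\infty} R_n(D)$ whenever that limit exists. The remaining subtlety is that the limit need not exist — $R_n(D)$ can oscillate with $n$ — so one must argue that \eqref{eqn:rdf_asyncsampling} holds with $\limsup$ rather than $\lim$. Here I would either restate Theorem \ref{thm:plim} so that its conclusion is phrased via $\limsup$ (the proof of \eqref{eqn:plimb} deferred to Appendix \ref{app:Proof1} should give this directly, since a limit-superior in probability is defined through an $\inf$ and the outer limit arising from the uniform-in-$k$ control is genuinely a $\limsup$), or argue separately that the information-spectrum quantity $\bar I(\myVec{S}_\eps^{(k)};\hat{\myVec{S}}_\eps^{(k)})$ is squeezed between $\liminf_n R_n(D)$ and $\limsup_n R_n(D)$ and that the optimizing reconstruction is itself the limit of the synchronous optimal reconstructions, forcing equality with the $\limsup$. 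I expect this last point — rigorously justifying the $\limsup$ and ruling out that some cleverer, non-product-form reconstruction for the \ac{wsacs} source beats every synchronous approximation — to be the main obstacle; the \ac{cdf}-uniformity estimate in \ref{itm:assm2}, while the most technical computation, is essentially a routine consequence of uniform continuity of $\Csc$ and the Gaussian coupling.
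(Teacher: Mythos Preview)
Your high-level architecture is right --- verify \ref{itm:assm1}--\ref{itm:assm2} for suitable information-density rates and invoke Theorem~\ref{thm:plim} --- and your coupling argument (writing $S_n[i]=\sqrt{\Sigsn[i]}\,G[i]$, $S_\eps[i]=\sqrt{\Sigseps[i]}\,G[i]$) is a clean alternative route to \ref{itm:assm2} compared with the paper's \ac{pdf}-convergence argument. But there is a genuine gap in the way you close the loop.

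You assert that ``$\plimsup_{k\to\infty}\zk{k}$ equals the information-spectrum mutual-information rate appearing in \eqref{eqn:general_rdf}, which is $R_\eps(D)$.'' This conflates two different objects. The quantity $\plimsup_k \zk{k}$ is $\bar{I}(\Sepsk;\Shepsk)$ for the \emph{one} reconstruction you wrote down (pointwise reverse-waterfilling), whereas $R_\eps(D)$ in \eqref{eqn:general_rdf} is the \emph{infimum} of $\bar{I}$ over all joint laws meeting the distortion constraint. So Theorem~\ref{thm:plim} applied to your $\zkn{k}{n}$ and $\zk{k}$ gives only
\[
\bar{I}\big(\Sepsk;\Shepsk\big)\;=\;\lim_{n\to\infty}\Big(\plimsup_k \zkn{k}{n}\Big)\;=\;\lim_{n\to\infty}R_n(D)
\]
(along a subsequence where the limit exists), which yields $R_\eps(D)\le \limsup_n R_n(D)$ immediately, but says nothing about the reverse inequality. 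You flag this yourself at the end (``ruling out that some cleverer, non-product-form reconstruction\ldots''), but you do not propose a mechanism.

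The paper handles this by proving the two inequalities with different tools. The direction $R_\eps(D)\le\limsup_n R_n(D)$ goes essentially as you outline (Theorem~\ref{thm:plim} along a convergent subsequence of optimal synchronous laws, then infimum~$\le$~any particular value). The converse $R_\eps(D)\ge\limsup_n R_n(D)$ is obtained by a separate source-coding converse: for any code sequence achieving $(R_\eps,D)$, the Markov chain $\Sepsk\to J_k\to\Shepsk$ and data processing give $R_\eps\ge \limsup_k\frac{1}{k}I(\Sepsk;\Shepsk)$; one then uses uniform convergence of the information densities (your \ref{itm:assm2}) to swap limits in $n$ and $k$ at the level of \emph{expectations}, and finally lower-bounds by $\inf_{\cdf{}}\frac{1}{k}I(\cdot;\cdot)=R_n(D)$ inside the $n$-limit. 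This second half is what your plan is missing; without it the $\limsup$ identification cannot be completed, and the vaguer alternatives you mention (restating Theorem~\ref{thm:plim} with a $\limsup$, or a squeeze argument) do not by themselves supply the needed converse against arbitrary reconstructions.
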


\begin{proof}
The detailed proof is provided in Appendix \ref{app:Proof2}. Here, we give a brief outline: The derivation of the \ac{rdf} with asynchronous sampling follows three steps: First, we note that sampling rate  $\Tsamp(n)=\frac{\Tc}{p+\myEpsn}$ used to obtain  the sequence of \ac{dt} \ac{wscs} sources $\{S_n[i]\}_{i\in\mN, n\in\mN}$  asymptotically approaches the sampling interval for irrational $\myEps$ given by $\Tsamp=\frac{\Tc}{p+\myEps}$ as $n\rightarrow\infty$. We define a sequence of rational numbers $\myEpsn$ $s.t.$ $\myEpsn\rightarrow \myEps$ as $n\rightarrow\infty$; Building upon this insight, we prove that the \ac{rdf} with $\Tsamp$ can be stated as a double limit where the outer limit is with respect to the blocklength and the inner limit is with respect to $\myEpsn$. Lastly, we use Theorem \ref{thm:plim} to show that the limits can be exchanged, obtaining a limit of expressions which are computable.
\end{proof}

\begin{remark}
\label{stationary_noise}
{Theorem \ref{Thm:rate_WSACS} focuses on the low distortion regime, defined as the values of $D$ satisfying $D< \mathop{\min}\limits_{0\leq t\leq\Tc}\Csc(t)$. This implies that $\theta_n$ has to be smaller than $\mathop{\min}\limits_{0\leq t\leq\Tc}\Csc(t)$; hence, from Prop. \ref{prop:RDF_WSCS} it follows that for the corresponding stationary noise vector $\myVec{W}_n^{(p_n)}[i]$ in \eqref{eqn:tst_chan_DCD_vct},  $D_n[m]=\min \left\{\Sigsn[m], \theta_n \right\}=\theta_n$ and $D=\frac{1}{p_n}\mathop{\sum}\limits_{m=1}^{p_n}D_n[m]=\theta_n=D_n[m]$. We note that since every element of the vector $\left(\myVec{W}_n^{(p_n)}[i]\right)_m$ has the same variance $D_n[m]=D$ for all $n\in \mN$ and $m=1,2,\ldots, p_n$ then by applying the inverse \ac{dcd} to $\myVec{W}_n^{(p_n)}[i]$, the resulting  scalar \ac{dt} process $W_n[i]$ is {\em wide sense stationary}; and in fact \ac{iid} with $\E\left\{\big(W_n[i] \big)^2  \right\}=D$}.
\end{remark}
\subsection{Discussion and Relationship with Capacity Derivation in \cite{shlezinger2019capacity}}
\label{discussion}

Theorem \ref{Thm:rate_WSACS} provides a meaningful and computable characterization for the \ac{rdf} of sampled \ac{wscs} signals. 
We note that the proof of the main theorem uses some of the steps used in our recent study on the capacity of memoryless channels with sampled \ac{ct} \ac{wscs} Gaussian noise \cite{shlezinger2019capacity}. It should be emphasized, however, that there are several fundamental differences between the two studies, which require the introduction of new treatments and derivations original to the current work. 
First, it is important to note that in the study on capacity, a physical channel model exists, and therefore the conditional \ac{pdf} of the output signal given the input signal can be characterized explicitly for both synchronous sampling and asynchronous sampling for {\em every} input distribution. For the current study of the \ac{rdf} we note that the relationship \eqref{eqn:testchannelrdf_synch}, commonly referred to as the backward channel \cite{zamir2008achieving}, \cite[Ch.10.3.2]{cover2006elements}, characterizes the relationship between the source process and the {\em optimal} reproduction process, and hence is valid only for synchronous sampling and the optimal reproduction process. Consequently, in the \ac{rdf} analysis the limiting relationship \eqref{eqn:testchannelrdf_synch} as $n\rightarrow  \infty$ is not even known to exist and, in fact, we can show it exists under a rather strict condition on the distortion (namely, the condition $D< \mathop{\min}\limits_{0\leq t\leq \Tc}\Csc(t)$ stated in Theorem \ref{Thm:rate_WSACS}). In particular, to prove the statement in Theorem \ref{Thm:rate_WSACS}, we had to show that from the backward channel  \eqref{eqn:testchannelrdf_synch}, we can define an asymptotic relationship, as $n\rightarrow \infty$, which corresponds to the asynchronously sampled source process, denoted by $\Seps[i]$, and relates $\Seps[i]$ with its optimal reconstruction process $\Sheps[i]$. This is done by showing that the \acp{pdf} for the reproduction process $\hat{S}_n[i]$ and noise process $W_n[i]$ from \eqref{eqn:testchannelrdf_synch}, each converge uniformly as $n\rightarrow\infty$ to a respective limiting \ac{pdf}, which has to be defined as well. This enabled us to relate the \acp{rdf} for the synchronous sampling and for the asynchronous sampling cases using Theorem \ref{thm:plim}, eventually leading to \eqref{eqn:rdf_asyncsampling}. Accordingly, 
in our detailed proof of Theorem \ref{Thm:rate_WSACS} given in Appendix \ref{app:Proof2}, Lemmas \ref{lem:tightness} and \ref{lem:rdflowBound} as well as a significant part of Lemma \ref{lem:AsyncZk} are largely new, addressing the special aspects of the proof arising from the fundamental differences between current setup and the setup in \cite{shlezinger2019capacity}, while the derivations of Lemmas \ref{app:proof2A}  and \ref{lem:rdfUpBound} follow similarly to \cite[Lemma B.1]{shlezinger2019capacity} and \cite[Lemma  B.5]{shlezinger2019capacity}, respectively, and parts of Lemma \ref{lem:AsyncZk} coincide with \cite[Lemma B.2]{shlezinger2019capacity}.
 

	\section{Numerical Examples}
	\label{sec:Simulations}
	
	In this section we demonstrate the insights arising from our \ac{rdf} characterization via numerical examples. Recalling that Theorem \ref{Thm:rate_WSACS} states the \ac{rdf} for  asynchronously sampled \ac{ct} \ac{wscs} Gaussian process, $R_\eps(D)$, as the limit supremum of a sequence of \acp{rdf} corresponding to \ac{dt} memoryless \ac{wscs} Gaussian source processes $\left\{R_n(D)\right\}_{n \in \mN}$, we first consider the convergence of $\{R_n(D)\}_{n \in \mySet{N}}$ in Subsection \ref{RDF_VS_n}. Next, in Subsection \ref{RDF_VS_FREQ} we study the variation of the \ac{rdf} of the sampled \ac{ct} process due to changes in the sampling rate and in the sampling time offset.
	
	Similarly to \cite[Sec. IV]{shlezinger2019capacity}, define a periodic continuous pulse function, denoted by $\Pi_{\DC,\Trise}(t)$, with equal rise/fall time $\Trise = 0.01$, duty cycle $\DC \in [0,0.98]$, and period of $1$, i.e., $ \Pi_{\DC,\Trise}(t+1) = \Pi_{\DC,\Trise}(t)$ for all $t \in \mySet{R}$. Specifically, for $t \in [0,1)$ the function $\Pi_{\DC,\Trise}(t)$ is given by 
	
	\begin{equation}
	\label{eqn:PerWaveFunc}
	\Pi_{\DC,\Trise}(t) = \begin{cases}
	\frac{t}{\Trise} & t \in [0,\Trise] \\
	1	&				t \in (\Trise, \DC+\Trise ) \\
	1 - \frac{t-\DC-\Trise}{\Trise} & t \in[\DC +\Trise, \DC +2\cdot \Trise] \\
	0 & t\in (\DC +2\cdot \Trise, 1).
	\end{cases}
	\end{equation}
	In the following, we model the time varying variance of the \ac{wscs} source $\Csc(t)$ to be a linear periodic function of $\Pi_{\DC,\Trise}(t)$. To that aim, we 
	define a time offset between the first sample and the rise start time of the periodic continuous pulse function; we denote the time offset by $\phi \in [0,1)$. This corresponds to the sampling time offset normalized to the period $\Tc$. The variance of  $\Sc(t)$ is periodic function with period $\Tc$ which is given by
	\begin{equation}
	\label{eqn:ScVar}
	\Csc(t) = 0.2 + 4.8 \cdot \Pi_{\DC,\Trise}\left(\frac{t}{\Tc} - \phi\right), \qquad t \in [0, \Tc),
	\end{equation}
	with period of $\Tc = 5$ $\mu$secs.

 	\subsection{Convergence of $R_n(D)$ in $n$}
	\label{RDF_VS_n}
		\begin{figure}
	    \centering
	    \begin{minipage}{0.5\textwidth}
			{\includegraphics[width=\columnwidth]{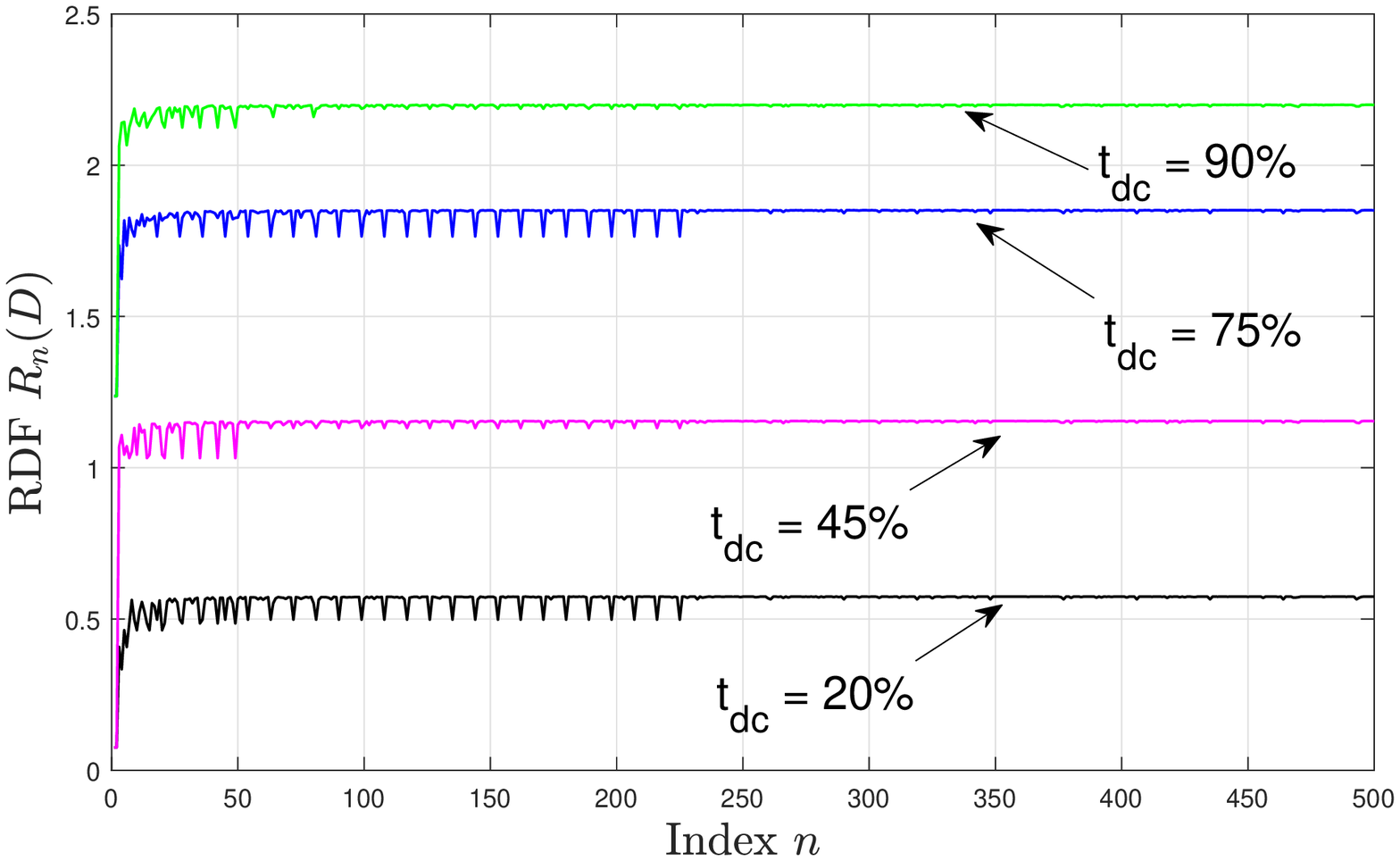}}
			\caption{$R_n(D)$ versus $n$;  offset $\phi=0$.
			}
			\label{fig:R_vs_n_0}		
		\end{minipage}
		\begin{minipage}{0.5\textwidth}
			{\includegraphics[width=\columnwidth]{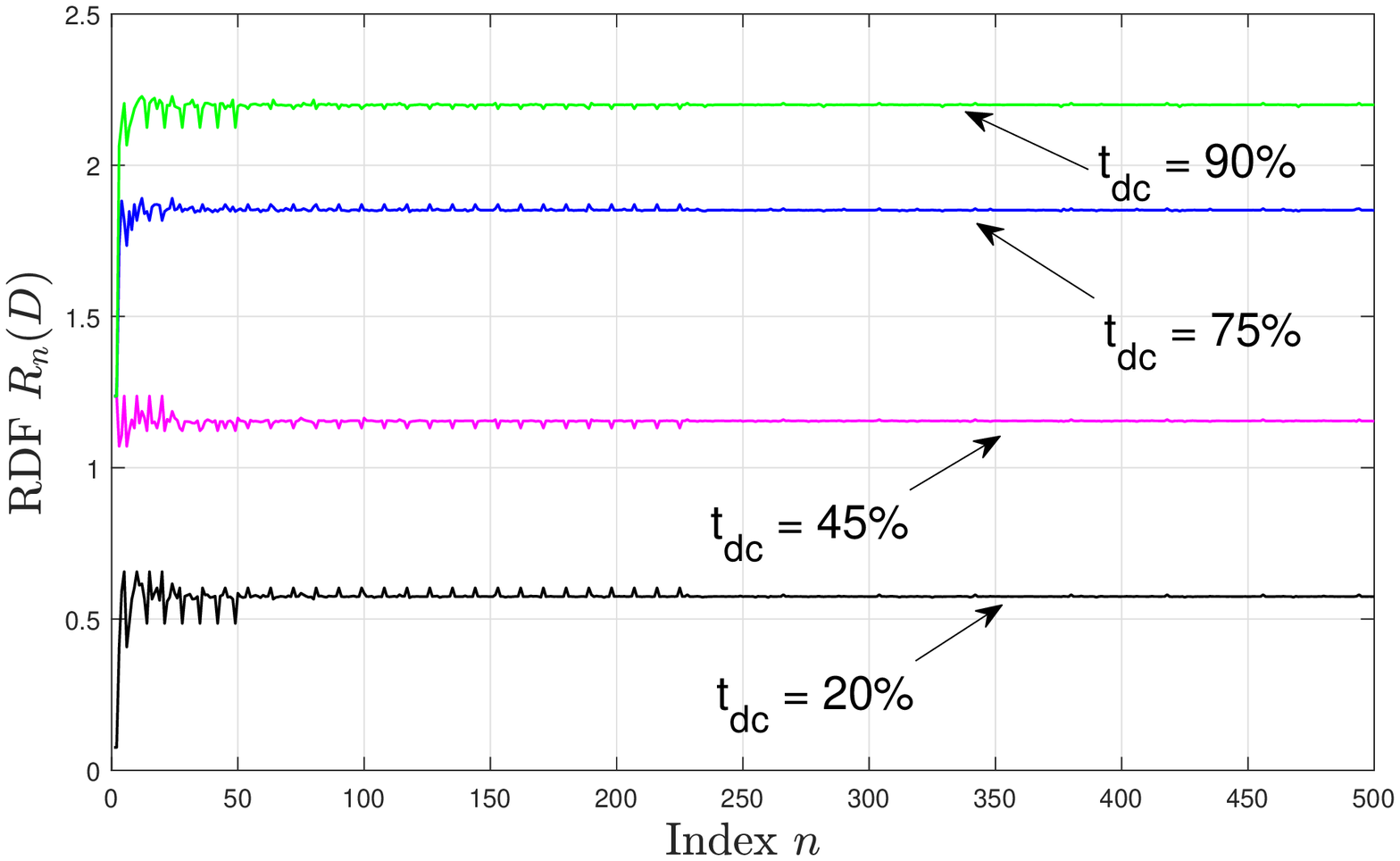}}
			\caption{$R_n(D)$ versus $n$; offset $\phi=\frac{1}{16}$.
			}
			\label{fig:R_vs_n_00625}
		\end{minipage}
		
	\end{figure}
	
	From Theorem \ref{Thm:rate_WSACS} it follows that if the distortion satisfies $D< \mathop{\min}\limits_{0\leq t\leq\Tc}\Csc(t)$, the \ac{rdf} of the asynchronously sampled \ac{ct} \ac{wscs} Gaussian process is given by the limit superior of the sequence $\{R_n(D)\}_{n \in \mySet{N}}$; where $R_n(D)$ is obtained via Corollary \ref{corollary:RDF_memoryless_WSCS}. In this subsection, we study the sequence of \acp{rdf} $\{R_n(D)\}_{n \in \mySet{N}}$ as $n$ increases.  For this evaluation setup, we fixed the distortion constraint at $D=0.18$ and set $\eps=\frac{\pi}{7}$ and $p=2$. Let the variance of the \ac{ct} \ac{wscs} Gaussian source process $\Csc(t)$ be modelled by Eq. \eqref{eqn:ScVar} for two sampling time offsets $\phi=\{0, \frac{1}{16}\}$. For each offset $\phi$, four duty cycle values were considered: $\DC=[20,45,75,98]\%$. For each $n$  we obtain the synchronous sampling mismatch $\myEpsn\triangleq \frac{\lfloor n\cdot \eps\rfloor}{n}$, which approaches $\myEps$ as $n \rightarrow \infty$, where $n\in \mN$. Since $\myEpsn$ is a rational number, corresponding to a sampling period of $\Tsamp(n) = \frac{\Tc}{\Td + \myEps_n}$, then for each $n$, the resulting \ac{dt} process is \ac{wscs} with the period $p_n=\Td\cdot n + \lfloor n\cdot \eps\rfloor$ and its \ac{rdf} follows from  Corollary \ref{corollary:RDF_memoryless_WSCS}.

	Figures \ref{fig:R_vs_n_0} and \ref{fig:R_vs_n_00625} depict $R_n(D)$ for $n \in [1,500]$ with the specified duty cycles and sampling time offsets, where  in Fig. \ref{fig:R_vs_n_0} there is no sampling time offset, i.e., $\phi = 0$, and in Fig. \ref{fig:R_vs_n_00625} the sampling time offset is set to $\phi = \frac{1}{16}$.
	We observe that in both figures the \ac{rdf} values are higher for higher $\DC$. This can be explained by noting that for higher $\DC$ values, the resulting time averaged variance of the \ac{dt} source process increases, hence, a higher number of bits per sample is required to encode the source process maintaining the same distortion value. Also, in all configurations, $R_n(D)$ varies significantly for smaller values of $n$. Comparing  Figures \ref{fig:R_vs_n_0} and  \ref{fig:R_vs_n_00625}, we see that the pattern of these variations depends on the sampling time offset $\phi$. For example, when $\DC = 45 \%$ at $n \in [4, 15]$, then  for $\phi =0$ the \ac{rdf} varies in the range $[1.032, 1.143]$ bits per sample, while  for $\phi =\frac{1}{16}$ the \ac{rdf} varies in the range $[1.071,1.237]$  bits per sample. However, as we increase $n$ above $230$, the variations in $\RnD$ become smaller and are less dependent on the sampling time offset, and the resulting values of $\RnD$ are approximately in the same range in both Figures \ref{fig:R_vs_n_0} and \ref{fig:R_vs_n_00625} for $n \ge 230$. This behaviour can be explained by noting that as $n$ varies, the period $p_n$ also varies and hence the statistics of the \ac{dt} variance differs over its respective period. This consequently affects the resulting \ac{rdf} (especially for small periods). As $n$ increases $\myEpsn$ approaches the asynchronous sampling mismatch $\eps$ and the period $p_n$ takes a sufficiently large value such that the samples of the \ac{dt} variance over the period are similarly distributed irrespective of the value of $\phi$; leading to a negligible variation in the  \ac{rdf} as seen in the above figures. 
	
	\subsection{The Variation of the RDF with the Sampling Rate}
	\label{RDF_VS_FREQ}
	\begin{figure}
	    \centering
	    \begin{minipage}{0.45\textwidth}
			\centering
			{\includegraphics[width=1.1\linewidth]{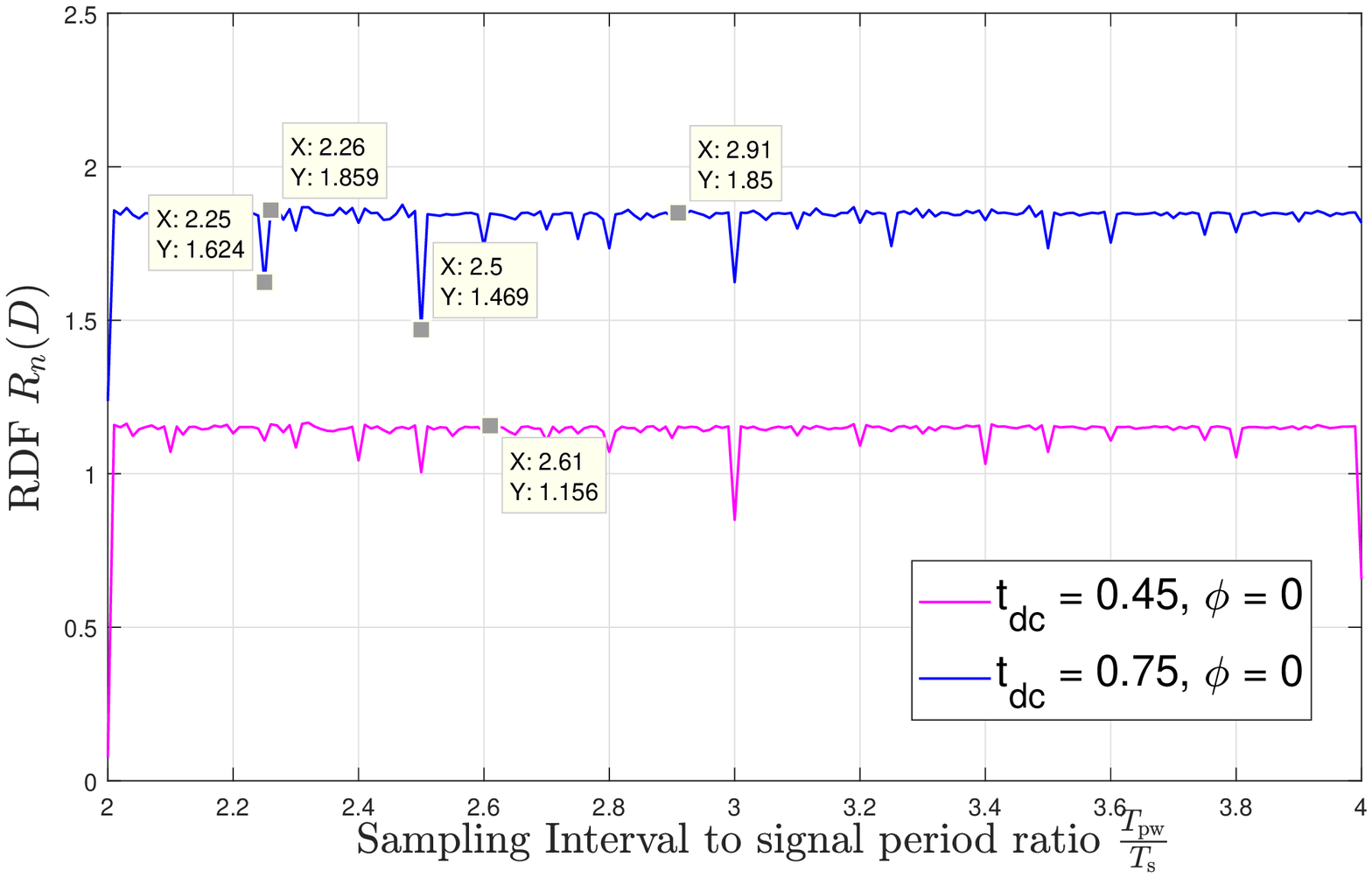}}
			\caption{$R_n(D)$ versus $\frac{\Tc}{\Tsamp}$; offset $\phi=0$.
			}
			\label{fig:R_nbyrate_0}		
		\end{minipage}
		$\quad$
		\begin{minipage}{0.45\textwidth}
			\centering
			{\includegraphics[width=1.1\linewidth]{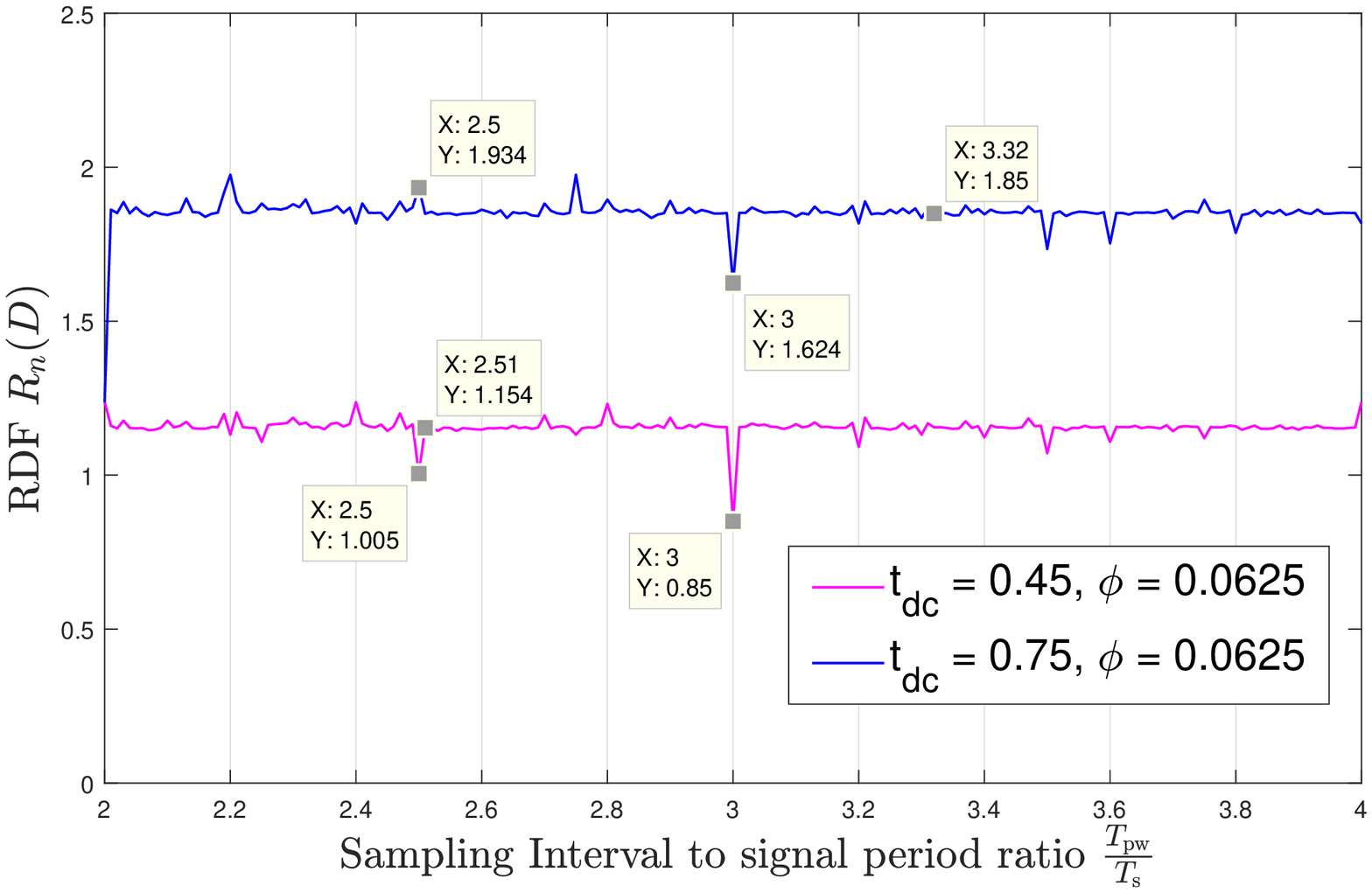}}
			\caption{$R_n(D)$ versus $\frac{\Tc}{\Tsamp}$; offset $\phi=\frac{1}{16}$.
			}
			\label{fig:R_nbyrate_0_0625}
		\end{minipage}
	\end{figure}
	
	Next, we observe the dependence of the \ac{rdf} for the sampled memoryless \ac{wscs} Gaussian process on the value of the sampling interval $\Tsamp$. For this setup,  we fix the distortion constraint to $D = 0.18$ and set the duty cycle in the source process  \eqref{eqn:ScVar} to $\DC = [45, 75] \%$. Figures \ref{fig:R_nbyrate_0}-\ref{fig:R_nbyrate_0_0625} demonstrate the numerically evaluated values for $\RnD$ at sampling intervals in the range $2 < \frac{\Tc}{\Tsamp} < 4 $ with the sampling time offsets $\phi = 0 $ and $\phi = \frac{1}{16}$, respectively. A very important insight which arises from the figures is that the sequence of \acp{rdf} $\RnD$ is not convergent; hence, for example, one cannot approach the \ac{rdf} for $\frac{\Tc}{\Tsamp}=2.5$ by simply taking rational values of $\frac{\Tc}{\Tsamp}$ which approach $2.5$. This verifies that the \ac{rdf} for asynchronous sampling cannot be obtained by straightforward application of previous results, and indeed, the entire analysis carried in the manuscript is necessary for the desired characterization. We observe in Figures \ref{fig:R_nbyrate_0}-\ref{fig:R_nbyrate_0_0625} that when $\frac{\Tc}{\Tsamp}$ has a fractional part with a relatively small integer denominator, the variations in the \ac{rdf} are significant, and the variations depend on the sampling time offset. However, when $\frac{\Tc}{\Tsamp}$ approaches an irrational number, the period of the sampled variance function becomes very long, and consequently, the  \ac{rdf} is approximately constant and independent of the sampling time offset. As an example, consider $\frac{\Tc}{\Tsamp} = 2.5$ and $\DC = 75\%$: For sampling time offset $\phi = 0$ the \ac{rdf} takes a value of  $1.469$ bits per sample, as shown in Figure \ref{fig:R_nbyrate_0} while for the offset of $\phi = \frac{1}{16}$  the \ac{rdf} peaks to $1.934$ bits per sample as we see in Figure \ref{fig:R_nbyrate_0_0625}. On the other hand, when approaching asynchronous sampling, the \ac{rdf} takes an approximately constant value $1.85$ bits per sample for all the considered values of $\frac{\Tc}{\Tsamp}$ and this value is invariant to the offsets of $\phi$. 
	This follows since when the denominator of the fractional part of $\frac{\Tc}{\Tsamp}$ increases, then the \ac{dt} period of the resulting sampled variance, $p_n$, increases and practically captures the entire set of values of the \ac{ct} variance regardless of the sampling time offset. 
	In a similar manner as with the study on capacity in \cite{shlezinger2019capacity}, we conjecture that since asynchronous sampling captures the entire set of values of the \ac{ct} variance, the respective \ac{rdf} represents the \ac{rdf} of the analog source, which does not depend on the specific sampling rate and offset.
Figures \ref{fig:R_nbyrate_0}-\ref{fig:R_nbyrate_0_0625} demonstrate how slight variations in the sampling rate can result in significant changes in the \ac{rdf}. For instance, at $\phi = 0$ we notice in Figure \ref{fig:R_nbyrate_0} that when the sampling rate switches from $\Tsamp = 2.25 \cdot \Tc$ to $\Tsamp = 2.26 \cdot \Tc$, i.e., the sampling rate switches from being synchronous to being nearly asynchronous,  and the \ac{rdf} changes from  $1.624$  bits per channel use to $1.859$  bits per sample for $\DC=75\%$; 
	also, we observe in Figure \ref{fig:R_nbyrate_0_0625} for $\DC=45\%$, that  when the sampling rate switches from $\Tsamp = 2.5 \cdot \Tc$ to $\Tsamp = 2.51 \cdot \Tc$, i.e., the sampling rate also switches from being synchronous to being nearly asynchronous,  and the \ac{rdf} changes from  $1.005$  bits per source sample to $1.154$ bits per source sample. 

	\begin{figure}
	    \centering
	    \begin{minipage}{0.5\textwidth}
			\centering	{\includegraphics[width=1\columnwidth]{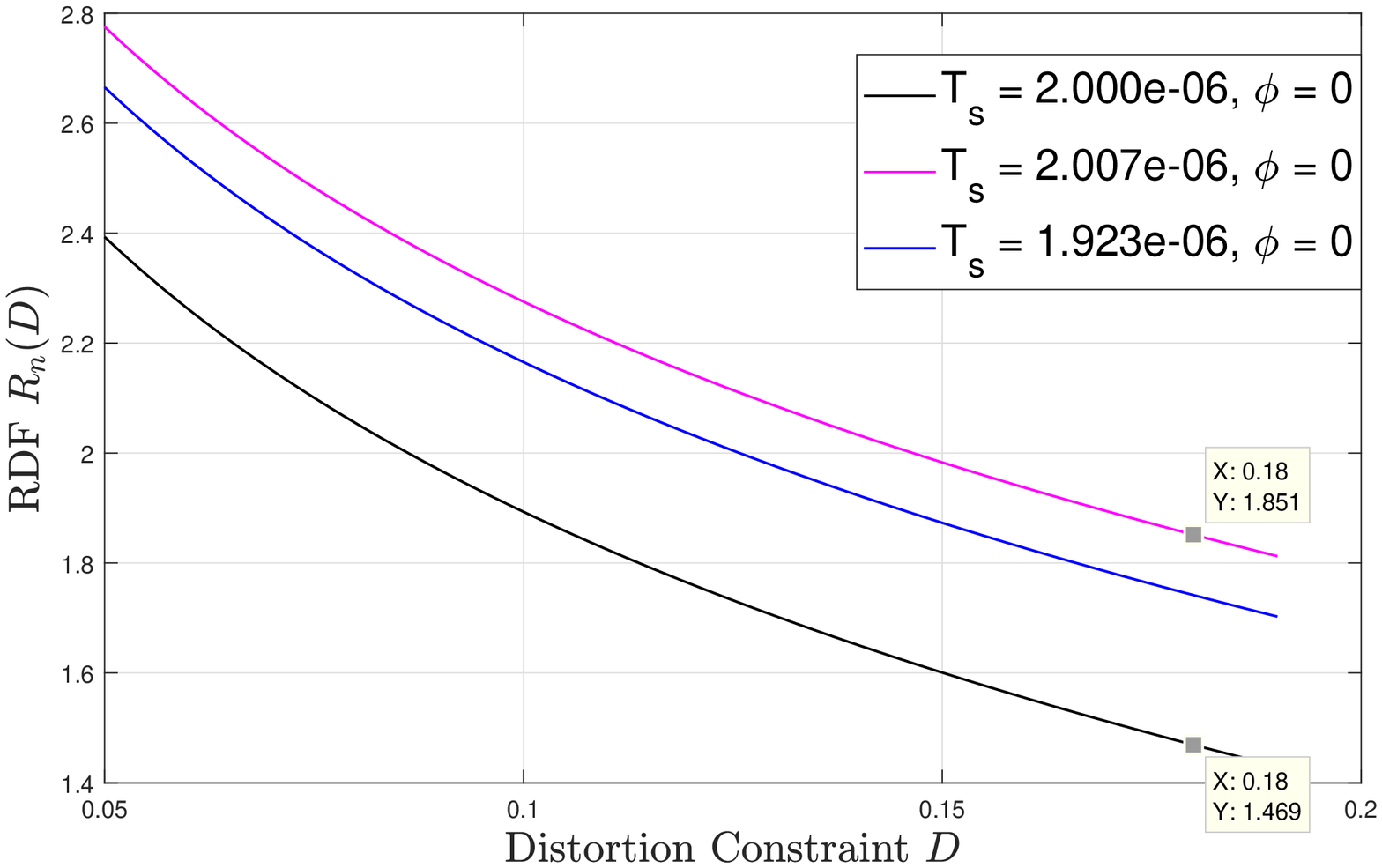}}
			\caption{$R_n(D)$ versus $D$; offset $\phi=0$.
			}
			\label{fig:R_nbyD_0}		
		\end{minipage}
		\begin{minipage}{0.5\textwidth}
			\centering	{\includegraphics[width=1\columnwidth]{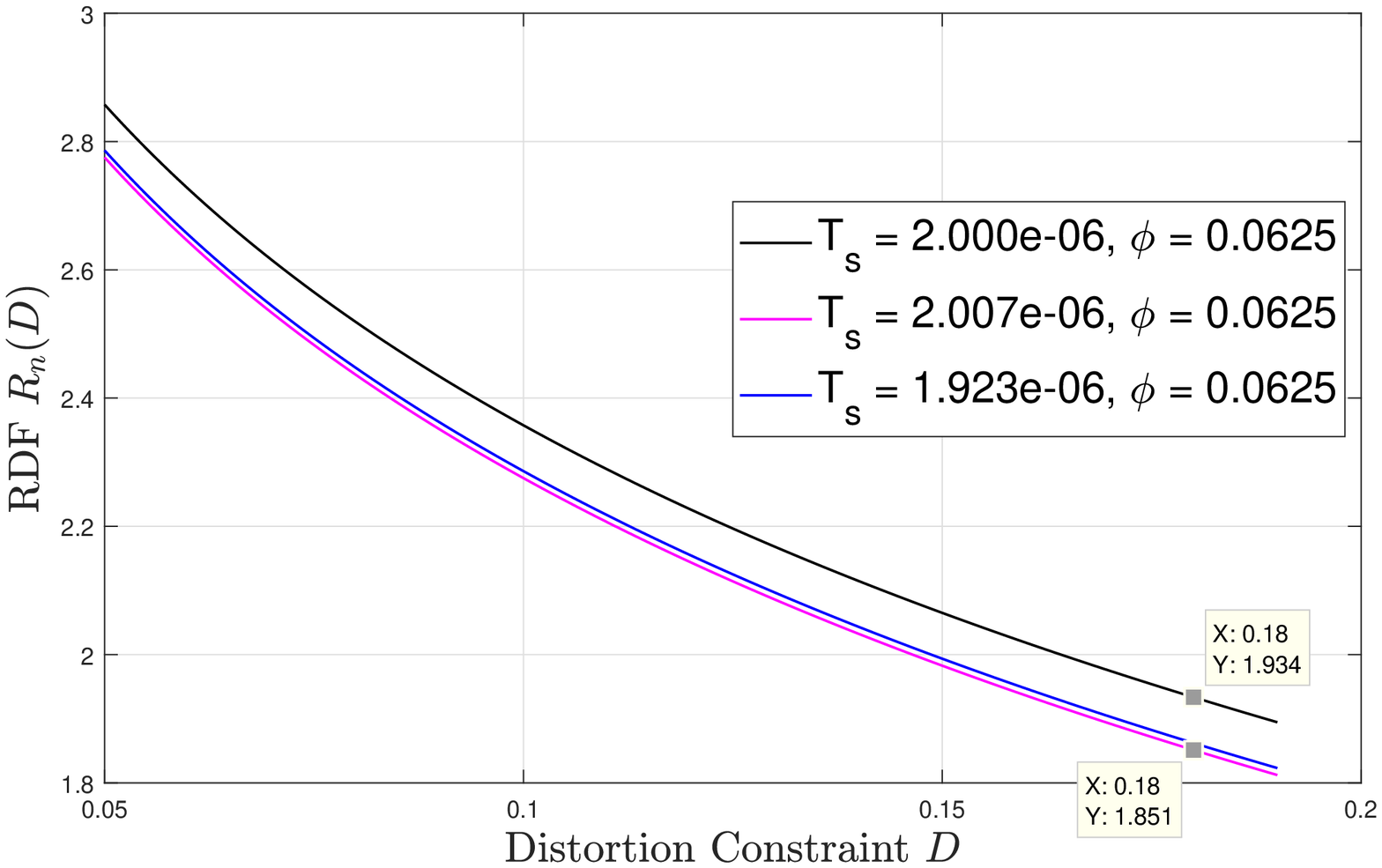}}
			\caption{$R_n(D)$ versus $D$; offset $\phi=\frac{1}{16}$.
			}
			\label{fig:R_nbyD_0_0625}
		\end{minipage}
	\end{figure}

Lastly, Figures \ref{fig:R_nbyD_0}-\ref{fig:R_nbyD_0_0625} numerically evaluate the \ac{rdf} versus the distortion constraint $D\in [0.05,0.19]$ for the sampling time offsets of $0$ and $\frac{1}{16}$ respectively. At each $\phi$, the result is evaluated at three different values of synchronization mismatch $\myEps$. For this setup, we fix $\DC=75\%$, $p=2$ and $\myEps \in \{0.5, \frac{5\pi}{32}, 0.6\}$. The only mismatch value that refers to the asynchronous sampling case is $\myEps=\frac{5\pi}{32}$ and its corresponding sampling interval is approximately $2.007$ $\mu$secs, which is a negligible variation from the sampling intervals corresponding to $\myEps \in \{0.5, 0.6\}$, which are $2.000$ $\mu$secs and $1.923$ $\mu$secs, respectively. Observing both figures, we see that the \ac{rdf} may vary significantly for very slight variation in the sampling rate. For instance, as shown in Figure \ref{fig:R_nbyD_0} for $\phi=0$, at $D=0.18$, a slight change in the synchronization mismatch from $\eps=\frac{5\pi}{32}$ (i.e., $\Tsamp\approx 2.007\mu$secs) to $\eps=0.5$ (i.e., $\Tsamp=2.000\mu$secs) results to approximately $20\%$ decrease in the \ac{rdf}. For $\phi=\frac{1}{16}$ the same change in the sampling synchronization mismatch at $D=0.18$ results to a rise in the \ac{rdf} by roughly $4\%$.
These results demonstrate the unique and counter-intuitive characteristics of the \ac{rdf} of sampled \ac{wscs} signals which arise from our derivation.

\section{Conclusions}
\label{conclusion}
In this work the \ac{rdf} of a sampled \ac{ct} \ac{wscs} Gaussian source process was characterized  for scenarios in which the resulting \ac{dt} process is memoryless. 
This characterization shows the relationship between the sampling rate and the minimal number of bits required for compression at a given distortion.
For cases in which the sampling rate is synchronized with the period of the statistics of the source process, the resulting \ac{dt} process is \ac{wscs} and  standard information theoretic framework can be used for deriving its \ac{rdf}. For asynchronous sampling, information stability does not hold, and hence we resort to the information spectrum framework to obtain a characterization. To that aim we derived a relationship between some relevant information spectrum quantities for uniformly convergent sequences of \acp{rv}. This relationship was further applied to characterize the \ac{rdf} of an asynchronously sampled  \ac{ct} \ac{wscs} Gaussian source process as the limit superior of a sequence of \acp{rdf}, each corresponding to the synchronous sampling of the \ac{ct} \ac{wscs} Gaussian process. The results were derived in the low distortion regime, i.e., under the condition that the distortion constraint $D$ is less than the minimum variance of the source, and for sampling intervals which are larger than the correlation length of the CT process.
Our numerical examples give rise to non-intuitive insights which follow from the derivations. 
In particular, the numerical evaluation  demonstrates  that the \ac{rdf} for a sampled \ac{ct} \ac{wscs} Gaussian source can change dramatically with minor variations in sampling rate and sampling time offset. In particular, when the sampling rate switches from being synchronous to being asynchronous and vice versa, the \ac{rdf} may change considerably as the statistical model of the source switches between \ac{wscs} to \ac{wsacs}. The resulting analysis enables determining the sampling system parameters in order to facilitate accurate and efficient source coding of acquired \ac{ct} signals. 
\begin{appendices}

	\numberwithin{proposition}{section} 
	\numberwithin{lemma}{section} 
	\numberwithin{corollary}{section} 
	\numberwithin{remark}{section} 
	\numberwithin{equation}{section}
	\numberwithin{assumption}{section}

\color{black}
\section{Proof of Lemma \ref{lemma:arbit_blocklength}}
\label{app:arb_blocklength}

\begin{proof}
To prove that the minimum achievable rate at a given maximum distortion for a code with arbitrary blocklength can be achieved by considering only codes whose blocklength is an integer multiple of $r$, we apply the following approach: We first show that every rate-distortion pair achievable when restricted to using source codes whose blocklength is an integer multiple of $r$ is also achievable when using arbitrary blocklenghts;
 We then prove that every achievable rate-distortion pair is also achievable when restricted to using codes whose blocklength is an integer multiple of $r$. 
    Combining these two assertions proves that the rate-distortion function of the source $\{S[i]\}_{i\in\mN}$ can be obtained when restricting the blocklengths to be an integer multiple of $r$. Consequently, a reproduction signal $\{\hat{S}[i]\}_{i\in\mN}$ which achieves the minimal rate for a given $D$ under the restriction to use only blocklengths which are an integer multiple of $r$ is also the reproduction signal achieving the minimal rate without this restriction, and vice versa, thus proving the lemma. 

To prove the first assertion, consider a rate-distortion pair $(R,D)$ which is achievable when using codes whose blocklength is an integer multiple of $r$. It thus follows directly from Def. \ref{def:rate_dist_pair} 
%
that for every  $\eta>0$, $\exists b_0\in \mN$ such that for all $b>b_0$ there exists a a source code $\left(R_{(b\cdot r)},b\cdot r\right)$ with rate $R_{(b\cdot r)}\leq R+\eta$ satisfying $\bar{d}\big(\myVec{S}^{(b\cdot r)},\hat{\myVec{S}}^{(b\cdot r)}\big)\leq D+\frac{\eta}{2}$. 
We now show that we can construct a  code with an arbitrary blocklength $l=b\cdot r+j$ where $0< j<r$ (i.e., the blocklength $l$ is not an integer multiple of $r$) satisfying Def. \ref{def:rate_dist_pair} for all $j\in \{1,\ldots,r-1\}$ as follows: Apply the code $\left(R_{(b\cdot r)},b\cdot r\right)$ to the first $b\cdot r$ samples of $S[i]$ and then concatenate each codeword by $j$ zeros to obtain a source code having codewords of length $b\cdot r+j$. The average distortion (i.e., see \eqref{eqn:distortionn}) of the resulting $\left(R_{(b\cdot r+j)},b\cdot r+j\right)$ code is given by:
\begin{align}
\label{eqn:distrtn_arb_lngt}
\bar{d}\left(\myVec{S}^{(b\cdot r +j)},\hat{\myVec{S}}^{(b\cdot r +j)}\right)&=\frac{1}{b\cdot r+j}\left(\sum\limits_{i=1}^{b\cdot r}\E\left\{\left(S[i]-\hat{S}[i]\right)^2\right\}+\sum\limits_{i=b\cdot r+1}^{b\cdot r+j}\E\left\{\left(S[i]\right)^2\right\}\right)\nonumber \\
&=\frac{1}{b\cdot r+j}\left(b\cdot r\cdot \bar{d}\left(\myVec{S}^{(b\cdot r)},\hat{\myVec{S}}^{(b\cdot r)}\right) + \sum\limits_{i=1}^{j} \sigma_S^2[i] \right)\nonumber
\\
&=\frac{b\cdot r}{b\cdot r+j}\cdot \bar{d}\left(\myVec{S}^{(b\cdot r)},\hat{\myVec{S}}^{(b\cdot r)}\right)+\frac{1}{b\cdot r+j}\sum\limits_{i=1}^{j} \sigma_S^2[i].
\end{align}
Thus $\exists b>b_o$ such that $\frac{1}{b\cdot r+j}\sum\limits_{i=1}^{j} \sigma_S^2[i]<\frac{\eta}{2}$ and
\begin{align}
   \bar{d}\left(\myVec{S}^{(b\cdot r+j)},\hat{\myVec{S}}^{(b\cdot r+j)}\right) 
    &=\frac{b\cdot r}{b\cdot r+j}\cdot \bar{d}\left(\myVec{S}^{(b\cdot r)},\hat{\myVec{S}}^{(b\cdot r)}\right)+\frac{1}{b\cdot r+j}\sum\limits_{i=1}^{j} \sigma_S^2[i] \nonumber\\
    &\leq \frac{b\cdot r}{b\cdot r+j}\cdot \bar{d}\left(\myVec{S}^{(b\cdot r)},\hat{\myVec{S}}^{(b\cdot r)}\right)+\frac{\eta}{2} \nonumber\\
    &\leq \bar{d}\left(\myVec{S}^{(b\cdot r)},\hat{\myVec{S}}^{(b\cdot r)}\right)+\frac{\eta}{2} \leq D+\eta.
\end{align}
The rate $R_{(b\cdot r+j)}$ satisfies:
\begin{equation}
\label{eqn:rate_arb_lngt}
    R_{(b\cdot r+j)}=\frac{1}{b\cdot r +j}\cdot \log_2 M = R_{(b\cdot r)}\cdot \frac{b\cdot r}{b\cdot r +j} \leq \left( R+\eta\right)\cdot \frac{b\cdot r}{b\cdot r +j} \leq R+\eta.
\end{equation}
 Consequently, any rate-distortion pair achievable with codes whose blocklength is an integer multiple of $r$ can be achieved by codes with arbitrary blocklengths.

Next, we prove that any achievable rate-distortion pair $(R,D)$ can be achieved by codes whose blocklength is an integer multiple of $r$. To that aim, we fix $\eta>0$. By Def. \ref{def:rate_dist_pair}, it holds that there exists a code of blocklength $l$ satisfying \eqref{eqn:rate_bound}-\eqref{eqn:dist_bound}. To show that $(R,D)$ is achievable using codes whose blocklength is an integer multiple of $r$, we assume here that $l$ is not an integer multiple of $r$, hence, there exist some positive integers $b$ and $j$ such that $j<r$ and $l = b\cdot r +j$. We denote this code by  $\left(R_{(b\cdot r+j)},b\cdot r+j\right)$. It follows from Def. \ref{def:rate_dist_pair} that $R_{(b\cdot r+j)}\leq R+\eta$ and $\bar{d}\left(\myVec{S}^{(b\cdot r+j)},\hat{\myVec{S}}^{(b\cdot r+j)}\right)\leq D+\frac{\eta}{2}$. 
%
Next, we construct a code $\left(R_{(b+1)\cdot r}, (b+1)\cdot r\right)$ with codewords whose length is $(b+1)\cdot r$, i.e., an integer multiple of $r$, by adding $r-j$ zeros at the end of each codeword of the code $\left(R_{(b\cdot r+j)},b\cdot r+j\right)$. The average distortion can now be computed as follows:
\begin{align}
\label{eqn:distrtn_integer_blngt}
\bar{d}\left(\myVec{S}^{((b+1)\cdot r)},\hat{\myVec{S}}^{((b+1)\cdot r)}\right)&=\frac{1}{(b+1)\cdot r}\left(\sum\limits_{i=1}^{b\cdot r+j}\E\left\{\left(S[i]-\hat{S}[i]\right)^2\right\}+\sum\limits_{i=b\cdot r+j+1}^{(b+1)\cdot r}\E\left\{\left(S[i]\right)^2\right\}\right)\nonumber \\
&=\frac{1}{(b+1)\cdot r}\left((b\cdot r+j)\cdot \bar{d}\left(\myVec{S}^{(b\cdot r+j)},\hat{\myVec{S}}^{(b\cdot r+j)}\right) + \sum\limits_{i=b\cdot r+j+1}^{(b+1)\cdot r} \sigma_S^2[i] \right)\nonumber
\\
&=\frac{b\cdot r+j}{(b+1)\cdot r}\cdot \bar{d}\left(\myVec{S}^{(b\cdot r+j)},\hat{\myVec{S}}^{(b\cdot r+j)}\right)+\frac{\sum\limits_{i=b\cdot r+j+1}^{(b+1)\cdot r} \sigma_S^2[i]}{(b+1)\cdot r}, 
\end{align}
and again $\exists b>b_o$ such that $\frac{\sum\limits_{i=b\cdot r+j+1}^{(b+1)\cdot r} \sigma_S^2[i]}{(b+1)\cdot r}<\frac{\eta}{2}$, hence 
\begin{align}
    \bar{d}\left(\myVec{S}^{((b+1)\cdot r)},\hat{\myVec{S}}^{((b+1)\cdot r}\right)
    &\leq  \frac{b\cdot r+j}{(b+1)\cdot r}\cdot \bar{d}\left(\myVec{S}^{(b\cdot r+j)},\hat{\myVec{S}}^{(b\cdot r+j)}\right) +\frac{\eta}{2} \notag \\
    &\leq  \bar{d}\left(\myVec{S}^{(b\cdot r+j)},\hat{\myVec{S}}^{(b\cdot r+j)}\right) + \frac{\eta}{2} \leq D + \eta.
\end{align}
 The rate $R_{(b+1)\cdot r}$ can be expressed as follows:
 \begin{equation}
\label{eqn:rate_integer_blngt}
    R_{(b+1)\cdot r}=\frac{1}{(b+1)\cdot r}\cdot \log_2 M = R_{(b\cdot r+j)}\cdot \frac{b\cdot r+j}{(b+1)\cdot r} \leq  \left( R+\eta\right)\cdot \frac{b\cdot r+j}{(b+1)\cdot r}<R+\eta.
\end{equation}
It follows that $ R_{(b+1)\cdot r}\leq  R+\eta$ for any arbitrary $\eta$ by selecting a sufficiently large $b$. This proves that every rate-distortion pair achievable with arbitrary blocklengths (e.g., $l=b\cdot r+ j, j<r$) is also achievable when considering source codes whose blocklength is an integer multiple of $r$ (e.g., $l=b\cdot r$). This concludes the proof.
\end{proof}

\section{Proof of Theorem \ref{thm:ThmScale1}}
\label{app:Proofthmm}
\vspace{-0.1cm}
\color{black}
Recall that $\alpha \in \mR^{++}$. To prove the theorem, we fix a rate-distortion pair $(R,D)$ that is achievable for the source $\{{S}[i]\}_{i\in\mN}$. By Def. \ref{def:rate_dist_pair} this implies that for all $\eta > 0$ there exists $l_0(\eta) \in \mN$ such that for all $l > l_0(\eta)$ there exists a source code $\mySet{C}_l$ with rate $R_l \leq R + \eta$ and \ac{mse} distortion $D_l = \E\big\{\frac{1}{l} \big\|\myVec{S}^{(l)} - \hat{\myVec{S}}^{(l)}\big\|^2 \big\}  \leq D +\eta$.  Next, we use the code $\mySet{C}_l$ to define the source code $\mySet{C}_{l}^{(\alpha)}$, which operates in the following manner: The encoder  first scales its input block by $1/\alpha$.  Then, the block is encoded using the source code $\mySet{C}_{l}$. Finally, the selected codeword is scaled by $\alpha$.
Since the $\mySet{C}_{l}^{(\alpha)}$ has the same number of codewords and the same blocklength as $\mySet{C}_{l}$, it follows that its rate, denote $R_l^{(\alpha)}$, satisfied $R_l^{(\alpha)} = R_l \leq R +\eta$. Furthermore, by the construction of  $\mySet{C}_{l}^{(\alpha)}$, it holds that its reproduction vector when applied to $\alpha \cdot \myVec{S}^{(l)}$ is equal to the output of $\mySet{C}_{l}$ applied to $\myVec{S}^{(l)}$ scaled by $\alpha$, i.e., $\alpha \cdot \hat{\myVec{S}}^{(l)}$. 
Consequently, the \ac{mse} of  $\mySet{C}_{l}^{(\alpha)}$ when applied to the source $\{\alpha \cdot {S}[i]\}_{i\in\mN}$, denoted $D_l^{(\alpha)}$, satisfies $D_l^{(\alpha)} =  \E\big\{\frac{1}{l} \big\|\alpha \cdot \myVec{S}^{(l)} - \alpha \cdot \hat{\myVec{S}}^{(l)}\big\|^2 \big\} = \alpha^2 D_l \leq \alpha^2 D + \alpha^2 \eta$.  
 
 It thus follows that for all $\tilde{\eta} > 0$ there exists $\tilde{l}_0(\tilde{\eta}) = l_0\big(\min(\tilde{\eta}, \alpha^2\tilde{\eta})\big)$ such that for all $l > \tilde{l}_0(\tilde{\eta}) $ there exists a code $\mySet{C}_{l}^{(\alpha)}$ with rate $R_l^{(\alpha)} \leq R +\tilde{\eta}$ which achieves an \ac{mse} distortion of $D_l^{(\alpha)}\leq \alpha^2\cdot D + \tilde{\eta}$ when applied to the compression of $\{\alpha\cdot {S}[i]\}_{i \in \mN}$. 
 Hence, $(R, \alpha^2 D)$ is achievable for compression of $\{\alpha \cdot {S}[i]\}_{i \in \mN}$ by Def. \ref{def:rate_dist_pair}, proving the theorem.

\color{black}

\section{Proof of Theorem \ref{thm:plim}}
\label{app:Proof1}
\vspace{-0.1cm}
In this appendix, we prove \eqref{eqn:plimb} by applying a similar approach as used for proving \eqref{eqn:plima} in \cite[Appendix A]{shlezinger2019capacity}. We first note that Def. \ref{def:plimsup} can also be written as follows:
\begin{equation}
\label{eqn:appendix_pliminf}
    {\rm p-}\mathop{\lim \sup}\limits_{k \rightarrow \infty} \zk{k} \!\stackrel{(a)}{=} \! \inf\left\{\beta  \in \mySet{R} \Big|| \mathop{\lim\sup}\limits_{k \rightarrow \infty}\Pr \left(\zk{k} > \beta \right) = 0   \right\} \!\stackrel{(b)}{=} \! \inf\left\{\beta  \in \mySet{R} \Big|| \mathop{\lim\inf}\limits_{k \rightarrow \infty} \Fkeps{k} (\beta)= 1  \right\}.
\end{equation}
For the equality $(a)$, we note that the set of probabilities $\{\Pr \left(\zkeps {k} > \beta\right)\}_{k\in \mN}$ is non-negative and bounded in $[0,1]$; hence, for any $\beta \in \mR$ for which $\mathop{\lim\sup}\limits_{k \rightarrow \infty}\Pr \left(\zk{k} > \beta \right)=0$, it also holds from \cite[Thm. 3.17]{rudin1976principles} that the limit of any subsequence of $\left\{\Pr \left(\zk{k} > \beta \right)\right\}_{k\in \mN}$ is also $0$, since non-negativity of the probability implies $\mathop{\lim \inf}\limits_{k\rightarrow\infty} \Pr \left(\zk{k} > \beta \right) \geq 0$. Then, combined with the relationship $\mathop{\lim \inf}\limits_{k\rightarrow\infty} \Pr \left(\zk{k} > \beta \right) \leq \mathop{\lim\sup}\limits_{k \rightarrow \infty}\Pr \left(\zk{k} > \beta \right)$, we conclude:
\begin{align*}
  &0\leq \mathop{\lim\inf}\limits_{k \rightarrow \infty}\Pr \left(\zk{k} > \beta \right)\leq \mathop{\lim\sup}\limits_{k \rightarrow \infty}\Pr \left(\zk{k} > \beta \right)=0 \notag \\
  &\implies \mathop{\lim\inf}\limits_{k \rightarrow \infty}\Pr \left(\zk{k} > \beta \right)=\mathop{\lim\sup}\limits_{k \rightarrow \infty}\Pr \left(\zk{k} > \beta \right)\stackrel{(a)}{=}\mathop{\lim}\limits_{k \rightarrow \infty}\Pr \left(\zk{k} > \beta \right)=0,  
\end{align*}
where $(a)$ follows from \cite[Example. 3.18(c)]{rudin1976principles}.
This implies $\mathop{\lim}\limits_{k \rightarrow \infty}\Pr \left(\zk{k} > \beta \right)$ exists and is equal to 0. 

In the opposite direction, if $\mathop{\lim}\limits_{k \rightarrow \infty}\Pr \left(\zk{k} > \beta \right)=0$ then it follows from \cite[Example. 3.18(c)]{rudin1976principles} that $\mathop{\lim\sup}\limits_{k \rightarrow \infty}\Pr \left(\zk{k} > \beta \right)=0$. Next, we note that since $\Fkeps{k} (\beta)$ is bounded in $[0,1]$ then $\mathop{\lim\inf}\limits_{k \rightarrow \infty} \Fkeps{k} (\beta)$ is finite $\forall \beta \in \mR$, even if $\mathop{\lim}\limits_{k \rightarrow \infty} \Fkeps{k} (\beta)$ does not exist. Equality $(b)$ follows since $\mathop{\lim \sup}\limits_{k \rightarrow \infty}\Pr \left(\zk{k} > \beta \right)= \mathop{\lim\sup}\limits_{k \rightarrow \infty}\left(1-\Pr \left(\zk{k} \leq \beta \right)\right)$ which according to \cite[Thm. 7.3.7]{dixmier2013general} is equal to $1+ \mathop{\lim\sup}\limits_{k \rightarrow \infty}\left(-\Pr\left(\zk{k} \leq \beta \right)\right)$. By \cite[Ch. 1, page 29]{stein2009real}, this quantity is also equal to $1- \mathop{\lim\inf}\limits_{k \rightarrow \infty}\left(\Pr\left(\zk{k} \leq \beta \right)\right)=1- \mathop{\lim\inf}\limits_{k \rightarrow \infty}\Fkeps{k} (\beta)$. 

Next, we state the following lemma:

\begin{lemma}
			\label{lem:AidLemma2}
			Given assumption \ref{itm:assm2},  for all $\beta \in \mySet{R}$ it holds that
			\begin{equation}
			\label{eqn:AidLemma2}
			\mathop{\lim \inf}\limits_{k \rightarrow \infty} \Fkeps{k}(\beta) = \mathop{\lim }\limits_{n \rightarrow \infty}\mathop{\lim\inf}\limits_{k \rightarrow \infty} \Fkn{k}{n}(\beta).
			\end{equation}  
		\end{lemma}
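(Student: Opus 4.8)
The plan is to observe that assumption~\ref{itm:assm2}, evaluated at the single point $\alpha=\beta$, says precisely that the sequence of functions $k\mapsto\Fkn{k}{n}(\beta)$ converges, as $n\to\infty$, to $k\mapsto\Fkeps{k}(\beta)$ \emph{uniformly in $k\in\mN$}: for every $\eta>0$ there is $n_0(\eta)$ with $\sup_{k\in\mN}\big|\Fkn{k}{n}(\beta)-\Fkeps{k}(\beta)\big|<\eta$ for all $n>n_0(\eta)$. Since all these quantities lie in $[0,1]$, everything in sight is bounded, and \eqref{eqn:AidLemma2} becomes the elementary statement that a uniform limit may be interchanged with a $\liminf$ taken over the other index.

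The one tool I would isolate first is the inequality: for any two bounded real sequences $\{a_k\}_{k\in\mN}$ and $\{b_k\}_{k\in\mN}$,
\[
\Big|\mathop{\lim\inf}\limits_{k\rightarrow\infty} a_k-\mathop{\lim\inf}\limits_{k\rightarrow\infty} b_k\Big|\le \sup_{k\in\mN}|a_k-b_k|.
\]
Indeed, setting $c\triangleq\sup_{k}|a_k-b_k|$, one has $a_k\le b_k+c$ for all $k$, hence $\mathop{\lim\inf}\limits_{k\rightarrow\infty} a_k\le \mathop{\lim\inf}\limits_{k\rightarrow\infty} b_k+c$, and the reverse bound follows by exchanging the roles of $\{a_k\}$ and $\{b_k\}$.

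Fixing $\beta\in\mR$ and abbreviating $a_k^{(n)}\triangleq\Fkn{k}{n}(\beta)$ and $a_k\triangleq\Fkeps{k}(\beta)$, I would first show that the right-hand side of \eqref{eqn:AidLemma2} is well defined: for $n,n'>n_0(\eta/2)$ the displayed inequality applied to $\{a_k^{(n)}\}$ and $\{a_k^{(n')}\}$ gives $\big|\mathop{\lim\inf}\limits_{k\rightarrow\infty}a_k^{(n)}-\mathop{\lim\inf}\limits_{k\rightarrow\infty}a_k^{(n')}\big|\le\sup_k|a_k^{(n)}-a_k|+\sup_k|a_k-a_k^{(n')}|<\eta$, so $\big\{\mathop{\lim\inf}\limits_{k\rightarrow\infty}a_k^{(n)}\big\}_{n\in\mN}$ is a Cauchy sequence of reals and converges to some $L\in\mR$. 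Applying the displayed inequality instead to $\{a_k^{(n)}\}$ and $\{a_k\}$ gives $\big|\mathop{\lim\inf}\limits_{k\rightarrow\infty}a_k^{(n)}-\mathop{\lim\inf}\limits_{k\rightarrow\infty}a_k\big|<\eta$ for every $n>n_0(\eta)$; letting $n\to\infty$ yields $\big|L-\mathop{\lim\inf}\limits_{k\rightarrow\infty}a_k\big|\le\eta$, and since $\eta>0$ is arbitrary, $L=\mathop{\lim\inf}\limits_{k\rightarrow\infty}a_k=\mathop{\lim\inf}\limits_{k\rightarrow\infty}\Fkeps{k}(\beta)$, which is exactly \eqref{eqn:AidLemma2}. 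There is no genuine obstacle here; the only point requiring care is that \ref{itm:assm2} supplies \emph{uniform} convergence in $k$ — mere pointwise-in-$k$ convergence of $\Fkn{k}{n}(\beta)$ to $\Fkeps{k}(\beta)$ would not license the interchange — and this uniformity is what makes both the Cauchy step and the identification of $L$ go through.
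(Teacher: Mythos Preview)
Your proof is correct, and it takes a genuinely different route from the paper's argument. The paper establishes \eqref{eqn:AidLemma2} by proving the two inequalities $\le$ and $\ge$ separately: for each direction it extracts a subsequence (in $k$) realizing the relevant $\liminf$, invokes Rudin's Theorem~7.11 to interchange the limits $n\to\infty$ and $l\to\infty$ along that subsequence (which is licensed precisely by the uniform-in-$k$ convergence from \ref{itm:assm2}), and then compares with the full $\liminf$ on the other side using the characterization of $\liminf$ as the smallest subsequential limit.

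Your argument instead isolates the single elementary inequality $\big|\liminf_k a_k-\liminf_k b_k\big|\le\sup_k|a_k-b_k|$, uses it once to show $\{\liminf_k\Fkn{k}{n}(\beta)\}_n$ is Cauchy (hence the right-hand side of \eqref{eqn:AidLemma2} exists), and a second time to identify the limit. This is shorter and entirely self-contained, avoiding both Bolzano--Weierstrass and the limit-interchange theorem. The paper's approach, by contrast, makes the role of subsequential limits explicit in a way that dovetails with how $\liminf$ is subsequently unpacked in Lemma~\ref{lem:AidLemma3}; but for Lemma~\ref{lem:AidLemma2} in isolation your route is the more economical one. Your closing remark that uniformity in $k$ is essential (pointwise convergence would not suffice) is exactly the point, and it is the same hypothesis driving both arguments.
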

		
		\begin{proof}
			To prove the lemma we first show that $\mathop{\lim \inf}\limits_{k \rightarrow \infty} \Fkeps{k}(\beta) \le \mathop{\lim }\limits_{n \rightarrow \infty}\mathop{\lim \inf}\limits_{k \rightarrow \infty} \Fkn{k}{n}(\beta)$, and then we show 	$\mathop{\lim \inf}\limits_{k \rightarrow \infty} \Fkeps{k}(\beta) \ge \mathop{\lim }\limits_{n \rightarrow \infty}\mathop{\lim \inf}\limits_{k \rightarrow \infty} \Fkn{k}{n}(\beta)$.
			Recall that by \ref{itm:assm2}, for all $\beta \in \mySet{R}$ and $k \in \mySet{N}$, $ \Fkn{k}{n}(\beta)$ converges as $n \rightarrow \infty$ to $\Fkeps{k}(\beta)$, uniformly over $k$ and $\beta$, i.e., for all $\eta >  0$ there exists $n_0(\eta) \in \mySet{N}$, $k_0\big(n_0(\eta), \eta\big) \in \mySet{N}$ such that for every $n > n_0(\eta)$, $\beta \in \mySet{R}$ and  $k  > k_0\big(n_0(\eta), \eta\big)$, it holds that $\big|\Fkn{k}{n}(\beta) -  \Fkeps{k}(\beta)\big| < \eta$. 
			Consequently, for every subsequence $0<k_1< k_2<\ldots$ such that $\mathop{\lim}\limits_{l \rightarrow \infty} \Fkn{k_l}{n}(\beta)$  exists for any $n > n_0(\eta)$, it follows from \cite[Thm. 7.11]{rudin1976principles}\myFtn{\cite[Thm. 7.11]{rudin1976principles}: Suppose $f_n \rightarrow f$ uniformly in a  set $E$ in a metric space. Let $x$ be a limit point of $E$, and suppose that $\mathop{\lim}\limits_{t\rightarrow x} f_n(t)=A_n$, $(n=1,2,3,\ldots)$. Then $A_n$ converges, and $\mathop{\lim}\limits_{t \rightarrow x}\mathop{\lim}\limits_{n \rightarrow \infty} f_n(t)=\mathop{\lim}\limits_{n \rightarrow \infty}\mathop{\lim}\limits_{t \rightarrow x}f_n(t)$} that, as the convergence over $k$ is uniform, the limits over $n$ and $l$ are interchangeable:
			\begin{equation}
			\mathop{\lim}\limits_{n \rightarrow \infty}\mathop{\lim}\limits_{l \rightarrow \infty} \Fkn{k_l}{n}(\beta) 
			=\mathop{\lim}\limits_{l \rightarrow \infty}  \mathop{\lim}\limits_{n \rightarrow \infty}\Fkn{k_l}{n}(\beta)
			= \mathop{\lim}\limits_{l \rightarrow \infty}  \Fkeps{k_l}(\beta). 
			\label{eqn:AidLemma21}
			\end{equation}	
			
			The existence of such a convergent subsequence is guaranteed by the Bolzano-Weierstrass theorem \cite[Thm. 2.42]{rudin1976principles} \myFtn{\cite[Thm. 2.42]{rudin1976principles}:Every bounded infinite subset of $\mR^k$ has a limit point in $\mR^k$} as $\Fkn{k}{n}(\beta) \in [0,1]$.    
			
			From the properties of the limit inferior \cite[Thm. 3.17]{rudin1976principles}\myFtn{\cite[Thm. 3.17]{rudin1976principles}: Let $\{s_n\}$ be a sequence of real numbers; Let $E$ be the set of numbers $x$ (in the extended real number system) containing all limits of all subsequences of $\{s_n\}$. Then...... $\mathop{\lim\inf}\limits_{n \rightarrow \infty} s_n\in E$.} it follows that there exists a subsequence of $\big\{\Fkeps{k}(\beta)\big\}_{k \in \mySet{N}}$, denoted $\big\{\Fkeps{k_m}(\beta)\big\}_{m \in \mySet{N}}$, such that $\mathop{\lim}\limits_{m \rightarrow \infty}  \Fkeps{k_m}(\beta) = \mathop{\lim\inf}\limits_{k \rightarrow \infty}  \Fkeps{k}(\beta)$. Consequently, 
		
			\begin{align}
			\mathop{\lim\inf}\limits_{k \rightarrow \infty}  \Fkeps{k}(\beta) 
			&= \mathop{\lim}\limits_{m \rightarrow \infty}  \Fkeps{k_m}(\beta) 
			\stackrel{(a)}{=} \mathop{\lim}\limits_{n \rightarrow \infty}\mathop{\lim}\limits_{m \rightarrow \infty} \Fkn{k_m}{n}(\beta) \notag \\
			&\stackrel{(b)}{\geq} \mathop{\lim}\limits_{n \rightarrow \infty}\mathop{\lim\inf}\limits_{k \rightarrow \infty} \Fkn{k}{n}(\beta),
			\label{eqn:AidLemma23}
			\end{align}
			where $(a)$ follows from \eqref{eqn:AidLemma21}, and $(b)$ follows from the definition of the limit inferior \cite[Def. 3.16]{rudin1976principles}.  
			Similarly,  by  \cite[Thm. 3.17]{rudin1976principles}, for any $n \in \mySet{N}$  there exists a subsequence of $\{\Fkn{k}{n}(\beta)\}_{k\in \mN}$ which we denote by $\big\{\Fkn{k_l}{n}(\beta)\big\}_{l \in \mySet{N}}$ where  $\{k_l\}_{l \in \mySet{N}}$ satisfy  $0<k_1< k_2< \ldots$, 
			such that $\mathop{\lim}\limits_{l \rightarrow \infty}  \Fkn{k_l}{n}(\beta) = \mathop{\lim\inf}\limits_{k \rightarrow \infty}  \Fkn{k}{n}(\beta)$. Therefore,
			\begin{align}
			\mathop{\lim}\limits_{n \rightarrow \infty} \mathop{\lim\inf}\limits_{k \rightarrow \infty}  \Fkn{k}{n}(\beta)
			&=  \mathop{\lim}\limits_{n \rightarrow \infty} \mathop{\lim}\limits_{l \rightarrow \infty}  \Fkn{k_l}{n}(\beta)   \notag \\ 
			&\stackrel{(a)}{=} \mathop{\lim}\limits_{l \rightarrow \infty}  \Fkeps{k_l}(\beta)
			\stackrel{(b)}{\ge} \mathop{\lim\inf}\limits_{k \rightarrow \infty}  \Fkeps{k}(\beta),
			\label{eqn:AidLemma24}
			\end{align}
			where $(a)$ follows  
			from \eqref{eqn:AidLemma21}, and $(b)$ follows from the definition of the limit inferior \cite[Def. 3.16]{rudin1976principles}. 
			Therefore, $\mathop{\lim \inf}\limits_{k \rightarrow \infty} \Fkeps{k}(\beta) \le \mathop{\lim }\limits_{n \rightarrow \infty}\mathop{\lim \inf}\limits_{k \rightarrow \infty} \Fkn{k}{n}(\beta)$. Combining \eqref{eqn:AidLemma23} and \eqref{eqn:AidLemma24} proves \eqref{eqn:AidLemma2} in the statement of the lemma. 
		\end{proof}	
		
		\begin{lemma}
			\label{lem:AidLemma3} 
			Given assumptions \ref{itm:assm1}-\ref{itm:assm2},  the sequence of \acp{rv} $\big\{\zkn{k}{n} \big\}_{k,n \in \mySet{N}}$ satisfies
			\begin{align}
			\mathop{\lim }\limits_{n \rightarrow \infty}\left( {\rm p-}\mathop{\lim \sup}\limits_{k \rightarrow \infty} \zkn{k}{n} \right)  
			&=   \inf\left\{\beta \in \mySet{R} \Big|\mathop{\lim }\limits_{n \rightarrow \infty}\mathop{\lim \inf}\limits_{k \rightarrow \infty} \Fkn{k}{n}(\beta) =1   \right\}.
			\label{eqn:pliminfEq4}
			\end{align}
		\end{lemma}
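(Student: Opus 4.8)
The plan is to recast the identity as a statement about the non-decreasing functions $H_n(\beta):=\liminf_{k\to\infty}\Fkn{k}{n}(\beta)$, combining the CDF reformulation \eqref{eqn:appendix_pliminf} of the $p$-$\limsup$ with Lemma~\ref{lem:AidLemma2}. First I would note that \eqref{eqn:appendix_pliminf}, being valid for an arbitrary sequence of real \acp{rv}, applies to $\{\zkn{k}{n}\}_{k\in\mN}$ for each fixed $n$, so that ${\rm p-}\limsup_{k\to\infty}\zkn{k}{n}=\inf\{\beta\in\mR : H_n(\beta)=1\}=:\beta_0^{(n)}$. Each $H_n$ is non-decreasing in $\beta$ with values in $[0,1]$, and by Lemma~\ref{lem:AidLemma2} the pointwise limit $H(\beta):=\lim_{n\to\infty}H_n(\beta)=\liminf_{k\to\infty}\Fkeps{k}(\beta)$ exists, is non-decreasing, and lies in $[0,1]$; hence the right-hand side of \eqref{eqn:pliminfEq4} equals $\beta^\star:=\inf\{\beta\in\mR : H(\beta)=1\}$. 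It then remains to prove $\lim_{n\to\infty}\beta_0^{(n)}=\beta^\star$, which I would do by establishing $\liminf_{n\to\infty}\beta_0^{(n)}\ge\beta^\star$ and $\limsup_{n\to\infty}\beta_0^{(n)}\le\beta^\star$ separately (reading the argument in the extended reals so as to cover $\beta^\star=\pm\infty$).

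For $\liminf_{n\to\infty}\beta_0^{(n)}\ge\beta^\star$, I would argue by contradiction: if a subsequence satisfies $\beta_0^{(n_j)}<\gamma<\beta^\star$, then since $\gamma>\beta_0^{(n_j)}=\inf\{\beta : H_{n_j}(\beta)=1\}$ there is a point $\beta''<\gamma$ with $H_{n_j}(\beta'')=1$, and monotonicity of $H_{n_j}$ forces $H_{n_j}(\gamma)=1$ for every $j$; passing to the limit (which exists by Lemma~\ref{lem:AidLemma2}) gives $H(\gamma)=1$, contradicting $\gamma<\beta^\star$. This direction uses only monotonicity in $\beta$ and Lemma~\ref{lem:AidLemma2}, not \ref{itm:assm1}.

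The reverse inequality is the crux and is where \ref{itm:assm1} enters. Suppose, toward a contradiction, that a subsequence satisfies $\beta_0^{(n_j)}>\gamma$ with $\beta^\star<\gamma'<\gamma$. Then $H_{n_j}(\gamma)<1$, equivalently $\limsup_{k\to\infty}\Pr(\zkn{k}{n_j}>\gamma)>0$, so along some subsequence in $k$ the probability mass in $(\gamma,\infty)$ stays bounded below by a fixed positive constant. Using \ref{itm:assm1} I would extract a further subsequence along which $\zkn{k}{n_j}$ converges in distribution, necessarily to a point mass $\delta_{c_j}$; the persistent mass in $(\gamma,\infty)$ then forces $c_j\ge\gamma$. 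Consequently $\Fkn{k}{n_j}(\beta)\to 0$ along that subsequence for every $\beta<c_j$, so taking any $\beta\in(\gamma',\gamma)$ yields $H_{n_j}(\beta)=0$, and by monotonicity $H_{n_j}(\gamma')=0$ for every $j$. Letting $n\to\infty$ gives $H(\gamma')=0$, which contradicts $H(\gamma')=1$ — the latter holding because $\gamma'>\beta^\star$ and $H$ is non-decreasing. Combining the two inequalities yields $\lim_{n\to\infty}\beta_0^{(n)}=\beta^\star$, which is \eqref{eqn:pliminfEq4}.

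I expect the main obstacle to be exactly the extraction step in the reverse inequality: \ref{itm:assm1} controls only those subsequences of $\{\zkn{k}{n}\}_k$ that actually converge in distribution, so one must ensure that a subsequence carrying persistent mass above $\gamma$ admits a sub-subsequence converging in distribution to a genuine point mass, that is, that no mass escapes to infinity. In the setting of interest this is automatic, since the $\zkn{k}{n}$ are normalized information densities of bounded-variance memoryless Gaussian sources and their optimal reproductions, hence $L^2$-bounded and therefore tight; then every subsequence has a distributionally convergent sub-subsequence and \ref{itm:assm1} pins its limit to a point mass. The remaining ingredients — monotonicity of $H_n$ and $H$, elementary properties of $\liminf$/$\limsup$, and the CDF reformulation already established in this appendix — are routine.
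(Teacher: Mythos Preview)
Your argument is correct, but it takes a longer route than the paper's. The paper exploits \ref{itm:assm1} at the outset to deduce a structural fact you only recover implicitly inside your contradiction argument: since every distributionally convergent subsequence of $\{\zkn{k}{n}\}_{k}$ converges to a point mass, the function $H_n(\beta)=\liminf_{k}\Fkn{k}{n}(\beta)$ is necessarily a $\{0,1\}$-valued step function, jumping at some scalar $\zeta_n$. This immediately gives $\beta_0^{(n)}=\zeta_n$. Then Lemma~\ref{lem:AidLemma2} says $\lim_n H_n(\beta)$ exists for every $\beta$; for step functions this forces $\lim_n\zeta_n$ to exist, and the limit $H$ is itself the step at $\lim_n\zeta_n$. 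Both sides of \eqref{eqn:pliminfEq4} then equal $\lim_n\zeta_n$, and the proof is finished in a couple of lines once the step-function observation is made.

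Your approach instead treats $H_n$ as a generic monotone $[0,1]$-valued function and establishes $\liminf_n\beta_0^{(n)}\ge\beta^\star$ and $\limsup_n\beta_0^{(n)}\le\beta^\star$ separately. The forward inequality is obtained without \ref{itm:assm1} at all, which is a mild gain in generality; but the reverse inequality then needs the extraction of a distributionally convergent sub-subsequence with limit $\delta_{c_j}$, $c_j\ge\gamma$, to force $H_{n_j}(\gamma')=0$ --- and this is precisely the place where the step-function structure reappears. So the two proofs use the same ingredients (\ref{itm:assm1} plus Lemma~\ref{lem:AidLemma2}); the paper just packages the consequence of \ref{itm:assm1} once and reuses it, whereas you re-derive its effect inside the reverse inequality. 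Both arguments rely implicitly on tightness so that convergent subsequences exist; you flag this explicitly, while the paper leaves it implicit.
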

		\begin{proof}
			Since by assumption \ref{itm:assm1}, for every $n \in \mySet{N}$,  every convergent subsequence of $\big\{\zkn{k}{n} \big\}_{k\in \mySet{N}}$ converges in distribution as $k \rightarrow \infty$ to a deterministic scalar, it follows that every convergent subsequence of $\Fkn{k}{n}(\beta)$ converges as $k \rightarrow \infty$ to a step function, which is the \ac{cdf} of the corresponding sublimit of $\zkn{k}{n}$. In particular, the limit
			$\liminfk\Fkn{k}{n}(\beta)$ is a step function representing the \ac{cdf} of the deterministic scalar $\zetan{n}$, i.e.,
			
			\begin{equation}
			\mathop{\lim \inf}\limits_{k \rightarrow \infty} \Fkn{k}{n}(\beta)=
			\begin{cases}
			0 & \beta <\zetan{n} \\
			1 & \beta \ge \zetan{n}.
			\end{cases}
			\label{eqn:IntProof1}
			\end{equation}
			
			Since, by Lemma \ref{lem:AidLemma2},  \ref{itm:assm2} implies that the limit $\mathop{\lim }\limits_{n \rightarrow \infty}\mathop{\lim \inf}\limits_{k \rightarrow \infty} \Fkn{k}{n}(\beta)$ exists\footnote{The convergence to a discontinuous function is in the sense of \cite[Ex. 7.3]{rudin1976principles}}, then $\mathop{\lim }\limits_{n \rightarrow \infty} \zetan{n} $ exists. Hence, we obtain that 
			
			\begin{equation}
				\mathop{\lim }\limits_{n \rightarrow \infty}\mathop{\lim \inf}\limits_{k \rightarrow \infty} \Fkn{k}{n}(\beta)=
			\begin{cases}
			0 & \beta < \mathop{\lim }\limits_{n \rightarrow \infty} \zetan{n} \\
			1 &  \beta \ge \mathop{\lim }\limits_{n \rightarrow \infty} \zetan{n},
			\end{cases}
			\end{equation}
			and from the right-hand side of \eqref{eqn:pliminfEq4} we have that 
			\begin{equation}
			\label{eqn:RHS_LEMMA_A2}
			    \inf\left\{\beta \in \mySet{R} \Big|\mathop{\lim }\limits_{n \rightarrow \infty}\mathop{\lim \inf}\limits_{k \rightarrow \infty} \Fkn{k}{n}(\beta) =1   \right\}=\mathop{\lim }\limits_{n \rightarrow \infty} \zetan{n}.
			\end{equation}
			Next, from \eqref{eqn:appendix_pliminf} and \eqref{eqn:IntProof1} we note that 
			\begin{align*}
			{\rm p-}\mathop{\lim \sup}\limits_{k \rightarrow \infty} \zkn{k}{n}
			= \inf\left\{\beta \in \mySet{R} \Big|\mathop{\lim \inf}\limits_{k \rightarrow \infty} \Fkn{k}{n}(\beta) =1   \right\} 	  =  \zetan{n}.
			\end{align*} 
			Consequently, the left-hand side of \eqref{eqn:pliminfEq4} is equal to $\mathop{\lim }\limits_{n \rightarrow \infty} \zetan{n} $. Combining with \eqref{eqn:RHS_LEMMA_A2} we arrive at the equality \eqref{eqn:pliminfEq4} in the statement of the lemma.  
		\end{proof}

		%
		
		Substituting \eqref{eqn:AidLemma2} into \eqref{eqn:appendix_pliminf} results in 
		\begin{align}
		{\rm p-}\mathop{\lim \sup}\limits_{k \rightarrow \infty} \zk{k}  
		&=\inf\left\{\beta\in \mySet{R} \Big|\mathop{\lim }\limits_{n \rightarrow \infty}\mathop{\lim \inf}\limits_{k \rightarrow \infty} \Fkn{k}{n}(\beta) =1   \right\} 
		\stackrel{(a)}{=}\mathop{\lim }\limits_{n \rightarrow \infty}\left( {\rm p-}\mathop{\lim \sup}\limits_{k \rightarrow \infty} \zkn{k}{n} \right), 
		\label{eqn:pliminfEq3}
		\end{align}
		where $(a)$ follows from \eqref{eqn:pliminfEq4}. Eq. \eqref{eqn:pliminfEq3} concludes the proof for \eqref{eqn:plimb}.
		
\section{Proof of Theorem \ref{Thm:rate_WSACS}}
\label{app:Proof2}
In this appendix we detail the proof of Theorem \ref{Thm:rate_WSACS}. The outline of the proof is given as follows:
\begin{itemize}
    \item We first show in Subsection \ref{app:proof2A} that for any 
    $k \in \mN$, 
    the \ac{pdf} of the random vector $\VecSn^{(k)}$, representing the first $k$ samples of the \ac{ct} \ac{wscs} source $\Sc(t)$ sampled at time instants $\Tsamp(n)=\frac{\Tc}{p+\myEpsn}$, converges in the limit as $n\rightarrow \infty$ and for any $k \in \mN$ to the \ac{pdf} of  $\VecSeps^{(k)}$, which represents the first $k$ samples of the \ac{ct} \ac{wscs} source $\Sc(t)$, sampled at time instants $\Tsamp=\frac{\Tc}{p+\myEps}$.  We prove that this convergence is uniform in $k \in \mN$ and in the realization vector $\myVec{s}^{(k)} \in \mR^k$. This is stated in Lemma \ref {Lem:PDF-convergence}.
    
    \smallskip  
    
    \item Next, in Subsection \ref{app:proof2b} we apply Theorem \ref{thm:plim} to relate the mutual information density rates for the random source vector $\VecSn^{(k)}$  and its reproduction $\Shnk$ with that of the random source vector  $\VecSeps^{(k)}$ and its reproduction $\Shepsk$. To that aim, let the functions $\cdf{\Sn,\Shn}$ and $ \cdf{\Seps,\Sheps}$ denote the joint distributions of an arbitrary dimensional source and reproduction vectors corresponding to the synchronously sampled and to the asynchronously sampled source process respectively. We define the following mutual information density rates:
    \begin{subequations}
    \label{eqn:zkdefs}
				\begin{equation}
				\label{eqn:zkdefs1}
				\zkn{k}{n}'\left( \cdf{\Sn,\Shn} \right)  \triangleq \frac{1}{k}\log \frac{\pdf{\VecSn^{(k)}|\Shnk} \left(\VecSn^{(k)}|\Shnk \right)}{\pdf{\VecSn^{(k)}}\left( \VecSn^{(k)} \right) },
				\end{equation}
				and
				\begin{equation}
				\label{eqn:zkdefs2}
				\zkeps{k,\eps}'\left( \cdf{\Seps,\Sheps} \right) \triangleq \frac{1}{k}\log \frac{\pdf{\VecSeps^{(k)}|\Shepsk}  \left( \VecSeps^{(k)}\big | \Shepsk  \right)}{\pdf{\VecSeps^{(k)}}\left( \Sepsk\right) },
				\end{equation}	
			\end{subequations}
			$k, n \in \mN$. The  \acp{rv} $\zkn{k}{n}'\left( \cdf{\Sn,\hat{S}_n} \right)$ and $\zkeps{k,\eps}'\left( \cdf{\Seps,\Sheps} \right)$ in \eqref{eqn:zkdefs} denote the mutual information density rates \cite[Def. 3.2.1]{han2003information} between the \ac{dt} source process and the corresponding reproduction process for the case of synchronous sampling and for the case of asynchronous sampling, respectively. 
			
			We then show that if the pairs of source process and optimal reproduction process  $\big\{\Sn[i],\Shn[i]\big\}_{i\in\mN}$ and $\big\{\Seps[i], \Sheps[i]\big\}_{i\in\mN}$
			satisfy that $\pdf{\Shnk}\left( \hat{\myVec{s}}^{(k)} \right) \Conv{n \rightarrow \infty} \pdf{\Shepsk}\left( \hat{\myVec{s}}^{(k)} \right)$ uniformly with respect to $\hat{\myVec{s}}^{(k)} \in \mR^k$ and $k \in \mN$, and that $\pdf{\Snk | \Shnk }\left(\myVec{s}^{(k)}  \big| \hat{\myVec{s}}^{(k)} \right)\Conv{n\rightarrow\infty}\pdf{\Sepsk | \Shepsk }\left( \myVec{s}^{(k)}  \big| \hat{\myVec{s}}^{(k)} \right)$ uniformly in $\left(\big(\hat{\myVec{s}}^{(k)}\big)^T, \big(\myVec{s}^{(k)}\big)^T \right)^T \in \mR^{2k}$ and $k \in \mySet{N}$, then 
			$\zkn{k}{n}'\left( \cdf{\Sn,\Shn} \right) \ConvDist{n \rightarrow \infty} \zkeps{k,\eps}'\left( \cdf{\Seps,\Sheps} \right)$ uniformly in $k \in \mySet{N}$. 
			In addition,  Lemma  \ref{lem:AsyncZk2} proves that every subsequence of $\left\{\zkn{k}{n}'\left( \cdf{\Sn,\Shn} \right)\right\}_{k\in \mN}$ $w.r.t.$ $k$, indexed as $k_l$ converges in distribution, in the limit $l \rightarrow \infty$ to a  deterministic scalar. 
			
			\smallskip
			
            \item Lastly, in Subsection \ref{app:proof2c} we combine the above results to show in Lemmas \ref{lem:rdfUpBound} and \ref{lem:rdflowBound} that $R_\eps (D) \leq \mathop{\lim \sup\limits}\limits_{n\rightarrow\infty} R_n (D)$ and $R_\eps (D) \geq \mathop{\lim \sup}\limits_{n\rightarrow\infty} R_n (D)$ respectively; implying that $R_\eps (D) = \mathop{\lim \sup}\limits_{n\rightarrow\infty} R_n (D)$, which proves the theorem.
\end{itemize}

To facilitate our proof we will need uniform convergence in $k\in \mN$, of $\pdf{\Snk}\left( \myVec{s}^{(k)} \right)$, $\pdf{\Shnk}\left( \hat{\myVec{s}}^{(k)} \right)$ and
$\pdf{\Snk | \Shnk }\left(\myVec{s}^{(k)}  \big| \hat{\myVec{s}}^{(k)} \right)$ to $\pdf{\Sepsk}\left( \myVec{s}^{(k)} \right)$, $\pdf{\Shepsk}\left( \hat{\myVec{s}}^{(k)} \right)$ and
$\pdf{\Sepsk | \Shepsk }\left(\myVec{s}^{(k)}  \big| \hat{\myVec{s}}^{(k)} \right)$, respectively. To that aim, we will make the following scaling assumption  w.l.o.g.:

\begin{assumption}
\label{Assumption_unif_conv}  The variance of the source and the allowed distortion are scaled by some factor $\alpha^2$ such that
\begin{equation}
    \alpha^2\cdot\min\left\{D,\left(\mathop{\min}\limits_{0\leq t\leq\Tc}\Csc(t)-D\right)\right\}>\frac{1}{2\pi}.
\end{equation}
\end{assumption}

 Note that this assumption has no effect on the generality of the \ac{rdf} for  multivariate stationary processes detailed in \cite[Sec. 10.3.3]{cover2006elements}, \cite[Sec. IV]{kolmogorov1956shannon}. 
{
 Moreover, by Theorem \ref{thm:ThmScale1}, for every $\alpha > 0$ it holds that when
any rate $R$ achievable when compressing the original source $\Sc(t)$ with distortion not larger that $D$ is achievable when compressing the scaled source $\alpha\cdot \Sc(t)$ with distortion not larger than $\alpha^2\cdot D$.} Note that if for the source $\Sc(t)$ the distortion satisfies  $D< \mathop{\min}\limits_{0\leq t\leq \Tc}\Csc(t)$, then for the scaled source and  distortion we have $\alpha^2 \cdot D< \mathop{\min}\limits_{0\leq t\leq \Tc}\alpha^2 \cdot\Csc(t)$.

\subsection{Convergence in Distribution of \mathinhead{\VecSn^{(k)}}{Sn} to  \mathinhead{\VecSeps^{(k)}}{Seps} Uniformly with Respect to \mathinhead{k\in \mySet{N}}{set}}
\label{app:proof2A}
In order to prove the uniform convergence in distribution, $\VecSn^{(k)} \ConvDist{n\rightarrow\infty} \VecSeps^{(k)}$, uniformly with respect to $ k\in \mN$, we first prove, in Lemma \ref{Lem:PDF-convergence}, that as $n\rightarrow\infty$ the sequence of \acp{pdf} of $\Snk$, $\pdf{\Snk}\left(\myVec{s}^{(k)}\right)$, converges to the \ac{pdf} of $\Sepsk$, $\pdf{\Sepsk}\left(\myVec{s}^{(k)}\right)$, uniformly in $\myVec{s}^{(k)} \in \mR^k$ and in  $ k \in \mN$. Next, we show in Corollary \ref{lem: convDist} that $\Snk \ConvDist{n\rightarrow\infty} \Sepsk $ uniformly in $k\in \mN$.

To that aim, let us define the set $\mK \triangleq \{1,2,\ldots,k\}$ and consider the $k$- dimensional zero-mean, memoryless random vectors $\Snk$ and $\Sepsk$ with their respective diagonal correlation matrices expressed below: 
\begin{subequations}
\begin{equation}
    \mathsf{R}_n^{(k)}\triangleq\E\big\{\big(\Snk\big)\big(\Snk\big)^T\big\}=\textrm{diag}\big(\Sigsn[1], \ldots, \Sigsn[k]\big),
\end{equation}
     \begin{equation}
     \mathsf{R}_{\eps}^{(k)}\triangleq\E\big\{\big(\Sepsk\big)\big(\Sepsk\big)^T\big\}=\textrm{diag}\big(\Sigseps[1], \ldots, \Sigseps[k]\big).
\end{equation}
\end{subequations}
Since $\myEps_n \triangleq \frac{\lfloor n \cdot \myEps \rfloor}{n}$ it holds that $\frac{n\cdot \myEps-1}{n} \leq \myEps_n \leq\frac{n\cdot \myEps}{n}$; therefore
\begin{equation}
\label{Eqn:limit_epsilon_n}
     \mathop{\lim}\limits_{n\rightarrow \infty}\myEps_n=\myEps.
\end{equation}
 Now we note that since $\sigma_{\Sc}^{2}(t)$ is uniformly continuous, then by the definition of a uniformly continuous function, for each $i\in \mN$, the limit in \eqref{Eqn:limit_epsilon_n} implies that
 \begin{equation}
 \label{eqn:limit_sigmas_n}
       \mathop{\lim}\limits_{n\rightarrow \infty} \Sigsn[i]\equiv  \mathop{\lim}\limits_{n\rightarrow \infty} \Csc\left(i\cdot \frac{\Tc}{p+\eps_n} \right) =  \Csc\left(i\cdot \frac{\Tc}{p+\eps} \right) \equiv \Sigseps[i].
 \end{equation}

From Assumption \ref{Assumption_unif_conv}, it follows that $\Sigsn[i]$ satisfies $\Sigsn[i]>\frac{1}{2\pi}$; Hence, we can state the following lemma:

\begin{lemma}
\label{Lem:PDF-convergence}
The \ac{pdf} of $\Snk$, $\pdf{\Snk}\left(\myVec{s}^{(k)}\right)$, converges as $n\rightarrow\infty$ to the \ac{pdf} of $\Sepsk$, $\pdf{\Sepsk}\left(\myVec{s}^{(k)}\right)$, uniformly in $\myVec{s}^{(k)}\in \mR^k$ and in $k\in \mN$: 
\begin{equation*}
    \mathop{\lim}\limits_{n\rightarrow \infty} \pdf{\Snk}\left(\myVec{s}^{(k)}\right)=\pdf{\Sepsk}\left(\myVec{s}^{(k)}\right), \quad \forall \myVec{s}^{(k)} \in \mR^k, \forall k\in \mN.
\end{equation*}
\end{lemma}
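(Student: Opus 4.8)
The plan is to reduce everything to the product structure of the densities. Under Assumption \ref{Assumption_unif_conv} the (scaled) variance function satisfies $\delta_0 \triangleq \min_{0\le t\le \Tc}\Csc(t) > \frac{1}{2\pi}$ (this is exactly the inequality already noted before the lemma), while boundedness of $\Csc(\cdot)$ gives $V_{\max}\triangleq \max_{0\le t\le\Tc}\Csc(t)<\infty$; hence $\delta_0\le \Sigsn[i],\Sigseps[i]\le V_{\max}$ for all $i,n$. Since the sources are zero-mean and memoryless with diagonal covariance, writing $g_v(s)\triangleq(2\pi v)^{-1/2}\exp(-s^2/(2v))$ we have $\pdf{\Snk}(\myVec{s}^{(k)})=\prod_{i=1}^{k}g_{\Sigsn[i]}(s_i)$ and $\pdf{\Sepsk}(\myVec{s}^{(k)})=\prod_{i=1}^{k}g_{\Sigseps[i]}(s_i)$, and each factor is bounded by $\sup_s g_v(s)=(2\pi v)^{-1/2}\le(2\pi\delta_0)^{-1/2}\triangleq\rho<1$.

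The argument then splits into two regimes. For the tail in the blocklength: because every factor is at most $\rho<1$, for all $\myVec{s}^{(k)}\in\mR^k$ both $\pdf{\Snk}(\myVec{s}^{(k)})\le\rho^k$ and $\pdf{\Sepsk}(\myVec{s}^{(k)})\le\rho^k$, so given $\epsilon>0$ one picks $K_0$ with $\rho^{K_0}<\epsilon$ and the difference is below $\epsilon$ for every $k\ge K_0$, uniformly in $\myVec{s}^{(k)}$ and in $n$. It remains to treat the finitely many indices $k\in\{1,\dots,K_0-1\}$. For fixed $k$ and each coordinate $i\le k$, \eqref{eqn:limit_sigmas_n} gives $\Sigsn[i]\to\Sigseps[i]$; a mean-value estimate shows $\sup_{s\in\mR}|g_{v_1}(s)-g_{v_2}(s)|\le C|v_1-v_2|$ for $v_1,v_2\in[\delta_0,V_{\max}]$, with $C$ finite because $\partial_v g_v(s)=g_v(s)\big(\frac{s^2}{2v^2}-\frac{1}{2v}\big)$ and $\sup_s s^2 g_v(s)=\frac{2\sqrt v}{e\sqrt{2\pi}}\le\frac{2\sqrt{V_{\max}}}{e\sqrt{2\pi}}$ is bounded. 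Using the telescoping identity $\prod_{i=1}^{k}a_i-\prod_{i=1}^{k}b_i=\sum_{j=1}^{k}\big(\prod_{i<j}b_i\big)(a_j-b_j)\big(\prod_{i>j}a_i\big)$ with $|a_i|,|b_i|\le\rho$ yields $\big|\pdf{\Snk}(\myVec{s}^{(k)})-\pdf{\Sepsk}(\myVec{s}^{(k)})\big|\le \rho^{k-1}\sum_{j=1}^{k}\sup_{s}|g_{\Sigsn[j]}(s)-g_{\Sigseps[j]}(s)|\le C\rho^{k-1}\sum_{j=1}^{k}|\Sigsn[j]-\Sigseps[j]|$, which for fixed $k$ tends to $0$ as $n\to\infty$, uniformly in $\myVec{s}^{(k)}$; let $N_k$ make it $<\epsilon$. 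Taking $N_0\triangleq\max\{N_1,\dots,N_{K_0-1}\}$, for all $n\ge N_0$, all $k\in\mN$ and all $\myVec{s}^{(k)}\in\mR^k$ the difference is $<\epsilon$, which is the asserted uniform convergence.

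The main obstacle is that the individual variance sequences do \emph{not} converge uniformly in the coordinate index: the sampling instants $i\Tc/(p+\myEps_n)$ and $i\Tc/(p+\myEps)$ drift apart modulo $\Tc$ when $i$ grows with $n$, so a coordinate-by-coordinate estimate cannot be summed over $i\le k$ uniformly in $k$. The point that makes the proof work — and the sole place where the normalization of Assumption \ref{Assumption_unif_conv} is essential — is that forcing every Gaussian factor to have sup-norm strictly below $1$ makes both product densities decay geometrically in $k$, so the troublesome large-$k$ regime needs no estimate at all and uniformity only has to be verified for finitely many blocklengths.
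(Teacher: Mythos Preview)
Your proof is correct and follows essentially the same approach as the paper's (which simply defers to \cite[Lemma B.1]{shlezinger2019capacity}): the key idea in both is that the lower bound $\Csc(t)>\frac{1}{2\pi}$ forces each univariate Gaussian factor to have sup-norm $\rho<1$, so the product density is at most $\rho^k$ and the large-$k$ regime is handled trivially, leaving only finitely many blocklengths where the coordinatewise variance convergence \eqref{eqn:limit_sigmas_n} suffices. Your write-up is in fact more explicit than the paper's, which merely cites the earlier lemma; the telescoping/Lipschitz estimate you give for the small-$k$ regime makes the argument self-contained.
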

\begin{proof}
The proof of the lemma directly follows from the steps in the proof of \cite[Lemma B.1]{shlezinger2019capacity}\myFtn{\cite[Lemma B.1]{shlezinger2019capacity} considers a \ac{ct} memoryless \ac{wscs} Gaussian noise process $\Wi_c(t)$ sampled synchronously and asynchronously to yield $\Wi_{n}^{(k)}$ and $\Wi_{\myEps}^{(k)}$ respectively, both having independent entries. This Lemma proves that, for both Gaussian vectors of independent entries, if the variance of $\Wi_{n}^{(k)}$ converges as $n\rightarrow\infty$ to the variance of $\Wi_{\myEps}^{(k)}$, then the PDF (which in this case is a continuous mapping of the variance) of $\Wi_{n}^{(k)}$ also converges to the PDF of $\Wi_{\myEps}^{(k)}$. Also, for simplicity, the assumption $\frac{1}{2\pi} < \Cwc(t) < \infty $ for all $t \in \mySet{R}$ was used to prove uniform convergence of the PDF of $\Wi_{n}^{(k)}$. A similar setup is applied in this work with $\frac{1}{2\pi}<\sigma^2_{S_c}(t)<\infty$ for all $t\in \mR$, for the memoryless \ac{ct} \ac{wscs} process $\Sc(t)$; $\Snk$ and $\Sepsk$ also have independent entries, for the synchronously and for the asynchronously sampled case respectively. The proof for uniform convergence in $k \in \mN$ from \cite[Lemma B.1]{shlezinger2019capacity} also applies to this current work.}, which was applied for a Gaussian  noise process with independent entries and variance above $\frac{1}{2\pi}$. 
\end{proof}
Lemma \ref{Lem:PDF-convergence} gives rise to the following corollary:
\begin{corollary}
\label{lem: convDist}
	For any $k \in \mySet{N}$ it holds that $\Snk \ConvDist{n \rightarrow \infty} \Sepsk$, and convergence is uniform over $k$.
\end{corollary}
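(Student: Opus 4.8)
The plan is to lift the uniform pointwise convergence of probability densities from Lemma \ref{Lem:PDF-convergence} to convergence in distribution, while retaining the uniformity in the dimension $k$. First I would fix $k \in \mN$. Since both $\Snk$ and $\Sepsk$ are zero-mean Gaussian vectors with the diagonal covariance matrices $\mathsf{R}_n^{(k)}$ and $\mathsf{R}_{\eps}^{(k)}$, and Lemma \ref{Lem:PDF-convergence} gives $\pdf{\Snk}(\myVec{s}^{(k)}) \to \pdf{\Sepsk}(\myVec{s}^{(k)})$ for every $\myVec{s}^{(k)} \in \mR^k$, the standard fact that pointwise convergence of densities entails convergence in distribution (Scheff\'{e}'s lemma upgrades pointwise density convergence to $L^1$-, hence total-variation, convergence, which implies weak convergence) already yields $\Snk \ConvDist{n\rightarrow\infty} \Sepsk$ for that $k$. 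Equivalently, and more directly, one may argue via characteristic functions: by \eqref{eqn:limit_sigmas_n}, $\Sigsn[i] \to \Sigseps[i]$ for each $i$, so the Gaussian characteristic functions $\prod_{i=1}^{k}\exp\!\big(-\tfrac12 t_i^2 \Sigsn[i]\big)$ converge pointwise to $\prod_{i=1}^{k}\exp\!\big(-\tfrac12 t_i^2 \Sigseps[i]\big)$, and the multivariate L\'{e}vy continuity theorem gives the conclusion.

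Next I would establish the uniformity over $k$, which is the substantive part. The key is that Lemma \ref{Lem:PDF-convergence} already delivers the density convergence uniformly in both $\myVec{s}^{(k)}$ and $k$, and that Assumption \ref{Assumption_unif_conv} forces every coordinate variance to satisfy $\Sigsn[i] > \tfrac{1}{2\pi}$ (and likewise $\Sigseps[i] > \tfrac{1}{2\pi}$), so the joint density is bounded by $\rho^k$ for a fixed $\rho \in (0,1)$ and decays geometrically in the dimension. Writing the joint CDF as the integral of the joint density over a lower orthant, I would split $\mR^k$ into a bounded box around the origin and its complement: the contribution of the complement is controlled, uniformly over all evaluation points, by the geometric density bound, while the contribution of the box is controlled by the uniform-in-$(\myVec{s}^{(k)},k)$ density gap supplied by Lemma \ref{Lem:PDF-convergence}, with the product form of the densities and CDFs exploited throughout. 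This is exactly the line of argument of \cite[Lemma B.1]{shlezinger2019capacity}, which, as the footnote to Lemma \ref{Lem:PDF-convergence} records, carries over to the present setting since $\Snk$ and $\Sepsk$ have independent coordinates whose variances lie in a fixed compact subinterval of $(\tfrac{1}{2\pi},\infty)$.

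I expect the main obstacle to be precisely this uniformity over $k$. A naive telescoping estimate of the difference between the product densities, or product CDFs, produces a sum of $k$ per-coordinate discrepancies, and this sum need not vanish uniformly in $k$ because the per-coordinate variance gaps $|\Sigsn[i] - \Sigseps[i]|$ are not small uniformly over all indices $i$ (only for each fixed finite range of $i$). The purpose of Assumption \ref{Assumption_unif_conv} is to furnish the geometric decay in $k$ of the joint Gaussian densities that counteracts this accumulation; balancing the accumulated error against that decay is the technical heart of the argument. Once this is done, $\Snk \ConvDist{n\rightarrow\infty} \Sepsk$ uniformly in $k$ follows at once.
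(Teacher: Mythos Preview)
Your approach matches the paper's: both apply Scheff\'{e}'s theorem to the density convergence already established in Lemma~\ref{Lem:PDF-convergence} to obtain $\Snk \ConvDist{n\to\infty} \Sepsk$ for each fixed $k$, and then carry the uniformity in $k$ over from the pdfs to the cdfs. The paper records this in two sentences---invoking \cite[Thm.~1]{scheffe1947useful} for the first part and simply asserting the uniformity transfer for the second---whereas you spell out more of the mechanism (correctly flagging that naive per-coordinate telescoping fails because the variance gaps $|\Sigsn[i]-\Sigseps[i]|$ are not uniform in $i$, and identifying the geometric density decay from Assumption~\ref{Assumption_unif_conv} as the remedy); but the underlying argument and its reliance on Lemma~\ref{Lem:PDF-convergence} and \cite[Lemma~B.1]{shlezinger2019capacity} are the same.
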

\begin{proof}
The corollary holds due to \cite[Thm.1]{scheffe1947useful}\myFtn{\cite[Thm.1]{scheffe1947useful}\label{fn:convpdf_imply_dist}: If, for a sequence $\{ p_n(x)\}_{n\in \mN}$ of densities, $ \mathop{\lim}\limits_{n\rightarrow \infty} p_n(x) = p(x)$ for almost all $x$ in $\mR$, then a sufficient condition that $ \mathop{\lim}\limits_{n\rightarrow \infty} \mathop{\int}\limits_{\mS} p_n(x) dx = \mathop{\int}\limits_{\mS}p(x) dx$, uniformly for all Borel sets $\mS$ in $\mR$, is that $p(x)$ be a density.}: Since $\pdf{\Snk}\left(\myVec{s}^{(k)}\right)$ converges to $\pdf{\Sepsk}\left(\myVec{s}^{(k)}\right)$ then $\Snk \ConvDist{n \rightarrow \infty} \Sepsk$. In addition, since the convergence of the \acp{pdf} is uniform in $k \in \mN$, the convergence of the \acp{cdf} is also uniform in $k \in \mN$.
\end{proof}

\subsection{Showing that \mathinhead{\zkn{k}{n}'\left( \cdf{\Sn,\Shn}\opt \right)}{zkn} and \mathinhead{\zkeps{k,\eps}'\left( \cdf{\Seps,\Sheps} \right)}{zk} Satisfy the Conditions of Thm. \ref{thm:plim}}
\label{app:proof2b}

	Let $\cdf{\Sn,\Shn}\opt$ denote the joint distribution for the source process and the corresponding optimal reproduction process satisfying the distortion constraint $D$.
		We next prove that for $\cdf{\Sn,\Shn}\opt \ConvDist{n\rightarrow\infty}\cdf{\Seps,\Sheps}$, then $\zkn{k}{n}'\left( \cdf{\Sn,\Shn}\opt \right)$ and $\zkeps{k,\eps}'\left( \cdf{\Seps,\Sheps} \right)$ satisfy \ref{itm:assm1}-\ref{itm:assm2}. In particular, in Lemma \ref{lem:AsyncZk} we prove  that $\tilde{Z}_{k,n}'\left( \cdf{\Sn,\Shn}\opt\right)  \ConvDist{n \rightarrow \infty} \zkeps{k,\eps}'\left(\cdf{\Seps,\Sheps}\right)$ uniformly in $k \in \mySet{N}$ for  the optimal zero-mean Gaussian reproduction vectors with independent entries. Lemma \ref{lem:AsyncZk2} proves that for any fixed $n$, $\zkn{k}{n}'\left( \cdf{\Sn,\Shn}\opt \right)$ converges in distribution	to a deterministic scalar as  $k \rightarrow \infty$. 
		\begin{lemma}
			\label{lem:AsyncZk}
			Let $\{\Shnk\}_{n \in \mySet{N}}$ and $\{\Wnk\}_{n \in \mySet{N}}$ be two sets of mutually independent sequences of $k\times 1$ zero-mean Gaussian random vectors  related via the backward channel \eqref{eqn:tst_chan_DCD_vct}, each having independent entries and let \acp{pdf} $\pdf{\Shnk}\left( \hat{\myVec{s}}^{(k)}\right)$ and $\pdf{\Wnk}\left( \myVec{w}^{(k)}\right)$, respectively, denote their \acp{pdf}. Consider two other zero-mean Gaussian random vectors $\Shepsk$ and $\Wepsk$ each having independent entries with the \acp{pdf} $\pdf{\Shepsk}\left( \hat{\myVec{s}}^{(k)} \right)$ and $\pdf{\Wepsk}\left( \myVec{w}^{(k)} \right)$, respectively, such that  $\mathop{\lim}\limits_{n \rightarrow \infty}\pdf{\Shnk}\left( \hat{\myVec{s}}^{(k)} \right)= \pdf{\Shepsk}\left( \hat{\myVec{s}}^{(k)} \right)$ uniformly in $\hat{\myVec{s}}^{(k)}\in \mR^{k}$ and uniformly with respect to $k\in \mN$, and $\mathop{\lim}\limits_{n \rightarrow \infty}\pdf{\Wnk}\left( \myVec{w}^{(k)} \right)= \pdf{\Wepsk}\left( \myVec{w}^{(k)} \right)$ uniformly in $\myVec{w}^{(k)} \in \mR^{k}$ and uniformly with respect to $k \in \mySet{N}$. 
			Then, the \acp{rv} $\zkn{k}{n}'\left( \cdf{\Sn, \Shn}\opt\right)$ and $\zkeps{k,\eps}'\left( \cdf{\Seps, \Sheps}\right)$, defined via \eqref{eqn:zkdefs} satisfy $\zkn{k}{n}'\left( \cdf{\Sn, \Shn}\opt\right)  \ConvDist{n \rightarrow \infty} \zkeps{k,\epsilon}'\left( \cdf{\Seps, \Sheps}\right)$ uniformly over $k \in \mySet{N}$.
		\end{lemma}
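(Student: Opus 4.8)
The plan is to pass through characteristic functions, along the lines of \cite[Lemma B.2]{shlezinger2019capacity}, but with the extra care required because here the backward‑channel relation \eqref{eqn:tst_chan_DCD_vct} — hence the denominator $\pdf{\Snk}$ — is available only for synchronous sampling and for the optimal reproduction. First I would use \eqref{eqn:tst_chan_DCD_vct} and the assumed independence of $\Shnk$ and $\Wnk$ to rewrite the information density rate as an explicit functional of $(\Shnk,\Wnk)$: since $\Snk=\Shnk+\Wnk$ is zero‑mean Gaussian with independent entries of variances $\Sigsn[m]=\Sigshn[m]+D_n[m]$, and $\pdf{\Snk|\Shnk}\left(\myVec{s}^{(k)}\,\big|\,\hat{\myVec{s}}^{(k)}\right)=\pdf{\Wnk}\big(\myVec{s}^{(k)}-\hat{\myVec{s}}^{(k)}\big)$,
\begin{equation*}
\zkn{k}{n}'\big(\cdf{\Sn,\Shn}\opt\big)=\frac{1}{k}\log\frac{\pdf{\Wnk}(\Wnk)}{\pdf{\Snk}(\Shnk+\Wnk)},
\end{equation*}
with the analogous identity for $\zkeps{k,\myEps}'\big(\cdf{\Seps,\Sheps}\big)$. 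Under Assumption \ref{Assumption_unif_conv} every Gaussian factor has variance exceeding $\frac{1}{2\pi}$ and is therefore bounded by $1$, so the product rule $\big\|\pdf{\Shnk}\pdf{\Wnk}-\pdf{\Shepsk}\pdf{\Wepsk}\big\|_\infty\le\big\|\pdf{\Shnk}-\pdf{\Shepsk}\big\|_\infty+\big\|\pdf{\Wnk}-\pdf{\Wepsk}\big\|_\infty$ turns the hypothesis into the joint convergence $\pdf{\Shnk,\Wnk}\Conv{n\to\infty}\pdf{\Shepsk,\Wepsk}$, uniformly in the realisation and in $k$; together with Scheff\'{e}'s lemma this also yields $(\Shnk,\Wnk)\ConvDist{n\to\infty}(\Shepsk,\Wepsk)$ with uniform tightness for each fixed $k$. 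The content of the lemma is to make the resulting convergence of $\zkn{k}{n}'$ uniform in $k$.

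Second, I would diagonalise the problem. Writing $(\Shnk)_m=\sqrt{\Sigshn[m]}\,U_m$ and $(\Wnk)_m=\sqrt{D_n[m]}\,V_m$ with $U_m,V_m$ i.i.d.\ $\mN(0,1)$, a direct computation gives $\zkn{k}{n}'=\frac{1}{k}\sum_{m=1}^{k}Y_{n,m}$ with the $Y_{n,m}$ independent across $m$, each a fixed quadratic functional of $(U_m,V_m)$ whose law depends only on the scalar $\rho_{n,m}\triangleq\Sigshn[m]/\Sigsn[m]$, which lies in a compact subinterval $[\rho_-,\rho_+]\subset(0,1)$ in the low‑distortion regime of Theorem \ref{Thm:rate_WSACS} (there $D_n[m]=D$, cf.\ Remark \ref{stationary_noise}, so $\rho_{n,m}=1-D/\Sigsn[m]$ with $\Sigsn[m]$ bounded above and below away from zero). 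Hence $\phi_{Y_{n,m}}(s)=g(\rho_{n,m},s)$ for a smooth $g$ with $g(\rho,0)\equiv 1$, giving $|g(\rho_1,s)-g(\rho_2,s)|\le C\min(1,|s|)\,|\rho_1-\rho_2|$ for $\rho_1,\rho_2\in[\rho_-,\rho_+]$ and bounded $|s|$. The characteristic function of $\zkn{k}{n}'$ factorises as $\prod_{m=1}^{k}g(\rho_{n,m},t/k)$, and the elementary bound $|\prod a_m-\prod b_m|\le\sum_m|a_m-b_m|$ for moduli $\le 1$ yields
\begin{equation*}
\Big|\phi_{\zkn{k}{n}'}(t)-\phi_{\zkeps{k,\myEps}'}(t)\Big|\le\frac{C|t|}{k}\sum_{m=1}^{k}\big|\rho_{n,m}-\rho_{\myEps,m}\big| .
\end{equation*}
For each fixed $k$ the right‑hand side is a finite sum whose terms vanish as $n\to\infty$ by \eqref{eqn:limit_sigmas_n}, so L\'{e}vy's continuity theorem already gives the fixed‑$k$ convergence $\zkn{k}{n}'\ConvDist{n\to\infty}\zkeps{k,\myEps}'$. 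The same normalisation also gives $\E\big[\zkn{k}{n}'\big]=\frac{1}{2k}\sum_{m=1}^{k}\log\big(\Sigsn[m]/D_n[m]\big)$ (the quadratic part being mean‑zero) and $\Var\big[\zkn{k}{n}'\big]\le c/k$ uniformly in $n$, with identical bounds for $\zkeps{k,\myEps}'$.

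The main obstacle is exactly this uniformity in $k$: bounding $\frac{1}{k}\sum_{m=1}^{k}|\rho_{n,m}-\rho_{\myEps,m}|$ directly is useless, because this is a partial average of $\big|\Csc(m\Tc/(p+\myEpsn))-\Csc(m\Tc/(p+\myEps))\big|$ and for $k$ of order $1/(\myEps-\myEpsn)$ or larger the two orbits on $[0,\Tc)$ have drifted apart, so the average stays bounded away from $0$ for every fixed $n$. What genuinely holds — and is all one needs — is cancellation in the \emph{signed} partial average: for every continuous periodic $h$, both $\frac{1}{k}\sum_{m=1}^{k}h\big(\Csc(m\Tc/(p+\myEpsn))\big)$ and $\frac{1}{k}\sum_{m=1}^{k}h\big(\Csc(m\Tc/(p+\myEps))\big)$ tend, as $k\to\infty$, to the common spatial mean $\frac{1}{\Tc}\int_{0}^{\Tc}h(\Csc(t))\,\mathrm{d}t$ — the first because $\myEpsn$ is rational with denominator tending to infinity, so its orbit becomes asymptotically equidistributed, the second by Weyl equidistribution since $\myEps\notin\mySet{Q}$. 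Converting this cancellation into convergence of $\phi_{\zkn{k}{n}'}(t)$ to $\phi_{\zkeps{k,\myEps}'}(t)$ that is uniform in $k$ for each fixed $t$ is the crux, and it is precisely what the hypothesis of uniform‑in‑$k$ convergence of $\pdf{\Shnk}$ and $\pdf{\Wnk}$ supplies. Once that is in hand, the uniform variance bound $\Var\le c/k$ — which, via Chebyshev, confines the transition region of both $\cdf{\zkn{k}{n}'}$ and $\cdf{\zkeps{k,\myEps}'}$ to width $O(k^{-1/2})$ for large $k$ — disposes of the tail in $k$, while the finitely many small $k$ are covered by the fixed‑$k$ convergence above; L\'{e}vy's continuity theorem then upgrades all of this to $\zkn{k}{n}'\big(\cdf{\Sn,\Shn}\opt\big)\ConvDist{n\to\infty}\zkeps{k,\myEps}'\big(\cdf{\Seps,\Sheps}\big)$ uniformly over $k\in\mySet{N}$, proving the lemma.
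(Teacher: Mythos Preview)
Your diagonalisation and characteristic-function route is more explicit than the paper's, which instead applies the extended continuous mapping theorem (verifying separability of $(\Sepsk,\Shepsk)$ via Markov's inequality and continuity of $f_{k,\myEps}$) to get the fixed-$k$ convergence, and then obtains uniformity in $k$ by working with $k\cdot\zkn{k}{n}'$ and invoking the characteristic-function argument of \cite[Lemma~B.2]{shlezinger2019capacity} wholesale. Your computations of the mean and the $O(1/k)$ variance are correct, as is your identification of the obstacle.

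The gap is exactly at the step you label ``the crux''. Having correctly observed that the telescoping bound $\frac{C|t|}{k}\sum_{m=1}^{k}|\rho_{n,m}-\rho_{\myEps,m}|$ fails for uniformity (orbit drift for $m\gtrsim 1/(\myEps-\myEpsn)$), you do not actually close it. The equidistribution remark is a double-limit statement: for fixed $n$ the $\rho_{n,m}$ orbit is periodic with period $p_n$, so its $k$-average tends to the \emph{finite} Riemann sum $\frac{1}{p_n}\sum_{m=1}^{p_n}h(\rho_{n,m})$, not to the integral, and you never show the two iterated limits agree uniformly. The sentence ``it is precisely what the hypothesis of uniform-in-$k$ convergence of $\pdf{\Shnk}$ and $\pdf{\Wnk}$ supplies'' is an assertion, not a derivation: that hypothesis concerns products of Gaussian densities each bounded by $1$ under Assumption~\ref{Assumption_unif_conv}, so the product decays in $k$ and large $k$ is trivially uniform --- but $\phi_{\zkn{k}{n}'}(t)=\prod_m g(\rho_{n,m},t/k)$ has each factor near $1$ and stays of order one, so the same mechanism does not transfer. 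Finally, your Chebyshev endgame reduces the problem to uniform-in-$k$ convergence of the means $\E[\zkn{k}{n}']=\frac{1}{2k}\sum_m\log(\Sigsn[m]/D)$, which is exactly the signed-average cancellation you have not established; the argument is circular. To close it along your lines you must actually prove that the PDF hypothesis propagates to a uniform characteristic-function bound, which is what \cite[Lemma~B.2]{shlezinger2019capacity} does and what the paper simply cites.
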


		\color{black}
		
		\begin{proof}
		To begin the proof, for $\left(\myVec{s}^{(k)}, \hat{\myVec{s}}^{(k)}\right) \in \mR^{2k}$, define
	\begin{equation}
			\label{eqn:fkdef}
			\fkn\left(\myVec{s}^{(k)}, \hat{\myVec{s}}^{(k)} \right)  \triangleq \frac{\pdf{\Snk | \Shnk}\left( \myVec{s}^{(k)} \big| \hat{\myVec{s}}^{(k)}\right) }{\pdf{\Snk}\left( \myVec{s}^{(k)} \right) },
			\qquad
			\fkeps\left(\myVec{s}^{(k)}, \hat{\myVec{s}} ^{(k)}\right)  \triangleq \frac{\pdf{\Sepsk | \Shepsk}\left( \myVec{s}^{(k)} \big| \hat{\myVec{s}}^{(k)}\right) }{\pdf{\Sepsk}\left( \myVec{s}^{(k)} \right)}.	
			\end{equation}

Now, we recall the backward channel relationship \eqref{eqn:tst_chan_DCD_vct}:
	    \begin{equation}
	    \label{eqn:testchannelrdf_synch2}
    \Snk=\Shnk+\Wnk,
\end{equation}
where $\Shnk$ and $\Wnk$ are mutually independent zero-mean, Gaussian random vectors with independent entries, corresponding to the optimal compression process and its respective distortion.
From this relationship we obtain
			\begin{align}
			\pdf{\Snk | \Shnk }\left( \myVec{s}^{(k)}  \big| \hat{\myVec{s}}^{(k)} \right)&\stackrel{(a)}{=}\pdf{\Shnk+\Wnk | \Shnk }\left( \myVec{s}^{(k)}  \big| \hat{\myVec{s}}^{(k)} \right)\nonumber \\
			&= \pdf{\Wnk | \Shnk }\left(\myVec{s}^{(k)}  -\hat{\myVec{s}}^{(k)}\big|\hat{\myVec{s}}^{(k)}\right)\nonumber\\
			&\stackrel{(b)}{=}\pdf{\Wnk }\left(\myVec{s}^{(k)}-\hat{\myVec{s}}^{(k)}\right),	\label{eqn:conditional_pdf}
			\end{align}	
			where $(a)$ follows since $\Snk = \Shnk + \Wnk$, see \eqref{eqn:testchannelrdf_synch2}, and $(b)$ follows since $\Wnk$ and $\Shnk$ are mutually independent. 
 The joint \ac{pdf} of $\Snk$ and $\Shnk$ can be expressed via the conditional \ac{pdf} as:
 \begin{align}
 \label{eqn:joint_pdf}
     \pdf{\Snk,\Shnk }\left( \myVec{s}^{(k)},\hat{\myVec{s}}^{(k)} \right)\!=\!\pdf{\Snk | \Shnk }\left( \myVec{s}^{(k)}  \big| \hat{\myVec{s}}^{(k)} \right)\cdot \pdf{\Shnk}\left(\hat{\myVec{s}}^{(k)}\right)\!\stackrel{(a)}{=}\!\pdf{\Wnk }\left(\myVec{s}^{(k)}\!-\!\hat{\myVec{s}}^{(k)}\right)\cdot \pdf{\Shnk}\left(\hat{\myVec{s}}^{(k)}\right),
 \end{align}
 where $(a)$ follows from \eqref{eqn:conditional_pdf}.
 Since $\Shnk$ and $\Wnk$ are Gaussian and mutually independent and since the product of two multivariate Gaussian \acp{pdf} is also a multivariate Gaussian \ac{pdf} \cite[Sec. 3]{bromiley2003products}, it follows from \eqref{eqn:joint_pdf} that $\Snk$ and $\Shnk$ are jointly Gaussian.
Following the mutual independence of $\Wnk$ and $\Shnk$, the \ac{rhs} of \eqref{eqn:joint_pdf} is also equivalent to the joint \ac{pdf} of $\left[\left(\Wnk\right)^T, \left(\Shnk\right)^T\right]^T$ denoted by $\pdf{\Wnk,\Shnk }\left(\myVec{s}^{(k)}-\hat{\myVec{s}}^{(k)},\hat{\myVec{s}}^{(k)} \right)$.
Now, from \eqref{eqn:conditional_pdf}, the assumption $\mathop{\lim}\limits_{n \rightarrow \infty}\pdf{\Wnk}\left( \myVec{w}^{(k)} \right)= \pdf{\Wepsk}\left( \myVec{w}^{(k)} \right)$ implies that a limit exists for the conditional \ac{pdf} $\pdf{\Snk\mid\Shnk }\left( \myVec{s}^{(k)}\mid\hat{\myVec{s}}^{(k)} \right)$, this we denote by $\pdf{\Sepsk\mid\Shepsk }\left( \myVec{s}^{(k)}\mid\hat{\myVec{s}}^{(k)} \right)$. Combining this with the assumption  $\mathop{\lim}\limits_{n \rightarrow \infty}\pdf{\Shnk}\left( \hat{\myVec{s}}^{(k)} \right)= \pdf{\Shepsk}\left( \hat{\myVec{s}}^{(k)} \right)$, we have that,

\begin{align}
\label{eqn:limit_joint_pdf}
   \mathop{\lim}\limits_{n \rightarrow \infty}\pdf{\Snk,\Shnk }\left(\myVec{s}^{(k)},\hat{\myVec{s}}^{(k)} \right)
   &=\mathop{\lim}\limits_{n \rightarrow \infty}\left(\pdf{\Snk | \Shnk }\left(\myVec{s}^{(k)}  \big| \hat{\myVec{s}}^{(k)} \right)\cdot \pdf{\Shnk}\left(\hat{\myVec{s}}^{(k)}\right)\right) \nonumber \\
   &\stackrel{(a)}{=} \mathop{\lim}\limits_{n \rightarrow \infty}\left( \pdf{\Wnk }\left(\myVec{s}^{(k)}-\hat{\myVec{s}}^{(k)}\right)\cdot \pdf{\Shnk}\left(\hat{\myVec{s}}^{(k)}\right)\right)\nonumber\\
   &\stackrel{(b)}{=}\mathop{\lim}\limits_{n \rightarrow \infty}\left( \pdf{\Wnk }\left(\myVec{s}^{(k)}-\hat{\myVec{s}}^{(k)}\right)\right)\cdot\mathop{\lim}\limits_{n \rightarrow \infty}\left( \pdf{\Shnk}\left(\hat{\myVec{s}}^{(k)}\right)\right)\nonumber\\
   &= \pdf{\Sepsk | \Shepsk }\left(\myVec{s}^{(k)}  \big| \hat{\myVec{s}}^{(k)}\right) \cdot \pdf{\Shepsk}\left(\hat{\myVec{s}}^{(k)}\right)\nonumber\\
   &= \pdf{\Sepsk,\Shepsk }\left(\myVec{s}^{(k)},\hat{\myVec{s}}^{(k)} \right),
\end{align}
where $(a)$ follows from \eqref{eqn:conditional_pdf}, and $(b)$ follows since the limit for each sequence in the product exists \cite[Thm. 3.3]{rudin1976principles}; Convergence is uniform in $\left(\left(\hat{\myVec{s}}^{(k)}\right)^T, \left(\myVec{s}^{(k)}\right)^T\right)^T \in \mR^{2k}$ and $k \in \mySet{N}$, as each sequence converges uniformly in $k \in \mySet{N}$ \cite[Page 165]{rudin1976principles}\myFtn{\cite[Page 165, Ex 2]{rudin1976principles}: The solution to this exercise shows that if two functions $\{f_n\}$ and $\{g_n\}$ converge uniformly on a set $E$ and both $\{f_n\}$ and $\{g_n\}$ are sequences of bounded functions then $\{f_ng_n\}$ converges uniformly on $E$.}.  Observe that
the joint \ac{pdf} for the zero-mean Gaussian random vectors $\left[\Snk, \Shnk\right]$ is given by the general expression:
	    \begin{equation}
	    \label{eqn:joint_pdf_gaus}
	         \pdf{\Snk,\Shnk }\!\left(\myVec{s}^{(k)},\hat{\myVec{s}}^{(k)} \right) \! =\!\Big(\Det\big(2\pi\tilde{\Cmat}_{n}^{(2k)}\big)\Big)^{-\frac{1}{2}} \! \exp\left(-\frac{1}{2}\left[\left(\hat{\myVec{s}}^{(k)}\right)^T\!\!, \left(\myVec{s}^{(k)}\right)^T\right] \big(\Cjmat_{n}^{(2k)}\big)^{-1}\!\left[\left(\hat{\myVec{s}}^{(k)}\right)^T\!\!, \left(\myVec{s}^{(k)}\right)^T\right]^T\right), 
	    \end{equation}
	   where $\Cjmat_{n}^{(2k)}$ denotes the joint covariance matrix of $\left[\left(\Shnk\right)^T,\left(\Snk\right)^T\right]^T$. From \eqref{eqn:joint_pdf_gaus} we note that $\pdf{\Snk,\Shnk }\left(\myVec{s}^{(k)},\hat{\myVec{s}}^{(k)} \right)$ is a continuous mapping of $\Cjmat_{n}^{(2k)}$ with respect to the index $n$, see \cite[Lemma B.1]{shlezinger2019capacity}. Hence the convergence in \eqref{eqn:limit_joint_pdf} of $\pdf{\Snk,\Shnk }\left(\myVec{s}^{(k)},\hat{\myVec{s}}^{(k)} \right)$ as $n\rightarrow \infty$ directly implies the convergence of $\Cjmat_{n}^{(2k)}$ as $n\rightarrow\infty$ to a limit which we denote by $\Cjmat_{\eps}^{(2k)}$. It therefore  follows that the limit function $\pdf{\Sepsk,\Shepsk}\left(\myVec{s}^{(k)},\hat{\myVec{s}}^{(k)} \right)$ corresponds to the \ac{pdf} of a Gaussian vector with the covariance matrix $\Cjmat_{\eps}^{(2k)}$.
The joint \ac{pdf} for the zero-mean Gaussian random vectors $\left[\Wnk, \Shnk\right]$ can be obtained using their mutual independence as: 
\begin{align}
	    \label{eqn:joint_pdf_gaus_ind}
	         &\pdf{\Wnk,\Shnk }\left(\myVec{s}^{(k)}-\hat{\myVec{s}}^{(k)},\hat{\myVec{s}}^{(k)} \right)\nonumber\\ &=\!\Big(\Det\big(2\pi\Sigma_{n}^{(2k)}\big)\Big)^{\frac{1}{2}}\! \exp\left(-\frac{1}{2}\left[\left(\myVec{s}^{(k)}\!-\!\hat{\myVec{s}}^{(k)}\right)^T, \left(\hat{\myVec{s}}^{(k)}\right)^T\right] \big(\Sigma_{n}^{(2k)}\big)^{-\!1}\!\left[\left(\myVec{s}^{(k)}\!-\!\hat{\myVec{s}}^{(k)}\right)^T, \left(\hat{\myVec{s}}^{(k)}\right)^T\right]^T\right), 
	    \end{align}
where $\Sigma_{n}^{(2k)}$ denotes the joint covariance matrix of $\left[\left(\Wnk\right)^T,\left(\Shnk\right)^T\right]^T$. Since the vectors $\Wnk$ and $\Shnk$ are zero-mean, mutually independent and, by the relationship \eqref{eqn:tst_chan_DCD_vct}, each vector has independent entries, it follows that $\Sigma_{n}^{(2k)}$ is a diagonal matrix with each diagonal element taking the value of the corresponding temporal variance at the respective index $i\in \{1,2,\ldots k\}$. i.e.,
\begin{align}
\label{covmatrix_ind_vectors}
    \Sigma_{n}^{(2k)}
    &\triangleq\E\left\{\left(\left(\Wnk\right)^T,\left(\Shnk\right)^T\right)^T\cdot\left(\left(\Wnk\right)^T,\left(\Shnk\right)^T\right)\right\}\nonumber\\
    &=\textrm{diag}\big(\E\left\{\left(\Wn[1]\right)^2\right\}, \E\left\{\left(\Wn[2]\right)^2\right\}, \ldots,
    \E\left\{\left(\Wn[k]\right)^2\right\},\Sigshn[1],\Sigshn[2]\ldots, \Sigshn[k]\big).
\end{align}
The convergence of $\pdf{\Wnk,\Shnk }\left(\myVec{s}^{(k)}-\hat{\myVec{s}}^{(k)},\hat{\myVec{s}}^{(k)} \right)$, from \eqref{eqn:limit_joint_pdf}, implies a convergence of the diagonal elements in \eqref{covmatrix_ind_vectors} as $n\rightarrow\infty$. Hence $\Sigma_{n}^{(2k)}$ converges as $n\rightarrow\infty$ to a diagonal joint covariance matrix which we denote by $\Sigma_{\eps}^{(2k)}$. This further implies that the limiting vectors $\Wepsk$ and $\Shepsk$ are zero-mean, mutually independent and each vector has independent entries in $i\in[1,2,\ldots, k]$.

Relationship \eqref{eqn:limit_joint_pdf} implies that the joint limit distribution satisfies $\pdf{\Sepsk,\Shepsk}\left(\myVec{s}^{(k)},\hat{\myVec{s}}^{(k)}\right)=\pdf{\Shepsk}\left(\hat{\myVec{s}}^{(k)}\right)\cdot \pdf{\Wepsk}\left(\myVec{s}^{(k)}-\hat{\myVec{s}}^{(k)}\right)$. Consequently, we can define an asymptotic backward channel that satisfies \eqref{eqn:limit_joint_pdf} via the expression:
\begin{equation}
    \label{eqn:testchannelrdf_asynchron}
	   \Sepsk[i]=\Shepsk[i]+\Wepsk[i]. 
	    \end{equation}

	   Next, by convergence of the joint \ac{pdf} $\pdf{\Wnk }\left(\myVec{s}^{(k)}-\hat{\myVec{s}}^{(k)}\right)\cdot \pdf{\Shnk}\left(\hat{\myVec{s}}^{(k)}\right)$ uniformly in $k \in \mN$ and in $\left(\left(\myVec{s}^{(k)}\right)^T, \left(\hat{\myVec{s}}^{(k)}\right)^T\right)^T\in \mR^{2k}$, it follows from \cite[Thm.1]{scheffe1947useful}\myFtn{Please refer to the footnote \ref{fn:convpdf_imply_dist} on page \pageref{fn:convpdf_imply_dist}.} that $ \left[\big( \Shnk\big) ^T , \big( \Wnk\big) ^T \right]^T \ConvDist{n \rightarrow \infty} \left[\big( \Shepsk\big) ^T , \big( \Wepsk\big) ^T \right]^T$  and the convergence is uniform in $k \in \mN$ and in $\left(\left(\myVec{s}^{(k)}\right)^T, \left(\hat{\myVec{s}}^{(k)}\right)^T\right)^T\in \mR^{2k}$.  Then, by the \ac{cmt} \cite[Thm. 7.7]{kosorok2008introduction}, we have 
	    \begin{align*}
	        \left[\big( \Snk\big) ^T , \big( \Shnk\big) ^T \right]^T\!=\! \left[\big( \Shnk\!+\!\Wnk\big) ^T , \big( \Shnk\big) ^T \right]^T \ConvDist{n \rightarrow \infty} \left[\big( \Shepsk\!+\!\Wepsk\big) ^T , \big( \Shepsk\big) ^T \right]^T
	        \!=\!\left[\big( \Sepsk\big) ^T , \big( \Shepsk\big) ^T \right]^T.
	    \end{align*}
%

	    Now, using the extended \ac{cmt} \cite[Thm. 7.24]{kosorok2008introduction}\myFtn{\label{fn:cmt} \cite[Thm. 7.24]{kosorok2008introduction}: (Extended continuous mapping). Let $\mathbb{D}_n \subset \mathbb{D}$ and $g_n$ :
$\mathbb{D}_n \mapsto \mathbb{E}$ satisfy the following: If $x_n\rightarrow x$ with $x_n \in \mathbb{D}_n$ for all $n \geq 1$ and
$x \in \mathbb{D}_0$, then $g_n(x_n) \rightarrow g(x)$, where $\mathbb{D}_0 \subset \mathbb{D}$ and $g : \mathbb{D}_0 \mapsto \mathbb{E}$. Let $X_n$ be
maps taking values in $\mathbb{D}_n$, and let $X$ be Borel measurable and separable.
Then
(i) $X_n \rightsquigarrow X$ implies $g_n(X_n) \rightsquigarrow g(X)$.
(ii) $X_n \mathop{\rightarrow}\limits^{P}
X$ implies $g_n(X_n)\mathop{\rightarrow}\limits^{P}g(X)$.
(iii) $X_n
\mathop{\rightarrow}\limits^{as*} X$ implies $g_n(X_n)\mathop{\rightarrow}\limits^{as*}g(X)$.}, we will show that 	$\fkn\big(\Snk , \Shnk  \big)  \ConvDist{n \rightarrow \infty}  \fkeps\big(\Sepsk , \Shepsk \big)$ for each $k \in \mySet{N}$, following the same approach of the proof for \cite[Lemma B.2]{shlezinger2019capacity} \myFtn{\cite[Lemma B.2]{shlezinger2019capacity}: Consider a sequence of $k\times 1$ zero-mean Gaussian random vectors with independent entries $\{\Xin_n^{(k)}\}_{n \in \mySet{N}}$ and a zero-mean Gaussian random vector  with independent entries $\Xin^{(k)}$, such that  $\Xin_n^{(k)} \ConvDist{n \rightarrow \infty} \Xin^{(k)}$ uniformly with respect to $k \in \mySet{N}$. Then, the \acp{rv} $\zkn{k}{n}'\left( \cdf{\Xnvec}\right)$ and $\zkeps{k}'\left( \cdf{\Xin}\right)$ defined in \cite[Eqn. (B.1)]{shlezinger2019capacity} satisfy $\zkn{k}{n}'\left( \cdf{\Xnvec}\right)  \ConvDist{n \rightarrow \infty} \zkeps{k}'\left( \cdf{\Xin}\right)$ uniformly over $k \in \mySet{N}$.}. Then, since $\zkn{k}{n}'\left( \cdf{\Sn,\Shn}\opt\right)=\frac{1}{k}\log \fkn\left(\Snk, \Shnk \right) $ and $\zk{k}'\left( \cdf{\Seps,\Sheps}\right)=\frac{1}{k}\log \fkeps\left(\Sepsk, \Shepsk \right)$, we conclude that $\zkn{k}{n}'\left( \cdf{\Sn,\Shn}\opt\right)\ConvDist{n \rightarrow \infty}\zk{k}'\left( \cdf{\Seps,\Sheps}\right)$, where it also follows from the proof of \cite[Lemma B.2]{shlezinger2019capacity} that the convergence is uniform in $k \in \mN$. Specifically, to prove that  $\fkn\left(\Snk , \Shnk  \right)  \ConvDist{n \rightarrow \infty}  \fkeps\left(\Sepsk , \Shepsk  \right)$, we will show that the following two properties hold:
			\begin{enumerate}[label={\em P\arabic*}]
				\item \label{itm:Tight} The distribution of $\left[\left( \Sepsk\right) ^T , \left( \Shepsk\right) ^T \right]^T$ is separable\myFtn{By \cite[Pg. 101]{kosorok2008introduction}, an \ac{rv} $X \in \mySet{X}$ is separable  if $\forall \eta > 0$ there exists a compact set $\mySet{K}(\eta) \subset \mySet{X}$ such that $\Pr \left( X \in \mySet{K}(\eta)\right) \ge 1 - \eta$.}.
				
				\item \label{itm:Conv} For any convergent sequence $\left(\left(\myVec{s}_n^{(k)}\right)^T, \left(\hat{\myVec{s}}_n^{(k)}\right)^T \right)^T \in \mySet{R}^{2k}$ such that $\mathop{\lim}\limits_{n \rightarrow \infty}\left(  \myVec{\myVec{s}}_n^{(k)},\hat{\myVec{s}}_n^{(k)}\right)  = \left(  \myVec{\myVec{s}}_\eps^{(k)}, \hat{\myVec{s}}_\eps^{(k)}\right) $, then $\mathop{\lim}\limits_{n \rightarrow \infty}  \fkn\left(\myVec{s}_n^{(k)}, \hat{\myVec{s}}_n^{(k)}\right) = \fkeps\left(\myVec{s}_\eps^{(k)}, \hat{\myVec{s}}_\eps^{(k)} \right) $.
			\end{enumerate}

To prove property \ref{itm:Tight}, we show that  ${ U}^{(k)} \triangleq \left[\big( \Sepsk\big) ^T , \big( \Shepsk\big) ^T \right]^T$ is \footnote{We point out that here, we misuse use the dimension notation as $U^{(k)}$ which denotes a $2k$ dimensional vector. Here, $k$ refers to the dimension of the compression problem and not of the vector.} separable \cite[Pg. 101]{kosorok2008introduction}, i.e., we show that $\forall \eta > 0$, there exists $\beta > 0$ such that $\Pr\left(\|U^{(k)}\|^2 > \beta \right) < \eta$.
			To that aim, recall first that by Markov's inequality \cite[Pg. 114]{Papoulis91}, it follows that $\Pr\left(\right\|U^{(k)}\left\|^2 > \beta \right) < \frac{1}{\beta}\E\left\{\left\|U^{(k)}\right\|^2 \right\}$. For the asynchronously sampled source process, we note that $ \Sigseps[i]\triangleq \E\left\{\left(\Seps[i]\right)^2\right\}\in [0, \mathop{\max}\limits_{0\leq t\leq\Tc}\Csc(t)] $. By the independence of $\Wepsk$ and $\Shepsk$, and by the fact that their mean is zero, we have, from \eqref{eqn:testchannelrdf_asynchron} that $\E\left\{\left(\Seps[i]\right)^2\right\}=\E\left\{\left(\Sheps[i]\right)^2\right\}+\E\left\{\left(\Weps[i]\right)^2\right\}\leq \mathop{\max}\limits_{0\leq t\leq\Tc}\Csc(t)$; Hence $\E\left\{\left(\Sheps[i]\right)^2\right\}\leq \mathop{\max}\limits_{0\leq t\leq\Tc}\Csc(t)$, and $\E\left\{\left(\Weps[i]\right)^2\right\}\leq \mathop{\max}\limits_{0\leq t\leq\Tc}\Csc(t)$. This further implies that $\E\left\{\left\| U^{(k)}\right\|^2 \right\}=\E\left\{\left\| \left[\big( \Sepsk\big) ^T , \big( \Shepsk\big) ^T \right]^T\right\|^2 \right\}\leq 2\cdot k\cdot \mathop{\max}\limits_{0\leq t\leq\Tc}\Csc(t)$ ; therefore for each $\beta > \frac{1}{\eta}\E\left\{\left\|U^{(k)}\right\|^2 \right\}$ we have that $\Pr\left(\left\|U^{(k)}\right\|^2 > \beta \right) < \eta$, and thus $U^{(k)}$ is separable.

    		    By the assumption in this lemma it follows that $\forall \eta > 0$ there exists $n_0(\eta) >0$ such that for all $n > n_0(\eta)$ we have that $\forall \myVec{w}^{(k)} \in \mR^{k}$, $\big| \pdf{\Wnk}\left( \myVec{w}^{(k)}  \right)  -  \pdf{\Wepsk}\left(w ^{(k)}\right) \big| < \eta$, for all sufficiently large $k \in \mySet{N}$. Consequently, for all $\left( \left(\myVec{s}^{(k)}\right)^T ,  \left(\hat{\myVec{s}}^{(k)}\right)^T\right)^T  \in \mySet{R}^{2k}$, $n > n_0(\eta)$ and a sufficiently large $k \in \mySet{N}$, it follows from \eqref{eqn:conditional_pdf} that
			\begin{align}
			\hspace{-1cm}\left|\pdf{\Snk | \Shnk }\left(\myVec{s}^{(k)} \big|  \hat{\myVec{s}}^{(k)} \right) - \pdf{\Sepsk | \Shepsk}\left(\myVec{s}^{(k)} \big|  \hat{\myVec{s}}^{(k)} \right)  \right|  
			&=  \left|	\pdf{\Wnk}\left( \myVec{s}^{(k)} - \hat{\myVec{s}}^{(k)} \right) -  \pdf{\Wepsk}\left(\myVec{s}^{(k)} -  \hat{\myVec{s}}^{(k)} \right)\right| <\eta.  
			\label{eqn:ContFproof3}
			\end{align} 
			Following the continuity of $\pdf{\Snk | \Shnk }\left( s ^{(k)}\big|  \hat{s} ^{(k)}\right)$ and of $\pdf{\Snk}(\myVec{s}^{(k)})$, $\fkn\left(\myVec{s}^{(k)}, \hat{\myVec{s}}^{(k)} \right)$ is also continuous \cite[Thm. 4.9]{rudin1976principles}\myFtn{\cite[Thm. 4.9]{rudin1976principles}: Let $f$ and $g$ be complex continuous functions on a metric space $X$. Then $f+g$, $fg$ and $f/g$ are continuous on $X$. In the last case, we must assume that $g(x)\ne 0$, for all $x\in X$}; hence, when $\mathop{\lim}\limits_{n \rightarrow \infty}\big( \myVec{s}_n^{(k)},\hat{\myVec{s}}_n^{(k)}\big)  = \big(\myVec{s}^{(k)}, \hat{\myVec{s}}^{(k)}\big) $, then $\mathop{\lim}\limits_{n \rightarrow \infty}  \fkn\left(\myVec{s}_n^{(k)},\hat{\myVec{s}}_n^{(k)}\right) = \fkeps\left(\myVec{s}^{(k)}, \hat{\myVec{s}}^{(k)}\right) $.
			This satisfies condition \ref{itm:Conv} for the extended \ac{cmt}; Therefore, by the extended \ac{cmt}, we have that $\fkn\left(\Snk, \Shnk\right) \ConvDist{n\rightarrow\infty} \fkeps\left(\Sepsk, \Shepsk \right)$. Since the \acp{rv} $\zkn{k}{n}'\left( \cdf{\Sn,\Shn}\opt\right)$ and $\zkeps{k,\eps}'\left( \cdf{\Seps, \Sheps}\right)$, defined in \eqref{eqn:zkdefs}, are also continuous mappings of $\fkn\left(\Snk, \Shnk\right)$ and of $\fkeps\left(\Sepsk, \Shepsk \right)$, respectively, it follows from  the \ac{cmt} \cite[Thm. 7.7]{kosorok2008introduction} that $\zkn{k}{n}'\left( \cdf{\Sn,\Shn}\opt\right) \ConvDist{n \rightarrow \infty} \zkeps{k,\eps}'\left( \cdf{\Seps, \Sheps}\right)$.
			
			Finally, to prove that the convergence $\zkn{k}{n}'\left( \cdf{\Sn,\Shn}\opt\right) \ConvDist{n \rightarrow \infty} \zkeps{k,\eps}'\left( \cdf{\Seps, \Sheps}\right)$ is uniform in $k\in \mN$, we note that as $\Shnk$ and $\Shepsk$ have independent entries, and the backward channels \eqref{eqn:testchannelrdf_synch} and \eqref{eqn:testchannelrdf_asynchron} are memoryless. Hence, it follows from the proof of \cite[Lemma B.2]{shlezinger2019capacity}, that the characteristic function of the \ac{rv} $k\cdot\zkn{k}{n}'\left( \cdf{\Sn,\Shn}\opt\right)$ which is denoted by $\Charac_{k\cdot\zkn{k}{n}}(\alpha) \triangleq \E \left\{e^{j\cdot\alpha\cdot k\cdot\zkn{k}{n}}\right\}$ converges to the characteristic function of $k\cdot \zkeps{k,\eps}'\left( \cdf{\Seps, \Sheps}\right)$, denoted by $\Charac_{k\cdot\zkeps{k,\eps}}(\alpha)$, uniformly over $k\in \mN$. Thus, for all sufficiently small $\eta>0$, $\exists k_0\in \mN, n_0(\eta, k_0) \in \mySet{N} $ such that $\forall n > n_0(\eta, k_0)$, and $\forall k>k_0$ 
			\begin{equation}
			    \big|\Charac_{k\cdot\zkn{k}{n}}(\alpha) - \Charac_{k\cdot\zkeps{k,\eps}}(\alpha) \big| < \eta, \quad \forall \alpha \cdot  \in \mySet{R}. 
			\end{equation}  
			Hence, following Lévy's convergence theorem \cite[Thm. 18.1]{williams1991probability} \myFtn{\cite[Thm. 18.1]{williams1991probability}: Let $(F_n)$ be a sequence of density functions and let $\phi_n$ denote the characteristic function of $F_n$. Suppose that $g(\theta)\coloneqq \lim \phi_n(\theta)$ exists for all $\theta \in \mR$, and that $g(\cdot)$ is continuous at $0$. Then $g=\phi F$ for some distribution function $F$, and $F_n \ConvDist{n\rightarrow \infty} F$.} we conclude that $k\cdot\zkn{k}{n}'\left( \cdf{\Sn,\Shn}\opt\right) \ConvDist{n \rightarrow \infty} k\cdot\zkeps{k,\eps}'\left( \cdf{\Seps, \Sheps}\right)$ and that this convergence is uniform for sufficiently large $k$. Finally, since the \acp{cdf} of $k\cdot\zkn{k}{n}'\left( \cdf{\Sn,\Shn}\opt\right)$ and $k\cdot\zkeps{k,\eps}'\left( \cdf{\Seps, \Sheps}\right)$ obtained at $\alpha \in \mR$ are equivalent to the \acp{cdf} of $\zkn{k}{n}'\left( \cdf{\Sn,\Shn}\opt\right)$ and $\zkeps{k,\eps}'\left( \cdf{\Seps, \Sheps}\right)$ obtained at $\frac{\alpha}{k} \in \mR$ respectively, we can conclude that $\zkn{k}{n}'\left( \cdf{\Sn,\Shn}\opt\right) \ConvDist{n \rightarrow \infty} \zkeps{k,\eps}'\left( \cdf{\Seps, \Sheps}\right)$, uniformly in $k\in \mN$.
		\end{proof}
		
		The following convergence lemma \ref{lem:AsyncZk2} corresponds to \cite[Lemma. B.3]{shlezinger2019capacity},
			\begin{lemma}
			\label{lem:AsyncZk2}
			  Let $n\in \mN$ be given. Every subsequence of $\left\{\zkn{k}{n}'\left( \cdf{\Shn, \Sn}\opt\right)\right\}_{k \in \mySet{N}}$, indexed by $k_l$, converges in distribution, in the limit as $l \rightarrow \infty$, to a finite deterministic  scalar.
%
		\end{lemma}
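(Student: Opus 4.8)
The plan is to exploit that, after sampling with the rational mismatch $\myEpsn$, both the source $\Sn[i]$ and its \emph{optimal} reproduction process $\Shn[i]$ are memoryless, which turns the mutual information density rate \eqref{eqn:zkdefs1} into a normalized sum of independent \acp{rv} whose limit is then pinned to a single constant by an elementary law-of-large-numbers argument. First I would recall, from the discussion following Corollary~\ref{corollary:RDF_memoryless_WSCS}, that the optimal reproduction process of a \ac{dt} memoryless \ac{wscs} Gaussian source is itself a \ac{dt} memoryless \ac{wscs} Gaussian process --- obtained by applying the inverse \ac{dcd} to the memoryless, diagonal-covariance multivariate reproduction of \cite[Ch.~10.3.3]{cover2006elements} --- so that the backward channel \eqref{eqn:testchannelrdf_synch}, $\Sn[i]=\Shn[i]+\Wn[i]$, is memoryless with $\Shn[i]\perp\Wn[i]$ and, by Remark~\ref{stationary_noise}, $\E\{\Wn^2[i]\}=D_n[i]=D$ for all $i,n$ in the low-distortion regime. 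Since $\{\Shn[i]\}_i$ and $\{\Wn[i]\}_i$ are each independent across $i$ and mutually independent, the pairs $\{(\Sn[i],\Shn[i])\}_i$ are independent across $i$, so the joint \ac{pdf} of $(\Snk,\Shnk)$ factorizes over the time index and
\[
\zkn{k}{n}'\!\left(\cdf{\Sn,\Shn}\opt\right)=\frac1k\sum_{i=1}^{k}X_{n,i},\qquad X_{n,i}\triangleq\log\frac{\pdf{\Sn[i]\mid\Shn[i]}\!\left(\Sn[i]\mid\Shn[i]\right)}{\pdf{\Sn[i]}\!\left(\Sn[i]\right)},
\]
with $\{X_{n,i}\}_{i}$ mutually independent.

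Next I would compute $X_{n,i}$ in closed form by substituting $\pdf{\Sn[i]}$, the density of $\mN(0,\Sigsn[i])$, and $\pdf{\Sn[i]\mid\Shn[i]}(\cdot\mid\hat s)=\pdf{\Wn[i]}(\cdot-\hat s)$, the density of $\mN(\hat s,D)$, obtaining
\[
X_{n,i}=\frac12\log\frac{\Sigsn[i]}{D}-\frac{\Wn^2[i]}{2D}+\frac{\Sn^2[i]}{2\,\Sigsn[i]}.
\]
Thus $\E\{X_{n,i}\}=\tfrac12\log\tfrac{\Sigsn[i]}{D}$, which is periodic in $i$ with period $p_n$ and, by the standing assumptions on $\Csc(\cdot)$, bounded above and below; and since $\Wn^2[i]/D$ and $\Sn^2[i]/\Sigsn[i]$ are each chi-squared with one degree of freedom (variance $2$), the triangle inequality for standard deviations gives $\Var(X_{n,i})\le2$ uniformly in $i$ and $n$. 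Writing $k=q\,p_n+r$ with $0\le r<p_n$, periodicity of the mean then yields $\E\{\zkn{k}{n}'(\cdot)\}=\frac1k\sum_{i=1}^{k}\E\{X_{n,i}\}\to\frac1{p_n}\sum_{m=1}^{p_n}\E\{X_{n,m}\}=\RnD$ as $k\to\infty$, where the last equality is Proposition~\ref{prop:RDF_WSCS} specialized to $D_n[m]=D$, while independence gives $\Var(\zkn{k}{n}'(\cdot))=\frac1{k^2}\sum_{i=1}^{k}\Var(X_{n,i})\le2/k$, which vanishes as $k\to\infty$.

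Combining these two facts with Chebyshev's inequality, $\zkn{k}{n}'\!\left(\cdf{\Sn,\Shn}\opt\right)$ converges to the constant $\RnD$ in probability (indeed in $L^2$). Since convergence in probability to a constant implies convergence in distribution to that constant, the full sequence $\{\zkn{k}{n}'\!\left(\cdf{\Sn,\Shn}\opt\right)\}_{k\in\mN}$ --- and hence every subsequence indexed by $k_l$ --- converges in distribution, as $l\to\infty$, to the finite deterministic scalar $\RnD$ (finite because $\Csc(\cdot)$ is bounded and $D>0$), which is exactly the assertion of the lemma; as a byproduct this also shows that the limit-in-probability of $\zkn{k}{n}'\!\left(\cdf{\Sn,\Shn}\opt\right)$ as $k\to\infty$ equals $\RnD$, as used in the sequel.

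The one genuinely nontrivial ingredient I anticipate is the mutual independence of $\{(\Sn[i],\Shn[i])\}_i$ that underlies the factorization in the first display: it relies on the memorylessness of the \emph{optimal} reproduction process, which is precisely the content of the paragraph following Corollary~\ref{corollary:RDF_memoryless_WSCS} (via the \ac{dcd} and \cite[Ch.~10.3.3]{cover2006elements}); everything after that is routine, needing only the uniform bound $\sup_i\Var(X_{n,i})<\infty$ (here equal to $2$) and the assumed boundedness of $\Csc(\cdot)$ away from $0$ and $\infty$. The whole argument parallels \cite[Lemma~B.3]{shlezinger2019capacity} for the capacity problem, the sole difference being that the information-density decomposition is now read off the backward channel \eqref{eqn:testchannelrdf_synch} rather than off a physical channel.
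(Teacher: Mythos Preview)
Your argument is correct and in fact more explicit than the paper's. The paper proves this lemma by observing that the backward channel \eqref{eqn:testchannelrdf_synch} is a memoryless additive \ac{wscs} Gaussian noise channel, hence (via the \ac{dcd}) equivalent to a $p_n$-variate stationary memoryless Gaussian channel, which is information stable; it then invokes \cite[Thm.~5.9.1]{han2003information} to conclude that the mutual information density rate converges almost surely to the (finite) mutual information rate, and almost-sure convergence implies convergence in distribution. Your route is a direct weak-law computation: factorize $\zkn{k}{n}'$ as $\frac{1}{k}\sum_i X_{n,i}$ with independent summands, compute $\E\{X_{n,i}\}$ and a uniform bound on $\Var(X_{n,i})$ from the Gaussian backward channel, and finish with Chebyshev. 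Both land on the same deterministic limit $\RnD$; your approach is more self-contained (no appeal to information stability or to Han), while the paper's is shorter because it outsources the convergence to a general theorem.

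Two cosmetic remarks that do not affect correctness. First, since the paper uses base-$2$ logarithms throughout, your closed form for $X_{n,i}$ should carry a factor $\log_2 e$ on the two quadratic terms; the uniform variance bound then becomes $\Var(X_{n,i})\le 2(\log_2 e)^2$, which is equally good for the $O(1/k)$ estimate. Second, your use of the triangle inequality for standard deviations is the right way to handle the variance bound, since $W_n[i]$ and $S_n[i]=\hat S_n[i]+W_n[i]$ are correlated; it is worth saying this explicitly so the reader does not mistake the two chi-squared terms for independent.
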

		\begin{proof}
			Recall that the \acp{rv} $\zkn{k}{n}'\left( \cdf{\Shn, \Sn}\opt\right)$  represent the mutual information density rate between $k$ samples of the source process $S_n[i]$ and the corresponding samples of its reproduction process $\hat{S}_n[i]$, where these processes are jointly distributed via the Gaussian distribution measure $ \cdf{\Shn, \Sn}\opt$. Further, recall that the relationship between the source signal and the reproduction process which achieves the \ac{rdf} can be  described via the backward channel in \eqref{eqn:testchannelrdf_synch} for a Gaussian source. 
			The channel \eqref{eqn:testchannelrdf_synch} is a memoryless additive \ac{wscs} Gaussian noise channel with period $p_n$, thus, by \cite{shlezinger2016capacity}, it can be equivalently represented as a $p_n \times 1$ multivariate memoryless additive stationary Gaussian noise channel, which is an {\em information stable} channel \cite[Sec. 1.5]{dobrushin1959general}\footnote{Information stable channels can
be described as having the property that the input
that maximizes mutual information and its corresponding
output behave ergodically \cite{verdu1994general}. Also, the information stability was further defined in \cite[Sec. IV]{zeng2006information} by applying the fact that ergodic theory is consequential to the law of large numbers. \cite[Eq. (3.9.2)]{han2003information}: A
general source $V =\left \{V^n\right\}_{n=1}^\infty$ is said to be information-stable if $\frac{\frac{1}{n}\log \frac{1}{P_{v^n}\left(V^n\right)}}{H_n\left(V^n\right)} \rightarrow 1$. where $H_n\left(V^n\right)=\frac{1}{n}H\left(V^n\right)$ and $H\left(V^n\right)$ stands for the entropy of $V^n$.}.
			For such channels in which the source and its reproduction  obey the \ac{rdf}-achieving joint distribution $\cdf{\Sn,\Shn}\opt$, the mutual information density rate converges as $k$ increases almost surely to the finite and deterministic  mutual information rate \cite[Thm. 5.9.1]{han2003information} \footnote{\cite[Thm. 5.9.1]{han2003information} holds for a subadditive distortion measure \cite[Eqn. (5.9.2)]{han2003information}; The \ac{mse} distortion measure, which was used in this research, is additive (and thus also subadditive).}. Since almost sure convergence implies convergence in distribution \cite[Lemma 7.21]{kosorok2008introduction}, this proves the lemma. 
		\end{proof}

		
\subsection{Showing that \mathinhead{R_\eps (D) = \mathop{\lim \sup}\limits_{n\rightarrow\infty} R_n (D)}{RntoReps}}
\label{app:proof2c}
This section completes the proof to Theorem \ref{Thm:rate_WSACS}. We note from \eqref{eqn:general_rdf} that the \ac{rdf} for the source process $S_n[i]$ (for fixed length coding and \ac{mse} distortion measure) is given by:

\begin{equation}
\label{eqn:inf_spec_Rn(D)}
    R_n(D)=\mathop{\inf}\limits_{ \cdf{\Shn, \Sn}:\bar{d}_S\left(\cdf{\Shn, \Sn}\right)\leq D}\left\{\plimsup \zkn{k}{n}'\left(\cdf{\Shn, \Sn}\opt\right)\right\},
\end{equation}
 where $\bar{d}_S\left(\cdf{\Shn, \Sn}\right)=\mathop{\lim \sup}\limits_{k\rightarrow\infty}\frac{1}{k}\E\big\{ \big\|\myVec{S}_{n}^{(k)}-\hat{\myVec{S}}_{n}^{(k)}\big\|^2\big\}$.  

We now state the following lemma characterizing the asymptotic statistics of the optimal reconstruction $\Shnk$ process and the respective noise process $\Wnk$ used in the backward channel relationship \eqref{eqn:testchannelrdf_synch}:
\begin{lemma}
\label{lem:tightness}
Consider the \ac{rdf}-achieving  distribution with distortion $D$ for compression of a vector Gaussian source process $\Snk$ characterized by the backward channel \eqref{eqn:testchannelrdf_synch}. Then, there exists a subsequence in the index $n \in \mN$ denoted $n_1 < n_2 < \ldots$, such that for the \ac{rdf}-achieving distribution, the sequences of reproduction vectors $\{\hat{\myVec{S}}_{n_l}^{(k)}\}_{l\in\mN}$ and backward channel noise vectors $\{\myVec{W}_{n_l}^{(k)}\}_{l\in\mN}$ satisfy that $\mathop{\lim}\limits_{l \rightarrow \infty}\pdf{\hat{\myVec{S}}_{n_l}^{(k)}}\left( \hat{\myVec{s}}^{(k)} \right)= \pdf{\Shepsk}\left( \hat{\myVec{s}}^{(k)} \right)$ uniformly in $\myVec{\hat{s}}^{(k)} \in \mR^{k}$ and uniformly with respect to $k \in \mySet{N}$, as well as $\mathop{\lim}\limits_{l \rightarrow \infty}\pdf{\myVec{W}_{n_l}^{(k)}}\left( \myVec{w}^{(k)} \right)= \pdf{\Wepsk}\left( \myVec{w}^{(k)} \right)$ uniformly in $\myVec{w}^{(k)} \in \mR^{k}$ and uniformly with respect to $k \in \mySet{N}$, where $\pdf{\Shepsk}\left( \hat{\myVec{s}}^{(k)} \right)$ and $\pdf{\Wepsk}\left( \myVec{w}^{(k)} \right)$ are Gaussian \acp{pdf}. 
\end{lemma}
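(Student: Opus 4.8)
The plan is to leverage the explicit form of the RDF-achieving backward channel in the low-distortion regime, combine it with the convergence $\myEpsn\to\myEps$ and the continuity and periodicity of $\Csc(\cdot)$, and then import the uniform-in-$k$ convergence-of-Gaussian-densities argument already used to establish Lemma \ref{Lem:PDF-convergence}, i.e.\ \cite[Lemma B.1]{shlezinger2019capacity}.

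First I would pin down the RDF-achieving distribution for each fixed $n$. Working with the scaled source of Assumption \ref{Assumption_unif_conv} (this is w.l.o.g.\ by Theorem \ref{thm:ThmScale1}), the source variance is $\alpha^2\Csc(t)$, the distortion constraint is $\alpha^2 D$, and sampling is at $\Tsamp(n)=\tfrac{\Tc}{p+\myEpsn}$. Since $D<\mathop{\min}\limits_{0\leq t\leq\Tc}\Csc(t)$, the reverse-waterfilling threshold of Prop.\ \ref{prop:RDF_WSCS} satisfies $\theta_n=\alpha^2 D$, so $D_n[m]=\alpha^2 D$ for every $m$ and every $n$, exactly by the reasoning of Remark \ref{stationary_noise} applied to the scaled source. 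Feeding this into Prop.\ \ref{prop:RDF_WSCS}, the backward channel \eqref{eqn:tst_chan_DCD_vct}, and the inverse DCD (again as in Remark \ref{stationary_noise}), the RDF-achieving channel \eqref{eqn:testchannelrdf_synch} has backward-channel noise $W_n[i]$ i.i.d.\ $\mN(0,\alpha^2 D)$ --- so that $\Wnk\sim\mN(\mathbf{0},\alpha^2 D\cdot\mathsf{I}_k)$, a law independent of $n$ --- and optimal reconstruction $\hat{S}_n[i]$ a zero-mean memoryless WSCS Gaussian process with periodic variance $\Sigshn[i]=\alpha^2\Csc\!\big(\tfrac{i\Tc}{p+\myEpsn}\big)-\alpha^2 D$, whence $\Shnk\sim\mN\big(\mathbf{0},\textrm{diag}(\Sigshn[1],\ldots,\Sigshn[k])\big)$ with independent entries.

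Next I would treat the two convergences. For the noise vectors, since $\pdf{\Wnk}$ is independent of $n$, I set $\Wepsk\sim\mN(\mathbf{0},\alpha^2 D\cdot\mathsf{I}_k)$ and the claimed limit holds trivially --- uniformly in $\myVec{w}^{(k)}\in\mR^k$ and in $k\in\mN$ --- with $\pdf{\Wepsk}$ a well-defined Gaussian density (as $\alpha^2 D>\tfrac{1}{2\pi}>0$ by Assumption \ref{Assumption_unif_conv}) having independent entries. For the reconstruction vectors, I would observe that $\hat{S}_n[i]$ is precisely the memoryless WSCS Gaussian process obtained by sampling the continuous, $\Tc$-periodic function $g(t)\triangleq\alpha^2\Csc(t)-\alpha^2 D$ at interval $\Tsamp(n)=\tfrac{\Tc}{p+\myEpsn}$; by Assumption \ref{Assumption_unif_conv}, $\tfrac{1}{2\pi}<\alpha^2\big(\mathop{\min}\limits_{0\leq t\leq\Tc}\Csc(t)-D\big)\le g(t)\le\alpha^2\mathop{\max}\limits_{0\leq t\leq\Tc}\Csc(t)<\infty$, so $g$ enjoys exactly the boundedness and continuity assumed of $\Csc$ in Subsection \ref{subsec:problem_formulation}. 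Since $\myEpsn\to\myEps$ by \eqref{Eqn:limit_epsilon_n}, uniform continuity of $g$ gives, as in \eqref{eqn:limit_sigmas_n}, $\Sigshn[i]\to g\!\big(\tfrac{i\Tc}{p+\myEps}\big)=:\Sigsheps[i]$ for every $i$; then the argument of \cite[Lemma B.1]{shlezinger2019capacity}, applied to $g$ in place of $\Csc$ exactly as in the proof of Lemma \ref{Lem:PDF-convergence}, yields $\pdf{\Shnk}\to\pdf{\Shepsk}$ uniformly in $\hat{\myVec{s}}^{(k)}\in\mR^k$ and uniformly in $k\in\mN$, with $\Shepsk\sim\mN\big(\mathbf{0},\textrm{diag}(\Sigsheps[1],\ldots,\Sigsheps[k])\big)$ a Gaussian vector with independent entries. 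Since both limits hold along the entire sequence $n=1,2,\ldots$, they hold along any subsequence; taking $n_l=l$ then delivers the lemma.

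The main --- and essentially the only non-routine --- obstacle is the uniformity in $k$ of $\pdf{\Shnk}\to\pdf{\Shepsk}$: a priori the approximation error of a product of $k$ one-dimensional Gaussian densities could accumulate with $k$. This is exactly the difficulty handled in \cite[Lemma B.1]{shlezinger2019capacity}, which exploits the lower bound $\tfrac{1}{2\pi}$ on the variances (so each one-dimensional Gaussian density is bounded by $1$) together with the fact that the variances $\Sigshn[i]$ are all samples of a single continuous, periodic function; this is precisely why Assumption \ref{Assumption_unif_conv} is invoked here and why the argument is imported rather than redone. The remaining ingredients --- identifying the backward-channel law through Prop.\ \ref{prop:RDF_WSCS} and Remark \ref{stationary_noise}, the pointwise-in-$i$ convergence of the variances, and the triviality of the noise part --- are routine.
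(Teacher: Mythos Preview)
Your proposal is correct and follows essentially the same approach as the paper: both identify the explicit RDF-achieving backward-channel law via Remark \ref{stationary_noise} (noise i.i.d.\ with variance $D$, reconstruction variance $\Sigsn[i]-D$), invoke Assumption \ref{Assumption_unif_conv} to ensure the relevant variances exceed $\tfrac{1}{2\pi}$, and import \cite[Lemma B.1]{shlezinger2019capacity} for the uniform-in-$k$ convergence of the Gaussian densities. Your version is slightly cleaner in that you observe directly that the full sequence converges (so $n_l=l$ suffices), whereas the paper first invokes Bolzano--Weierstrass to extract a convergent subsequence of $\Sigshn[i]$ before noting the explicit formula $\Sigshn[i]=\Sigsn[i]-D$ that renders that step redundant.
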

\begin{proof}

\color{black}

Recall from the analysis of the \ac{rdf} for \ac{wscs} processes that for each $n \in \mN$, the marginal distributions of the \ac{rdf}-achieving reproduction process $\Shn[i]$ and the backward channel noise $W_n[i]$ is Gaussian, memoryless, zero-mean, and with variances $\sigma_{\hat{S}_n}^2[i] \triangleq \E\left\{\big(\hat{S}_n[i] \big)^2  \right\}$ and 
\begin{equation}
    \label{eqn:variance_source_sig}
    \E\left\{\big(W_n[i] \big)^2  \right\} = \sigma_{{S}_n}^2[i] - \sigma_{\hat{S}_n}^2[i],
\end{equation}
 respectively. 
Consequently, the sequences of reproduction vectors $\{\hat{\myVec{S}}_{n}^{(k)}\}_{n\in\mN}$ and backward channel noise vectors $\{\myVec{W}_{n}^{(k)}\}_{n\in\mN}$ are zero-mean Gaussian with independent entries for each $k \in \mN$.
Since $\sigma_{{S}_n}^2[i]\leq \mathop{\max}\limits_{t\in \mR}\sigma_{S_c}^2(t)$, then, from \eqref{eqn:variance_source_sig}, it follows that $\Sigshn[i]$ is also bounded in the interval  $[0,\mathop{\max}\limits_{t \in \mR}\sigma_{S_c}^2(t)]$ for all $n \in \mN$. Therefore, by Bolzano-Weierstrass theorem \cite[Thm. 2.42]{rudin1976principles} \myFtn{Every bounded infinite subset of $\mR^k$ has a limit point in $\mR^k$.}, $\Sigshn[i]$ has a convergent subsequence, and we let $n_1 < n_2 < \ldots$ denote the indexes of this convergent subsequence and let the limit of the subsequence be denoted by $\sigma_{\Sheps}^2[i]$. From the \ac{cmt}, as applied in the proof of \cite[Lemma B.1]{shlezinger2019capacity}, the convergence $\sigma_{\hat{S}_{n_l}}^2[i]\mathop{\longrightarrow}\limits_{l\rightarrow \infty}\sigma_{\Sheps}^2[i]$ for each $i\in \mySet{N}$ implies that the subsequence of \acp{pdf} $\pdf{\hat{\myVec{S}}_{n_l}^{(k)}}\left(\hat{\myVec{s}}^{(k)}\right)$ corresponding to   the memoryless Gaussian random vectors $\{\hat{\myVec{S}}_{n_l}^{(k)}\}_{l \in \mySet{N}}$ converges as $l \rightarrow \infty$ to a Gaussian \ac{pdf} which we denote by $\pdf{\Shepsk}\left(\hat{\myVec{s}}^{(k)}\right)$, and the convergence of $\pdf{\hat{\myVec{S}}_{n_l}^{(k)}}\left(\hat{\myVec{s}}^{(k)}\right)$ is uniform in $\myVec{s}^{(k)}$ for any fixed $k \in \mySet{N}$. 
By Remark \ref{stationary_noise}, it holds that $\Wn[i]$ is a memoryless stationary process with variance  $\E\left\{\left(\Wn[i]\right)^2\right\}=D$ and by Eq. \eqref{eqn:variance_source_sig}, $\Sigshn[i]=\Sigsn[i]-D$. Hence by Assumption \ref{Assumption_unif_conv}  and by the proof of \cite[Lemma B.1]{shlezinger2019capacity}, it follows that for a fixed $\eta > 0$ and $k_0 \in \mySet{N}$, $\exists n_0 (\eta, k_0)$ such that for all $n > n_0(\eta, k_0)$ and for all sufficiently large $k$, it holds that $\big|\pdf{\hat{\myVec{S}}_{n_l}^{(k)}}\big(\hat{\myVec{s}}^{(k)}\big) - \pdf{\Shepsk}\big(\hat{\myVec{s}}^{(k)}\big)\big| < \eta$ for every $\hat{\myVec{s}}^{(k)} \in \mySet{R}^k$. Since  $n_0(\eta, k_0)$ does not depend on $k$ (only on the fixed $k_0$), this implies that the convergence is uniform with respect to $k \in \mySet{N}$.

The fact that $W_n[i]$ is a zero-mean stationary Gaussian process with variance $D$ for each $n \in \mN$, implies that the sequence of \acp{pdf} $\pdf{\Wnk}\left(\myVec{w}^{(k)}\right)$ converges as $n \rightarrow \infty$ to a Gaussian \ac{pdf} which we denote by $\pdf{\myVec{W}^{(k)}}\left(\myVec{w}^{(k)}\right)$, hence its subsequence with indices $n_1 < n_2 < \ldots$ also converges to $\pdf{\myVec{W}^{(k)}}\left(\myVec{w}^{(k)}\right)$. Since $D > \frac{1}{2\pi}$ by Assumption \ref{Assumption_unif_conv} combined with the proof of \cite[Lemma B.1]{shlezinger2019capacity} it follows that this convergence is  uniform in $\myVec{w}^{(k)}$ and in $k\in \mN$ to $\pdf{\Wepsk}\left(\myVec{w}^{(k)}\right)$.

Following the proof of Corollary \ref{lem: convDist}, it holds that the subsequences of the memoryless Gaussian random vectors $\left\{\hat{\myVec{S}}_{n_l}^{(k)}\right\}$ and $\left\{\myVec{W}_{n_l}^{(k)}\right\}$ converge in distribution as $l\rightarrow\infty$ to a Gaussian distribution, and the convergence is uniform in  $k \in \mN$ for any fixed $k\in \mN$. Hence, as shown in Lemma \ref{lem:AsyncZk} the joint distribution 
$\left[\big( \myVec{S}_{n_l}^{(k)}\big) ^T , \big( \hat{\myVec{S}}_{n_l}^{(k)}\big) ^T \right]^T\ConvDist{n \rightarrow \infty}\left[\big( \Sepsk\big) ^T , \big( \Shepsk\big) ^T \right]^T$, and the limit distribution is jointly Gaussian.
\end{proof}

\begin{lemma}
\label{lem:rdfUpBound}
The \ac{rdf} of $\{S_\epsilon[i]\}$ satisfies $R_\eps(D)\leq \mathop{\lim \sup}\limits_{n\rightarrow \infty} R_n(D)$, and the rate $\mathop{\lim \sup}\limits_{n\rightarrow \infty} R_n(D)$ is achievable for the source $\{S_\epsilon[i]\}$ with distortion $D$ when the reproduction process which obeys a Gaussian distribution.
\end{lemma}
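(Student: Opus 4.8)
The plan is to establish the bound by exhibiting an explicit feasible reproduction process for the asynchronously sampled source $\{\Seps[i]\}$ whose information-spectrum mutual-information rate equals $\limsup_{n\to\infty}R_n(D)$, and then invoking the general \ac{rdf} characterization \eqref{eqn:general_rdf}. First I would pick a subsequence $m_1<m_2<\cdots$ with $R_{m_j}(D)\to\limsup_{n\to\infty}R_n(D)$ as $j\to\infty$, which exists by the definition of the limit superior, and then, applying the argument of Lemma \ref{lem:tightness} along $\{m_j\}$, extract a further subsequence $n_1<n_2<\cdots$ along which the \acp{pdf} $\pdf{\Shnk}$ and $\pdf{\Wnk}$ of the \ac{rdf}-achieving reproduction and backward-channel noise vectors converge, uniformly in the realization and in $k\in\mN$, to Gaussian \acp{pdf} $\pdf{\Shepsk}$ and $\pdf{\Wepsk}$; along $\{n_l\}$ one still has $R_{n_l}(D)\to\limsup_{n\to\infty}R_n(D)$. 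This data defines the limiting backward channel \eqref{eqn:testchannelrdf_asynchron}, namely $\Sepsk[i]=\Shepsk[i]+\Wepsk[i]$ with $\Shepsk$ and $\Wepsk$ independent and jointly Gaussian, and hence a candidate joint law $\cdf{\Seps,\Sheps}$ whose reproduction process is Gaussian.

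Second, I would verify that this candidate is feasible for \eqref{eqn:general_rdf}. That $\Sheps$ is a legitimate reproduction of $\Seps$ follows by matching marginals: $\pdf{\Snk}$ is the convolution $\pdf{\Shnk}\ast\pdf{\Wnk}$, and along $\{n_l\}$ the right-hand side converges, by the uniform convergence of the Gaussian factors, to $\pdf{\Shepsk}\ast\pdf{\Wepsk}$, while the left-hand side converges to $\pdf{\Sepsk}$ by Lemma \ref{Lem:PDF-convergence}; hence $\pdf{\Sepsk}=\pdf{\Shepsk}\ast\pdf{\Wepsk}$, so $\Shepsk+\Wepsk$ has the law of $\Sepsk$. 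For the distortion, Remark \ref{stationary_noise} gives that in the low-distortion regime each $\Wn[i]$ has variance $\E\{(\Wn[i])^2\}=D$, so the limiting Gaussian noise also has variance $D$, i.e. $\E\{(\Weps[i])^2\}=D$; therefore $\bar d_S\big(\cdf{\Seps,\Sheps}\big)=\limsup_{k\to\infty}\frac1k\sum_{i=1}^{k}\E\{(\Weps[i])^2\}=D$, meeting the distortion constraint with equality.

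Third, I would evaluate the information-spectrum rate of the candidate via Theorem \ref{thm:plim}, applied with $\zkn{k}{n_l}'\big(\cdf{\Sn,\Shn}\opt\big)$ and $\zkeps{k,\eps}'\big(\cdf{\Seps,\Sheps}\big)$: assumption \ref{itm:assm1} holds by Lemma \ref{lem:AsyncZk2} and assumption \ref{itm:assm2} holds by Lemma \ref{lem:AsyncZk}, whose hypotheses on uniform \ac{pdf} convergence are precisely those supplied by Lemma \ref{lem:tightness}. Equation \eqref{eqn:plimb} then gives $\plimsup\zkeps{k,\eps}'\big(\cdf{\Seps,\Sheps}\big)=\lim_{l\to\infty}\big(\plimsup\zkn{k}{n_l}'(\cdf{\Sn,\Shn}\opt)\big)$. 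For each fixed $n_l$, the synchronously sampled backward channel \eqref{eqn:testchannelrdf_synch} is information stable, so under the \ac{rdf}-achieving joint law the mutual-information density rate converges almost surely to the mutual-information rate, which equals $R_{n_l}(D)$; hence $\plimsup\zkn{k}{n_l}'(\cdot)=R_{n_l}(D)$ and the outer limit equals $\limsup_{n\to\infty}R_n(D)$ by the choice of subsequence. Since $\bar I\big(\Sepsk;\Shepsk\big)=\plimsup\zkeps{k,\eps}'\big(\cdf{\Seps,\Sheps}\big)$ by \eqref{eqn:spec_sup_mutualinformation} and \eqref{eqn:zkdefs2}, and $\cdf{\Seps,\Sheps}$ satisfies the distortion constraint, \eqref{eqn:general_rdf} yields $R_\eps(D)\le\bar I\big(\Sepsk;\Shepsk\big)=\limsup_{n\to\infty}R_n(D)$, with the achieving reproduction law Gaussian. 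I expect the main obstacle to lie in this last step: making the two subsequence selections compatible and, above all, rigorously justifying the identity $\plimsup\zkn{k}{n_l}'=R_{n_l}(D)$ — i.e. that the information-stable, \ac{rdf}-achieving joint law makes the mutual-information density rate concentrate on $R_{n_l}(D)$ — as well as confirming that the convolution-of-densities argument legitimately transfers the backward-channel structure to the limit.
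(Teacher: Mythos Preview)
Your proposal is correct and follows essentially the same route as the paper: extract via Lemma~\ref{lem:tightness} a subsequence along which the optimal backward-channel laws converge to a Gaussian limit, verify the hypotheses of Lemmas~\ref{lem:AsyncZk} and~\ref{lem:AsyncZk2}, apply Theorem~\ref{thm:plim} to identify $\plimsup\zkeps{k,\eps}'$ with $\lim_{l\to\infty}R_{n_l}(D)$, and then bound by the general \ac{rdf} formula \eqref{eqn:general_rdf}. Your version is in fact slightly more explicit than the paper's in two respects---you first pass to a subsequence achieving the $\limsup$ before invoking tightness (the paper simply bounds $\lim_{l}R_{n_l}(D)\le\limsup_{n}R_n(D)$ at the end), and you spell out the feasibility check for the limiting law via the convolution identity and Remark~\ref{stationary_noise}---but these are refinements of the same argument, and the concerns you flag in your last paragraph are exactly those handled by Lemma~\ref{lem:AsyncZk2} and the information-stability discussion therein.
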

\begin{proof}
According to Lemma \ref{lem:tightness}, we note that the sequence of joint distributions $\{\cdf{\Sn, \Shn}\opt \}_{n \in \mySet{N}}$  has a convergent subsequence, i.e., there exists a set of indexes $n_1 < n_2 < \ldots$ such that the sequence of distributions with independent entries $\{\cdf{\Snl,\hat{S}_{n_l}}\opt \}_{l \in \mySet{N}}$ converges in the limit $l \rightarrow \infty$ to a joint  Gaussian distribution $\cdf{\Seps,\Sheps}'$ and the convergence is uniform in $k \in \mN$. Hence, this satisfies the condition of Lemma \ref{lem:AsyncZk};
 This implies that $\zkn{k}{n_l}'\left(\cdf{\Snl,\Shnl}\opt\right)\ConvDist{l\rightarrow\infty}\zk{k}'\left(\cdf{\Seps,\Sheps}'\right)$ uniformly in $k\in \mN $. Also, by Lemma \ref{lem:AsyncZk2} every subsequence of $\big\{\zkn{k}{n_l}'\big( \cdf{\Snl,\Shnl}\opt\big)\big\}_{l \in \mySet{N}}$ converges in distribution to a finite deterministic scalar as $k \rightarrow \infty$. Therefore, by Theorem~\ref{thm:plim} it holds that
\begin{align}
\label{eqn:RDFuppBound1}
    	\mathop{\lim}\limits_{l \rightarrow \infty}\left( {\rm p-}\mathop{\lim \sup}\limits_{k \rightarrow \infty} \zkn{k}{n_l}'\left( \cdf{\Snl, \Shnl}\opt\right) \right)
			&= {\rm p-}\mathop{\lim \sup}\limits_{k \rightarrow \infty} \zkeps{k,\eps}'\left( \cdf{\Seps,\Sheps}'\right) \notag \\
			& \ge \mathop{\inf} \limits_{\cdf{\Seps,\Sheps}}\left\{  {\rm p-}\mathop{\lim \sup}\limits_{k \rightarrow \infty} \zkeps{k,\eps}'\left( \cdf{\Seps,\Sheps}\right)\right\} = R_\myEps(D).
\end{align}

From  \eqref{eqn:general_rdf} we have that $R_n(D)=\plimsup \zkn{k}{n}'\left(\cdf{\Sn,\Shn}\opt\right)$, then from \eqref{eqn:RDFuppBound1}, it follows that
\begin{equation}
    \label{eqn:rdfuppBound2}
    R_\eps(D)\leq \mathop{\lim}\limits_{l\rightarrow\infty}R_{n_l}(D)\stackrel{(a)}{\le} \mathop{\lim \sup}\limits_{n\rightarrow\infty} R_n(D),
\end{equation}
where $(a)$ follows since, by \cite[Def. 3.16]{rudin1976principles}, the limit of every subsequence is not greater than the limit superior. Noting that $\cdf{\Seps,\Sheps}'$ is Gaussian by Lemma \ref{lem:tightness} concludes the proof. 
\end{proof}

\begin{lemma}
\label{lem:rdflowBound} 
The \ac{rdf} of $\{S_\epsilon[i]\}$ satisfies 
$R_\myEps(D)\ge \mathop{\lim \sup}\limits_{n\rightarrow\infty}R_n(D)$.
\end{lemma}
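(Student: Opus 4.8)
The plan is to establish the reverse inequality to Lemma~\ref{lem:rdfUpBound}, namely $R_\myEps(D)\ge \limsup_{n\to\infty}R_n(D)$, which together with Lemma~\ref{lem:rdfUpBound} proves Theorem~\ref{Thm:rate_WSACS}. The approach is dual to that of Lemma~\ref{lem:rdfUpBound}: instead of pushing the optimal synchronous reproduction processes forward, I would fix an \emph{arbitrary} reproduction process $\{\Sheps[i]\}$ for the asynchronously sampled source with $\bar{d}_S(\cdf{\Sn,\Shn})\le D$ and ``pull it back'' to the synchronously sampled sources. By \eqref{eqn:general_rdf} (legitimate by Lemma~\ref{lem:Uniform1}), Lemma~\ref{lemma:arbit_blocklength}, and the classical fact that for a memoryless Gaussian source under the \ac{mse} distortion the $k$-th order problem is solved by a per-letter Gaussian backward channel --- replacing $\Sheps[i]$ by the MMSE estimate of $\Seps[i]$ given $\Sheps[i]$ only lowers both rate and distortion --- it suffices to treat $(\Seps,\Sheps)$ jointly Gaussian and memoryless of the form $\Seps[i]=\Sheps[i]+\Weps[i]$ with $\Sheps[i]\perp\Weps[i]$, both Gaussian; here one uses that for a memoryless source the normalized log-likelihood ratios are sums of independent terms, so their $\mathrm{p}$-$\limsup$ is governed by the per-letter mutual informations. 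Then, for a fixed $\delta>0$, I would define for each $n$ a reproduction process via $\Sn[i]=\Shn[i]+\Wn[i]$, choosing the Gaussian variances so that $\E\{(\Wn[i])^2\}\to\E\{(\Weps[i])^2\}$, $\Sn[i]$ has the prescribed variance $\Sigsn[i]$, and $\bar{d}_S(\cdf{\Sn,\Shn})\le D+\delta$ for all sufficiently large $n$; the low-distortion hypothesis $D<\mathop{\min}\limits_{0\le t\le\Tc}\Csc(t)$ and \eqref{eqn:limit_sigmas_n} ensure such a feasible backward channel exists for every index.

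With the pull-back in place I would verify the hypotheses of Theorem~\ref{thm:plim} for $\zkn{k}{n}'(\cdf{\Sn,\Shn})$ exactly as in Subsection~\ref{app:proof2b}. Assumption~\ref{itm:assm1} holds since $(\Sn,\Shn)$ is a memoryless (hence information-stable) Gaussian system for each $n$, so $\zkn{k}{n}'$ converges in distribution to a deterministic scalar as $k\to\infty$ (cf.\ Lemma~\ref{lem:AsyncZk2}). Assumption~\ref{itm:assm2} holds because $\Snk\ConvDist{n\to\infty}\Sepsk$ uniformly in $k$ (Corollary~\ref{lem: convDist}) and, with the test-channel parameters fixed, the reproduction and noise \acp{pdf} converge uniformly in $k$ --- via the \ac{cdf}-based argument underlying Lemma~\ref{lem:tightness}, with Assumption~\ref{Assumption_unif_conv} controlling the Gaussian densities --- so Lemma~\ref{lem:AsyncZk} gives $\zkn{k}{n}'(\cdf{\Sn,\Shn})\ConvDist{n\to\infty}\zkeps{k,\myEps}'(\cdf{\Seps,\Sheps})$ uniformly in $k$. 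Theorem~\ref{thm:plim} then yields
\[
\lim_{n\to\infty}\left(\plimsup\zkn{k}{n}'(\cdf{\Sn,\Shn})\right)=\plimsup\zkeps{k,\myEps}'(\cdf{\Seps,\Sheps}),
\]
and, since $\cdf{\Sn,\Shn}$ meets distortion $D+\delta$ for the source $\Sn$, the infimum characterization \eqref{eqn:inf_spec_Rn(D)} gives $R_n(D+\delta)\le\plimsup\zkn{k}{n}'(\cdf{\Sn,\Shn})$ for $n$ large, whence $\limsup_{n\to\infty}R_n(D+\delta)\le\plimsup\zkeps{k,\myEps}'(\cdf{\Seps,\Sheps})$.

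It remains to remove the $\delta$-slack, where the explicit low-distortion formula is key. By Remark~\ref{stationary_noise} and Proposition~\ref{prop:RDF_WSCS}, whenever $D'<\mathop{\min}\limits_{0\le t\le\Tc}\Csc(t)$ the reverse water-filling level equals $D'$ for \emph{every} $n$, so $R_n(D')=-\tfrac12\log D'+\tfrac{1}{2p_n}\sum_{m=1}^{p_n}\log\Sigsn[m]$, and therefore $R_n(D+\delta)=R_n(D)-\tfrac12\log(1+\delta/D)$ --- a shift independent of $n$ --- as long as $\delta$ is small enough that $D+\delta<\mathop{\min}\limits_{0\le t\le\Tc}\Csc(t)$. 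Consequently $\limsup_{n\to\infty}R_n(D+\delta)=\limsup_{n\to\infty}R_n(D)-\tfrac12\log(1+\delta/D)$, so combining with the previous display gives $\limsup_{n\to\infty}R_n(D)-\tfrac12\log(1+\delta/D)\le\plimsup\zkeps{k,\myEps}'(\cdf{\Seps,\Sheps})$. Letting $\delta\downarrow 0$ and then taking the infimum over all admissible $\cdf{\Seps,\Sheps}$ yields $\limsup_{n\to\infty}R_n(D)\le R_\myEps(D)$, completing the proof.

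The main obstacle I anticipate is the reduction, in the first step, to jointly Gaussian memoryless reproduction processes: while the \emph{mean} of the information density is the per-letter mutual information and is minimized by reverse water-filling, one must argue that its $\mathrm{p}$-$\limsup$ --- the quantity actually appearing in \eqref{eqn:general_rdf} --- cannot be made smaller by a non-Gaussian or memory-bearing reproduction meeting the same average distortion. I expect this to require the single-letterization of the distortion constraint afforded by Lemma~\ref{lemma:arbit_blocklength} together with a careful concentration argument for normalized sums of independent log-likelihood ratios, exploiting that $\Seps[i]$ is memoryless with variances bounded away from $0$ and $\infty$. A secondary technical point is ensuring the pulled-back noise variances can be kept simultaneously below $\Sigsn[i]$ for every $i$ and every large $n$ (feasibility of the backward channel) while staying controllable as $i,n\to\infty$; the low-distortion regime and Assumption~\ref{Assumption_unif_conv} are precisely what make this possible, mirroring their role in Lemma~\ref{lem:tightness}.
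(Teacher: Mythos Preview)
Your pull-back strategy is a coherent alternative, but the paper takes a rather different route that sidesteps exactly the obstacle you flag. Rather than reduce to Gaussian reproductions at the level of the $\mathrm{p}$-$\limsup$, the paper converts the operational rate to \emph{mutual information} via the data processing inequality: any code achieving $(R_\eps,D)$ induces a Markov chain $\myVec{S}_\eps^{(l)}\to J_l\to\hat{\myVec{S}}_\eps^{(l)}$, whence $R_\eps+\eta\ge\frac{1}{l}\log M_l\ge\frac{1}{l}I(\myVec{S}_\eps^{(l)};\hat{\myVec{S}}_\eps^{(l)})$ for all large $l$, so $R_\eps\ge\limsup_k\E\{\zkeps{k,\eps}'(\cdf{\Seps,\Sheps}')\}$. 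From there the paper works entirely with expectations and \emph{reuses the same push-forward machinery} as in Lemma~\ref{lem:rdfUpBound}: Lemmas~\ref{lem:AsyncZk}--\ref{lem:tightness} give $\E\{\zkeps{k_l,\eps}'(\cdf{\Seps,\Sheps}')\}=\lim_n\E\{\zkn{k_l}{n}'(\cdf{\Sn,\Shn}\opt)\}$ with uniform convergence in $k$ (the passage in expectation uses uniform integrability \`a la \cite[Thm.~3.5]{billingsley1999convergence}), the limits are swapped, and the result is bounded below by $\inf_{\cdf{}\in\mySet{F}(D)}\frac{1}{k_l}I(\cdot;\cdot)$, which by information stability of the synchronous model equals $R_n(D)$. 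No new pull-back construction, no $\delta$-slack, and no $\mathrm{p}$-$\limsup$ reduction are needed.

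What each approach buys: the paper's DPI-then-expectation route turns the hard step into the classical per-block fact that a memoryless Gaussian vector source under MSE has its mutual information minimized by the Gaussian backward channel --- so the code's distribution can be replaced (only lowering $\frac{1}{k}I$) by the very Gaussian limit $\cdf{\Seps,\Sheps}'$ already supplied by Lemma~\ref{lem:tightness}, and all the limit-swapping happens for expectations rather than for $\mathrm{p}$-$\limsup$. Your route is pleasingly symmetric with Lemma~\ref{lem:rdfUpBound} and the $\delta$-removal via the explicit low-distortion formula is neat, but the first-step reduction you correctly identify as the main obstacle is genuinely delicate: to argue that the infimum in \eqref{eqn:general_rdf} over \emph{all} feasible $\cdf{\Seps,\Sheps}$ is approached by Gaussian memoryless test channels at the $\mathrm{p}$-$\limsup$ level, you would need uniform second-moment control on the per-sample log-likelihood ratios for an \emph{arbitrary} feasible reproduction, not just the Gaussian one --- something your sketch asserts but does not supply. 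The paper's detour through mutual information is precisely what avoids having to confront that.
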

\begin{proof}
To prove this lemma, we first show that for a joint distribution $\cdf{\Seps,\Sheps}$ which achieves a rate-distortion pair $(R_{\eps}, D)$ it holds that $R_{\eps} \ge \E \{\zkeps{k,\eps}'(\cdf{\Seps,\Sheps}')\}$: 
Recall that $(R_\myEps, D)$ is an achievable rate-distortion pair for the source $\{\Seps[i]\}$, namely, there exists a sequence of codes $\{\mySet{C}_l\}$ whose rate-distortion  approach $(R_\myEps, D)$ when applied to  $\{S_\epsilon[i]\}$, 
This implies that for any $\eta >0$ there exists $l_0(\eta)$ such that $\forall l > l_0(\eta)$ it holds that $\mySet{C}_l$ has a code rate $R_l = \frac{1}{l}\log_2 M_l$ satisfying $R_l \le R_{\epsilon} + \eta$ by \eqref{eqn:rate_bound}. 
Recalling Def. \ref{Def:source_coding_scheme}, the source code maps $\myVec{S}_\epsilon^{(l)}$ into a discrete index $J_l \in \{1,2,\ldots, M_l\}$, which is in turn mapped into  $\hat{\myVec{S}}_\epsilon^{(l)}$, i.e., ${\myVec{S}}_\epsilon^{(l)} \mapsto J_l \mapsto \hat{\myVec{S}}_\epsilon^{(l)}$ form a Markov chain. Since $J_l$ is a discrete random variable taking values in $\{1,2,\ldots, M_l\}$, it holds that 
\begin{align}
    \log_2 M_l &\ge H(J_l) \notag \\
    &\stackrel{(a)}{\ge} I({\myVec{S}}_\epsilon^{(l)} ; J_l) \notag \\ &\stackrel{(b)}{\ge} I({\myVec{S}}_\epsilon^{(l)} ; \hat{\myVec{S}}_\epsilon^{(l)}),
    \label{eqn:Bound1Rate}
\end{align}
where $(a)$ follows since $ I({\myVec{S}}_\epsilon^{(l)} ; J_l) = H( J_l) - H( J_l|{\myVec{S}}_\epsilon^{(l)})$ which is not larger than $H( J_l)$ as $J_l$ takes discrete values; while $(b)$ follows from the data processing inequality \cite[Ch. 2.8]{cover2006elements}. 
Now, \eqref{eqn:Bound1Rate} implies that for each $l > l_0(\eta)$, the reproduction obtained using the code $\mySet{C}_l$ satisfies $\frac{1}{l}I({\myVec{S}}_\epsilon^{(l)} ; \hat{\myVec{S}}_\epsilon^{(l)}) \le \frac{1}{l}\log M_l \le R_{\epsilon} + \eta$. Since for every arbitrarily small $\eta \rightarrow 0$, this inequality holds for all $l > l_0(\eta)$, i.e., for all sufficiently large $l$, it  follows that $R_\epsilon \ge \mathop{\lim \sup}\limits_{k\rightarrow\infty} \frac{1}{l} I({\myVec{S}}_\epsilon^{(l)} ; \hat{\myVec{S}}_\epsilon^{(l)})$.
Hence, replacing the blocklength symbol from $l$ to $k$, as $\frac{1}{k} I(\Sepsk, \Shepsk)=\E \{\zkeps{k,\eps}'(\cdf{\Seps,\Sheps}')\}$\cite[Eqn. (2.3)]{cover2006elements}, we conclude that
\begin{equation}
\label{eqn: rdfbound2}
    R_\eps(D)\geq \mathop{\lim \sup}\limits_{k\rightarrow\infty} \E \{\zkeps{k,\eps}'(\cdf{\Seps,\Sheps}')\}.
\end{equation}

Next, we consider $\mathop{\lim \sup}\limits_{k\rightarrow\infty} \E \{\zkeps{k_l,\eps}'(\cdf{\Seps,\Sheps}')\}$: Let $\zkeps{k_l,\eps}' \left(\cdf{\Seps,\Sheps}' \right) $ be a subsequence of $\E \left\{\zkeps{k,\eps}'(\cdf{\Seps,\Sheps}')\right\}$ with the indexes $k_1 < k_2 < \ldots$ such that its limit equals the limit superior. i.e., $ \mathop{\lim}\limits_{l \rightarrow \infty} \E \left\{\zkeps{k_l,\eps}' \left(\cdf{\Seps,\Sheps}' \right)  \right\} = \mathop{\lim \sup}\limits_{k \rightarrow \infty} \E \left\{\zkeps{k,\eps}' \left(\cdf{\Seps,\Sheps}' \right)  \right\}$. Since by Lemma \ref{lem:AsyncZk}, the sequence of non-negative \acp{rv} $\left\{\zkn{k_l}{n}' \left(\cdf{\Sn,\Shn}\opt \right)  \right\}_{n \in \mySet{N}}$ convergences in distribution
to $\zkeps{k_l,\eps}' \left(\cdf{\Seps,\Sheps}' \right)$ as $n \rightarrow \infty$ uniformly in $k \in \mySet{N}$, it follows from\myFtn{\cite[Thm. 3.5]{billingsley1999convergence} states that if $X_n$ are uniformly integrable and $X_n \ConvDist{n \rightarrow\infty} X$ then $\E\{X_n\} \mathop{\longrightarrow}\limits_{n\rightarrow\infty} \E\{X\}$.} \cite[Thm. 3.5]{billingsley1999convergence} that $\E \left\{\zkeps{k_l,\eps}' \left(\cdf{\Seps,\Sheps}' \right)  \right\} =  \mathop{\lim}\limits_{n \rightarrow \infty} \E \left\{\zkn{k_l}{n}' \left(\cdf{\Sn,\Shn}\opt \right)  \right\}$. Also, we define a family of distributions $\mySet{F}(D)$ such that ${\mySet{F}(D)=\{\cdf{S,\hat{S}}:\mathsf{D}\left(\cdf{S, \hat{S}}\right)\leq D\}}$. 
Consequently, Eq. \eqref{eqn: rdfbound2} can now be written as:
\begin{align}
\label{eqn:rdfbound3}
    R_\eps(D)\geq \mathop{\lim \sup}\limits_{k \rightarrow \infty} \E \left\{\zkeps{k,\eps}' \left(\cdf{\Seps,\Sheps}' \right)  \right\}
    &=\mathop{\lim}\limits_{l \rightarrow \infty}\mathop{\lim}\limits_{n \rightarrow \infty} \E \left\{\zkn{k_l}{n}' \left(\cdf{\Sn,\Shn}\opt \right)  \right\}  \notag
    \\
    &\stackrel{(a)}{=}\mathop{\lim}\limits_{n \rightarrow \infty}\mathop{\lim}\limits_{l \rightarrow \infty}\E \left\{\zkn{k_l}{n}' \left(\cdf{\Sn,\Shn}\opt \right)  \right\} \notag
    \\
    &\stackrel{(b)}{=} \mathop{\lim \sup}\limits_{n \rightarrow \infty}\mathop{\lim}\limits_{l \rightarrow \infty}\E \left\{\zkn{k_l}{n}' \left(\cdf{\Sn,\Shn}\opt \right)  \right\} \notag
    \\ 
   & \ge \mathop{\lim \sup}\limits_{n \rightarrow \infty}\mathop{\lim}\limits_{l \rightarrow \infty}  \mathop{\inf}\limits_{\cdf{S,\hat{S}} \in \mySet{F}(D)} \E \left\{\zkn{k_l}{n}' \left(\cdf{S,\hat{S}}\right)  \right\}\notag
   \\
	&\stackrel{(c)}{=}  \mathop{\lim \sup}\limits_{n \rightarrow \infty} \mathop{\lim}\limits_{l \rightarrow \infty}  \mathop{\inf}\limits_{ \cdf{S,\hat{S}} \in \mySet{F}(D)}\frac{1}{k_l} I\left(\hat{\myVec{S}}_{n}^{(k_l)} ; \myVec{S}_n^{(k_l)} \right),
\end{align}
	where $(a)$ follows since the convergence  $\zkn{k_l}{n}' \left(\cdf{\Sn,\Shn}\opt \right)\ConvDist{n\rightarrow \infty}\zkeps{k_l,\eps}' \left(\cdf{\Seps,\hat{S}}'\right)$ is uniform with respect to $k_l$, thus the limits are interchangeable \cite[Thm. 7.11]{rudin1976principles}\myFtn{Rudin: Thm. 7.11: Suppose $f_n \rightarrow f$ uniformly in a  set $E$ in a metric space. Let $x$ be a limit point of $E$.... $\mathop{\lim}\limits_{t \rightarrow x}\mathop{\lim}\limits_{n \rightarrow \infty} f_n(t)=\mathop{\lim}\limits_{n \rightarrow \infty}\mathop{\lim}\limits_{t \rightarrow x}f_n(t)$}; $(b)$ follows since the limit of the subsequence $\E \left\{\zkn{k_l}{n}' \left(\cdf{\Sn,\Shn}\opt \right)  \right\}$ exists in the index $n$, and is therefore equivalent to the limit superior, $\mathop{\lim \sup}\limits_{n \rightarrow \infty} \E \left\{\zkn{k_l}{n}' \left(\cdf{\Sn,\Shn}\opt \right)  \right\}$ \cite[Page 57]{rudin1976principles};  and $(c)$ holds since mutual information is the expected value of the mutual information density rate \cite[Eqn. (2.30)]{cover2006elements}. Finally, we recall that in the proof of Lemma \ref{lem:AsyncZk2} it was established that the backward channel for the \ac{rdf} at the distortion constraint $D$, defined in \eqref{eqn:testchannelrdf_synch}, is information stable, hence for such backward channels, we have from \cite[Thm. 1]{4215152} that the minimum rate is defined as $R_n(D)=\mathop{\lim}\limits_{k \rightarrow \infty}\mathop{\inf}\limits_{{ \cdf{S,\hat{S}} \in \mySet{F}(D)}} \frac{1}{k}I\left(\Shepsk ; \Snk\right)$ and the limit exists; Hence, $\mathop{\lim}\limits_{k \rightarrow \infty}\mathop{\inf}\limits_{{ \cdf{S,\hat{S}} \in \mySet{F}(D)}} \frac{1}{k}I\left(\Shepsk ; \Snk\right)=\mathop{\lim}\limits_{l \rightarrow \infty}\mathop{\inf}\limits_{{ \cdf{S,\hat{S}} \in \mySet{F}(D)}}\frac{1}{k_l} I\left(\hat{\myVec{S}}^{(k_l)} ; \myVec{S}_n^{(k_l)} \right)$ in the index $k$. Substituting this into equation \eqref{eqn:rdfbound3} yields the result:
	
	\begin{equation}
	    R_\eps(D)\ge \mathop{\lim \sup}\limits_{n\rightarrow\infty}R_n(D).
	\end{equation}
	This proves the lemma. 
	\end{proof}
Combining the lemmas \ref{lem:rdfUpBound} and \ref{lem:rdflowBound} proves that $R_\eps (D) = \mathop{\lim \sup}\limits_{n\rightarrow\infty} R_n (D)$ and the rate is achievable with Gaussian inputs, completing the proof of the theorem. 
\end{appendices}

\bibliographystyle{IEEEtran}
\bibliography{myreference.bib}



\end{document}